\newtheorem{theorem}{Theorem}[section]
\newtheorem{rem}{Remark}[section]
\newtheorem{prop}{Proposition}[section]
\newtheorem{ass}{Assumption}[section]
\providecommand{\algorithmname}{Algorithm}
\crefname{ass}{assumption}{assumptions}
\crefname{prop}{proposition}{propositions}
\crefname{lem}{lemma}{lemmas}
\newcounter{hypA}
\newenvironment{hypA}{\refstepcounter{hypA}\begin{itemize}
  \item[({\bf A\arabic{hypA}})]}{\end{itemize}}
\newcounter{hypB}
\newcounter{hypD}
\date{}
\newcommand{\dett}[1]{\text{det}\left[#1\right]}
\begin{document}

%+Title
\begin{center}

\begin{spacing}{1.5}
{ \Large  \textbf{A Lagged Particle Filter for Stable Filtering of certain High-Dimensional State-Space Models}}
\end{spacing}

\vspace{0.4cm}

BY HAMZA RUZAYQAT$^{1}$, AIMAD ER-RAIY$^{1}$, ALEXANDROS BESKOS$^{2}$, DAN CRISAN$^{3}$, AJAY JASRA$^{1}$, \&  NIKOLAS KANTAS$^{3}$  \vspace{0.2cm}

{\footnotesize $^{1}$Computer, Electrical and Mathematical Sciences and Engineering Division, King Abdullah University of Science and Technology, Thuwal, 23955-6900, KSA.}
{\footnotesize E-Mail:\,} \texttt{\emph{\footnotesize hamza.ruzayqat@kaust.edu.sa, aimad.erraiy@kaust.edu.sa, ajay.jasra@kaust.edu.sa}}\\
{\footnotesize $^{2}$Department of Statistical Science, University College London, London, WC1E 6BT, UK.}
{\footnotesize E-Mail:\,} \texttt{\emph{\footnotesize a.beskos@ucl.ac.uk}}\\
{\footnotesize $^{3}$Department of Mathematics, Imperial College London, London, SW7 2AZ, UK.}
{\footnotesize E-Mail:\,} \texttt{\emph{\footnotesize d.crisan@ic.ac.uk, n.kantas@ic.ac.uk}}\\

\end{center}

\begin{abstract}
We consider the problem of high-dimensional filtering of state-space models (SSMs) at discrete times. This problem is particularly challenging as analytical solutions are typically not available and 
many numerical approximation methods can have a cost that scales exponentially with the dimension of the hidden state. Inspired by lag-approximation methods for the smoothing problem \cite{kita,olsson}, 
we introduce a lagged approximation of the smoothing distribution that is necessarily biased. For certain classes of SSMs, particularly those that forget the initial condition exponentially fast in time, 
the bias of our approximation is shown to be uniformly controlled in the dimension and exponentially small in time. 
We develop a sequential Monte Carlo (SMC) method to recursively estimate expectations with respect to our biased filtering distributions.
Moreover, we prove for a class of SSMs that can contain
dependencies amongst coordinates that as the dimension $d\rightarrow\infty$ the cost to achieve a stable mean square error in estimation, for classes of expectations, is of 
$\mathcal{O}(Nd^2)$ per-unit time, where $N$ is the number of simulated samples in the SMC algorithm. Our methodology is implemented on several challenging high-dimensional examples including the conservative shallow-water model.
\\
\\
\noindent \textbf{Keywords}: Filtering, Sequential Monte Carlo, Lag Approximations, High-Dimensional Particle Filter.
\\
\noindent \textbf{Corresponding author}: Hamza Ruzayqat. E-mail:
\href{mailto:hamza.ruzayqat@kaust.edu.sa}{hamza.ruzayqat@kaust.edu.sa} 
\end{abstract}

\section{Introduction}

We are given two sequences of random variables $\{Y_n\}_{n\in\mathbb{N}}$, $\{X_n\}_{n\in\mathbb{N}_0}$, so that $Y_n\in\mathsf{Y}\subseteq\mathbb{R}^{d_y}$, $n\in\mathbb{N}$,  and $X_n\in\mathsf{X}\subseteq\mathbb{R}^d$, $n\in\mathbb{N}_0$. We 
endow $\mathsf{X}$ with 
an associated $\sigma$-field $\mathscr{X}$. 
Consider the state-space model (e.g.~\cite{cappe}), where for $n\in\mathbb{N}$, $(A,B)\subseteq\mathsf{X}\times\mathsf{Y}$:
\begin{align*}
\mathbb{P}(X_n\in A|\{y_p,x_p\}_{p\neq n},y_n) = \int_A f(x_{n-1},x_n) dx_n, \quad \mathbb{P}(Y_n\in B|\{y_p,x_p\}_{p\neq n},x_n) = \int_B g(x_n,y_n) dy_n.
\end{align*}
The initial condition $X_0=x_0\in\mathsf{X}$ is assumed given and fixed, $g(x,\cdot)$ is a positive probability density w.r.t.~the $\sigma$-finite measure $dy$ for each $x\in\mathsf{X}$ and $f(x,\cdot)$ is a positive probability density w.r.t.~the $\sigma$-finite measure $dx$ for each $x\in\mathsf{X}$.
For $n\in\mathbb{N}$ we define the smoothing density:
\begin{align}
\label{eq:smooth}
\pi_n(x_{1:n}) := \frac{\prod_{p=1}^n f(x_{p-1},x_p) g(x_p,y_p)}{\int_{\mathsf{X}^n} \prod_{p=1}^n f(x_{p-1},x_p) g(x_p,y_p) dx_{1:n}}, 
\end{align}
where $y_{1:n}:=(y_1,\ldots,y_n)$ are fixed and known observations.
Our objective is to recursively estimate the so-called filter for each $n\in\mathbb{N}$:
\begin{align*}
\pi_{(n)}(\varphi) := \int_{\mathsf{X}^n}\varphi(x_n)\pi_n(x_{1:n})dx_{1:n},
\end{align*}
where $\varphi:\mathsf{X}\rightarrow\mathbb{R}$ is assumed $\pi_n$-integrable.
This is the filtering problem and has several applications in statistics, applied mathematics and engineering; see for instance \cite{cappe}.

The filtering problem is notoriously challenging for a variety of reasons. The main one is that, with the exception of a small class of models, one cannot
compute the filter analytically. As a result, there is by now a vast literature on the numerical approximation of the filter; see for instance~\cite{cappe,crisan,FK}.
The class of algorithms that we focus upon in this article is based on sequential Monte Carlo (SMC) methods. These techniques generate a collection of $N\geq 1$ samples
in parallel and combine importance sampling and resampling to numerically approximate expectations w.r.t.~the filter. From a mathematical perspective, they are rather well-understood \cite{cappe,FK}, 
with many convergence results as $N$ grows. 

The focus of this article is to study SMC when the dimension of the hidden state, $d$, is large, for example of the order $10^3$ or larger. The high-dimensional filtering problem is often even more problematic than the ordinary filtering problem, i.e. when $d$ is moderate, due to the high costs of numerical implementation. As noted by several authors \cite{chatt,rebs,snyder_et_al}, the importance sampling method
that SMC is based upon can be hugely expensive. The main issue is that for high-dimensional problems of practical interest the proposal and target measure eventually become mutually singular as the 
dimension increases.
As a result, to counteract the weight degeneracy of importance sampling, one may need $N=\mathcal{O}(\kappa^d)$, for some $\kappa>1$ to obtain reasonable estimators. 
This exponential cost in the dimension can be prohibitive in practice and {\em standard} SMC algorithms that consist only of simple importance sampling and resampling recursions 
are not suitable for high-dimensional problems. There do exist more advanced SMC methods which have been successful \cite{beskos2,cotter,kantas,llopis,rebs}, 
but they are often custom designs, taking advantage of useful characteristics of specific problems. 
This article focusses upon further enhancing these type of methods and focuses on a specific class of State Space Models (SSMs),
as opposed to being a universal solution to the high-dimensional filtering problem. 

The starting point of this work is the series of the dimension stability results for SMC samplers \cite{delm:06b} that were obtained in \cite{beskos,beskos1}. These results deal with 
static\footnote{Note that in contrast the filtering problem is inherently dynamic.}, i.i.d.~target sampling problems (i.i.d. here refers to the state coordinates) and show
it is possible to obtain algorithms that scale polynomially in the dimension of the problem and enjoy some type of stability with dimension. 
The results of this type have been utilized in the approaches of \cite{cotter,llopis}, that used a tempering approach to stabilize the weights between two successive observations 
and inserted a SMC sampler in-between data updates of an ordinary 
SMC filtering algorithm (see also \cite{godsill}).
The observation in our article is that whilst empirically such approaches may work well when estimating $\pi_{(n)}$, they do not provide a fully satisfying solution,
because these methods can exhibit the well-known path degeneracy issue for SMC methods.
This is caused by the successive resampling steps and can be briefly described as a lack in diversity in the particles approximating $\pi_n(x_{1:n})$ 
for times $p<n - C N\log N$ with $C>0$ being some model specific constant; see \cite{path_storage,kantas1} for details.
Using tempering adds additional resampling steps and as a result tempering-based methods suitable for high dimensions will suffer from path degeneracy. 
One potential remedy would be to design updates of the entire path, but this is not practical for online algorithms (with fixed computational cost per time). 
However, when this latter strategy is not adopted, we do not believe that it is possible to prove that the algorithm is provably stable as the dimension grows (in fact, we would conjecture that it \emph{collapses})
and this has lead us to our current work.

%When sticking to estimation of $\pi_{(n)}$, path degeneracy is not usually an issue, 
%but in our high-dimensional setup it prevents extending any type of stability result of the type
%in \cite{beskos,beskos1}, with such an extension making up in fact one of the important contributions of the current paper.

The contribution of this work is, inspired by the lag-approximation methods for the smoothing problem \cite{doucet06fixedlag,kita,polson_practical,olsson}. We consider a lagged approximation of the smoothing distribution, $\pi_n(x_{1:n})$, 
that is necessarily biased. This approximation induces an independence between the last $L+1$ time points of the hidden states of the SSM and the remaining earlier times.
The basic premise from there is that this new, but biased target, can be numerically approximated with an algorithm that has a cost that scales polynomially with  the dimension. 
These approximations will induce a bias. However, for SSMs that forget their initial condition exponentially fast in time, 
this bias will be uniformly controlled in the dimension and exponentially small in time $L$. In particular, the contributions of this work are:
\begin{itemize}
\item{We propose an appropriate sequence of biased approximations of the smoothers and an SMC algorithm to numerically approximate the related expectations.}
\item{A proof that for SSMs that can contain dependencies amongst coordinates in the transition density and likelihood, as the dimension $d\rightarrow\infty$, the cost to achieve a stable mean square error in estimation 
for certain classes of expectations is of $\mathcal{O}(Nd^2)$ per-unit time.}
\item{Numerical implementation of the SMC algorithm on several challenging and high dimensional examples.}
\end{itemize}
The remaining article is structured as follows. In \autoref{sec:approach} we present our methodology. In \autoref{sec:theory} we present our assumptions and theoretical results.
In \autoref{sec:numerics} the performance of our method is demonstrated on several challenging examples. 
In the Appendix one can find technical results used in the proofs and a detailed description of the algorithm used in \autoref{sec:numerics}.

\paragraph*{Notation} 

Let $(\mathsf{X},\mathscr{X})$ be a measurable space.
$\mathscr{P}(\mathsf{X})$ is the collection of probability measures on $(\mathsf{X},\mathscr{X})$. For $(\mu,\nu)\in\mathscr{P}(\mathsf{X})\times \mathscr{P}(\mathsf{X})$ we write their total variation
distance as $\|\mu-\nu\|_{\mathrm{TV}}$. We use the standard operations 
$(\mu K)(dz):=\int_{\mathsf{X}}\mu(dx)K(x,dz)$, $\mu(\varphi) := \int_{\mathsf{X}}\varphi(x)\mu(dx)$, for $\mu\in \mathscr{P}(\mathsf{X})$, 
transition kernel $K:\mathsf{X}\times \mathscr{X} \to [0,1]$ and $\mu$-integrable $\varphi:\mathsf{X}\to \mathbb{R}$.
$\mathscr{C}(\mathsf{X})$ is the set of real-valued, continuous, measurable functions on $\mathsf{X}$. 
%$\mathscr{C}(\mathsf{X})$ is the collection of real-valued, continuous measurable functions on $\mathsf{X}$. 
For  bounded $\varphi:\mathsf{X}\to\mathbb{R}$, we set $\|\varphi\|_{\infty}=\sup_{x\in\mathsf{X}}|\varphi(x)|$. For a matrix or vector $A$, we denote by $A^T$ the transpose of $A$.

\section{A Lagged Particle Filter}
\label{sec:approach}
\subsection{Objective}

As has been highlighted in the introduction, the objective is to perform high-dimensional filtering, with a computational cost per-time step that is upper-bounded as the the time parameter (observation time) increases. The computational methods that we shall primarily focus upon are sequential Monte Carlo algorithms such as described in \cite{joh_smc_rev}. These latter methods are often constructed to target the joint smoothing distribution of all the hidden states up-to the given observation time, but as stated e.g.~in \cite{joh_smc_rev}, the approximation that is provided by these algorithms is often only useful to perform filtering rather than smoothing. As a result, we will proceed by first considering the smoothing distribution and a deterministic approximation thereof, but ultimately, we will be using numerical algorithms to estimate expectations associated to an induced deterministic approximation of the filter.

\subsection{The Sequence of Approximate Target Distributions}

Let $L\in\mathbb{N}$ be given and fixed throughout. We assume that one can construct a stochastic method that can approximate reasonably well expectations
w.r.t.~arbitrary probabilities on $\mathsf{X}^{L+1}$, but the method does not work well for higher-dimensional problems. 
By working well we mean that errors do not grow exponentially or worse as the dimension of the hidden state, $d$, increases. We will give more precise details and assumptions related to the SSM and 
SMC components later in \autoref{sec:theory}.
%We shall consider the sequential approximation of
%first for $k\in\{1,\dots,L\}$ of $\pi_k$. 

Given this idea, we introduce the sequence of targets approximating the correct smoothing distribution $\pi_{n}(x_{1:n})$ in \eqref{eq:smooth} defined as: %(which we shall approximate)
\begin{align}
\label{eq:target_hat}
&\widehat{\pi}_n(x_{1:n}) \propto 
\alpha_{n-L}(x_{1:n-L})\times \big(\mu_{n-L}(x_{n-L+1}) g(x_{n-L+1},y_{n-L+1})\big)\nonumber\\ &\qquad \qquad \qquad \times 
\Big(\prod_{p=n-L+2}^n f(x_{p-1},x_p) g(x_p,y_p)\Big),\quad n\in\{L+1,L+2,\dots\}.
\end{align}
Here, $\{\mu_{p}\}_{p\ge 1}$ is a sequence of user-specified probability densities on $\mathsf{X}$, giving rise to a corresponding sequence of functions $\{\alpha_{p}\}_{p\ge 1}$, with the latter started by setting 
$\alpha_1(x_1) = g(x_1,y_1) f(x_0,x_1)$, and then obeying the recursion:
\begin{align*}
\alpha_{n-L+1}(x_{1:n-L+1}) = \alpha_{n-L}(x_{1:n-L})\mu_{n-L}(x_{n-L+1})g(x_{n-L+1},y_{n-L+1}), \quad n\in\{L+1,L+2,\dots\}.
\end{align*}

The construction of the sequence of laws $\{\widehat{\pi}_n\}$ is guided by the following principles.
Given index $n\in\{L+1,L+2,\dots\}$, we are interested in the marginal $\widehat{\pi}_{(n)}(x_n)$ being a good approximation of the correct filtering law 
${\pi}_{(n)}(x_n)$. More generally, we focus on the properties of the $L$-dimensional marginal density denoted $\widehat{\pi}_{(n-L+1:n)}(x_{n-L+1:n})$.
Notice that by construction $\widehat{\pi}_n(x_{1:n})$ introduces independence amongst the  random variables $X_{1:n-L}$ and $X_{n-L+1:n}$.
We aim to produce good approximations of the $L$-dimensional marginal, i.e.~of:
\begin{align*}
\widehat{\pi}_{(n-L+1:n)}(x_{n-L+1:n}) &\propto 
\big(\mu_{n-L}(x_{n-L+1})g(x_{n-L+1},y_{n-L+1})\big)\\  &\qquad \qquad \qquad\times  \Big(\prod_{p=n-L+2}^n f(x_{p-1},x_p)g(x_p,y_p)\Big).
\end{align*}
Note that if  $\mu_{n-L}(x_{n-L+1})$ were to be the so-called law $\pi_{(n-L+1|n-L)}(x_{n-L+1})$, that is:
\begin{align}
\label{eq:correct}
\mu_{n-L}(x_{n-L+1}) = \int_{\mathsf{X}}\pi_{(n-L)}(x_{n-L})f(x_{n-L},x_{n-L+1})dx_{n-L} =: \pi_{(n-L+1|n-L)}(x_{n-L+1}),
\end{align}
then $\widehat{\pi}_{(n-L+1:n)}(x_{n-L+1:n}) = \pi_{(n-L+1:n)}(x_{n-L+1:n})$, thus one would be able to perform exact filtering.
However, in our context \eqref{eq:correct} is not of practical use, as even when $n=L+1$, a standard approach for approximating $\int_{\mathsf{X}}\pi_{(n-L)}(x_{n-L})f(x_{n-L},x_{n-L+1})dx_{n-L}$
would typically have a quadratic cost in the number of particles (see \cite{klaas_fast_N2} for a more efficient implementation) and, most importantly, can induce errors growing exponentially in $d$ 
similar to ones seen in \cite{snyder_et_al}.
By choosing a different $\mu_{n-L}(x_{n-L+1})$ we aim for an SMC method that approximates uniformly well in the dimension $d$ expectations of functionals $\varphi(x_{n-L+1:n})$ at the expense of
introducing a bias.
The choice of $\{\alpha_{p}\}_{p\ge 1}$ is motivated directly from the objective of a standard sequential importance sampling scheme when moving from 
$\hat{\pi}_{n-1}$ to $\hat{\pi}_{n}$, where the incremental importance weights will only depend on $x_{n-L:n}$, thereby forgeting earlier positions of $x_{1:n}$. 
%
%\begin{enumerate}
%\item{Our approximation of $\int_{\mathsf{X}}f(x_{k-L},x_{k-L+1})\pi_{(k-L)}(x_{k-L})dx_{k-L}$ could be exponentially `bad' in $d$}
%\item{The Monte Carlo approximation is normally a quadratic cost in the particles - this can be avoided using additional auxiliary techniques.}
%\end{enumerate}
%

%For notational convenience later on, set $\widehat{\pi}_L=\pi_L$.

%The idea of $\widehat{\pi}_k(x_{1:k})$ is to introduce an independence in the target, of the sequence of random variables $x_{1:k-L}$ and $x_{k-L+1:k}$.
%We can only approximate the marginal
%%
%\begin{align*}
% \Big(\prod_{p=k-L+2}^k g(x_p,y_p)f(x_{p-1},x_p)\Big)g(x_{k-L+1},y_{k-L+1}) \mu_{k-L}(x_{k-L+1}) 
%\end{align*}
%%
%so we introduce this bias to ensure an SMC method can approximate expectations of functionals $\varphi(x_{k-L+1:k})$ uniformly well in the dimension.

Note that under the choice $\mu_{n-L}(x_{n-L+1})=f(x_{n-L},x_{n-L+1})$, the independence described above will 
typically not hold any more as one reverts back to the original smoothing problem, for which potential Monte-Carlo approximations will degenerate (in general) as the dimension increases.

\subsection{Algorithm}
%For notational convenience later on, we set $\widehat{\pi}_L=\pi_L$.
We  now describe the algorithm that we implement and analyse subsequently. We begin with a generic SMC sampler in
\autoref{alg:smc_samp}. This algorithm, developed in \cite{delm:06b}, allows for the approximation of a sequence of densities of interest, all defined on a 
\emph{common} space.

Our proposed lagged particle filter is detailed in \autoref{alg:lag_filt}. To shorten the description, when $n\in\{1,\dots,L\}$ we set $\widehat{\pi}_n(x_{1:n})=\pi_n(x_{1:n})$.
There are several important remarks at this point. 
In terms of the weight expressions $W_{k+1}^{(i)}$ in \autoref{alg:smc_samp}, when they appear in \autoref{alg:lag_filt} (at iteration $n> L$),
one would have:
\begin{align}
W_{k+1}^{(i)} &= W_{k}^{(i)} \left(\frac{\widehat{\pi}_{n}(x_{1:n}^{(i)})}{\widehat{\pi}_{n-1}(x_{1:n-1}^{(i)})f(x_{n-1}^{(i)},x_{n}^{(i)})}\right)^{\phi_{k+1}-\phi_{k}}  \nonumber \\[0.2cm] 
&=W_{k}^{(i)}\left(\frac{\mu_{n-L}(x_{n-L+1}^{(i)}) g(x_{n}^{(i)},y_{n})}{f(x_{n-L}^{(i)},x_{n-L+1}^{(i)})}\right)^{\phi_{k+1}-\phi_{k}}, \quad \phi_{k+1}-\phi_{k} = \tfrac{1}{d}.
\label{eq:weight_smc} 
\end{align}
Notice, importantly, that this expression does not depend upon $x_{1:n-L-1}^{(i)}$. The choice of spacing the temperatures at order $\mathcal{O}(d^{-1})$ apart coincides with the choice in \cite{beskos}
and will prove critical in the sequel.
A key ingredient of \autoref{alg:lag_filt} is the specification and effectiveness of the Markov kernels $K_{2,n},\dots,K_{d+1,n}$. By construction
the variables $X_{n-L+1:n}$ and $X_{1:n-L}$ (and indeed $X_{n-L:n}$ and $X_{1:n-L-1}$) are independent under the approximate target $\widehat{\pi}_{n}(x_{1:n})$. 
Since the weight function \eqref{eq:weight_smc} depends on $X_{n-L:n}$ only, this means one needs to only update  $X_{n-L:n}$ in the Markov kernel (i.e. the last $(L+1)$
hidden states across time).
\begin{center}
\captionsetup[algorithm]{style=algori}
\captionof{algorithm}{Sequential Monte Carlo Sampler}
\label{alg:smc_samp}
\begin{enumerate}
\item{Input: 
\begin{itemize}
\item{Target density $\kappa(u)$ on state-space $\mathsf{U}$, dominating $\sigma$-finite measure $du$.}
\item{Initial density $\nu(u)$ on state-space $\mathsf{U}$, dominating $\sigma$-finite measure $du$.} 
\item{Annealing parameters $0=\phi_1<\phi_2<\cdots<\phi_p=1$.} 
\item{Sequence of Markov kernels $K_2,\dots,K_p$, such that $K_n$ has invariant distribution proportional to $\kappa(u)^{\phi_n}\nu(u)^{1-\phi_n}du$, $n\in\{2,\dots,p\}$.}
\item{Number of samples $N\in\mathbb{N}$ and resampling threshold $N^*\in[1,N]$.}
\item{Samples $U^{(1)},\dots,U^{(N)}$ that approximate $\nu$.}
\end{itemize}}
\item{Initialize: Set time $k=1$ and $W_k^{(i)}=1$, $u_1^{(i)}=U^{(i)}$ for $i\in\{1,\dots,N\}$.}
\item{Iterate: For $i\in\{1,\dots,N\}$ set:
\begin{align*}
W_{k+1}^{(i)} = W_k^{(i)} \left(\frac{\kappa(u_k^{(i)})}{\nu(u_k^{(i)})}\right)^{\phi_{k+1}-\phi_k}.
\end{align*}
Compute the Effective Sample Size (ESS):
$$
ESS_{k+1} = \frac{\left(\sum_{i=1}^NW_{k+1}^{(i)}\right)^2}{\sum_{i=1}^N(W_{k+1}^{(i)})^2}.
$$
If $ESS_{k+1}<N^*$ resample the particles and set $W_{k+1}^{(i)}=1$ for $i\in\{1,\dots,N\}$, else do nothing.
Then, sample $U_{k+1}^{(i)}|u_k^{(i)}$ using $K_{k+1}(u_k^{(i)},\cdot)$ and set 
$u_{k+1}^{(i)}=U_{k+1}^{(i)}$, $i\in\{1,\dots,N\}$.
 Increment $k=k+1$, if $k=p+1$ go to the output Step 4.~otherwise go to the start of iterate Step 3..}
\item{Output: Samples $U_p^1,\dots,U_p^N$ and corresponding (unnormalised) weights $W_p^1,\dots,W_p^N$.}
\end{enumerate}
\vspace{-0.2cm}
\hrulefill
\vspace{0.3cm}
\end{center}
%
%%% End of algorithm
%due to the dependencies on the path of variable $X_{1:n}^{(i)}$ in the weight function \eqref{eq:weight_smc}, 
%one needs only update  $X_{n-L:n}$ in the Markov kernel 
This is typically implemented via a suitable Markov Chain Monte Carlo (MCMC) step such as Metropolis-Hastings or Gibbs sampling. Moreover, as
the Markov kernel can be constructed so that it \emph{does not depend} upon $X_{1:n-L-1}$, the computational cost per iteration of \autoref{alg:lag_filt}
is fixed w.r.t.~the time index $n$.%, but, ultimately, the algorithm can only be used primarily for filtering. 
%and \textcolor{red}{normalizing constant estimation, as the instances $X_{1:n-L-1}$ will eventually (in terms of the time
%parameter of the state-space model) be equal across all particles. In terms of a normalizing constant estimator, this can be defined in a standard way as
%%
%\begin{align*}
%\prod_{p=1}^n \frac{1}{N}\sum_{i=1}^N W^{(i)}(p).
%\end{align*}
%%
%}
% %Once $n> L$, this estimate is, in general,  biased w.r.t.~the original state-space model.
% Finally we note that since the weight function \eqref{eq:weight_smc} depends on $x^{(i)}_{n-L:n}$, this means one must update the last $(L+1)$ co-ordinates and not just the last $L$
% hidden states across time. 
% helps to explain that . If one does not update the last $L+1$ there will be a type of degeneracy that cannot be compensated for. 
\begin{center}
\captionsetup[algorithm]{style=algori}
\captionof{algorithm}{A Lagged Particle Filter Algorithm for High-Dimensional Filtering}
\label{alg:lag_filt}
\begin{enumerate}
\item{Initialization: Sample $X_{1}^{(i)}$ i.i.d.~from $f(x_{0},x_1)$. Run the SMC sampler in \autoref{alg:smc_samp} with:
\begin{itemize}
\item{Target density $\pi_1(x_{1})$.} 
\item{Initial density $f(x_{0},x_1)$.} 
\item{Annealing parameters $\phi_k=(k-1)/d$, $p=d+1$, $k\in\{1,2,\ldots,d+1\}$.} 
\item{Sequence of Markov kernels 
$K_{2,1},\dots,K_{d+1,1}$, such that $K_{k,1}$ has invariant distribution proportional to: 
\begin{align}
\label{eq:all_target_1}
\pi_1(x_{1})^{\phi_k}\{f(x_{0},x_1)\}^{1-\phi_k}dx_{1}, \quad k\in\{2,\dots,d+1\}.
\end{align}
}
\item{$N\in\mathbb{N}$, and resampling threshold $N^*=1$ (i.e.~no resampling).}
\item{Initial samples $X_{1}^{(i)}$, $i\in\{1,\dots,N\}$.}
\end{itemize}
Let $X_{1}^{(i)}(1)$, $W^{(i)}(1)$, $i\in\{1,\dots,N\}$, be the samples and weights in the output of \autoref{alg:smc_samp}. Set $n=2$ and go to Step 2.~below.
}
%\item{Iteration up-to time $L$: Resample $X_{1:n-1}^{(i)}(n-1)$ using the weights $W^{(i)}(n-1)$, denoting the resampled samples $\check{X}_{1:n-1}^{(i)}(n-1)$. 
%Sample a new co-ordinate $X_{n}^{(i)}|\check{x}_{n-1}^{(i)}$ using the transition $f(\check{x}_{n-1}^{(i)},\cdot)$.
%Run an the SMC sampler in \autoref{alg:smc_samp} with:
%\begin{itemize}
%\item{Target density $\pi_{n}(x_{1:n})$.} 
%\item{Initial density $\pi_{n-1}(x_{1:n-1})f(x_{n-1},x_{n})$.} 
%\item{Annealing parameters $\phi_k=(k-1)/d$, $p=d+1$.} 
%\item{Markov kernels 
%$K_{2,n},\dots,K_{d+1,n}$, such that $K_{k,n}$ has invariant measure proportional to $\pi_{n}(x_{1:n})^{\phi_k}\{
%\pi_{n-1}(x_{1:n-1})f(x_{n-1},x_{n})\}^{1-\phi_k}dx_{1:n}$, $k\in\{2,\dots,d+1\}$}
%\item{$N\in\mathbb{N}$, $N^*=1$.}
%\item{Initial samples $\check{X}_{1:n-1}^{(i)}(n-1),X_{n}^{(i)}$.}
%\end{itemize}
%Denote the outputted samples and weights as $X_{1:n}^{(i)}(n)$, $W^{(i)}(n)$, $i\in\{1,\dots,N\}$. Set $n=n+1$ and if $L=n+1$ go to step 3.~otherwise return to the start of step 2..}
\item{Iteration: Resample $X_{1:n-1}^{(i)}(n-1)$ with weights $W^{(i)}(n-1)$, and denote the obtained particles $\check{X}_{1:n-1}^{(i)}(n-1)$, $i\in\{1,\dots,N\}$. 
Sample new co-ordinate $X_{n}^{(i)}|\check{X}_{n-1}^{(i)}=\check{x}_{n-1}^{(i)}$ using the transition $f(\check{x}_{n-1}^{(i)},\cdot)$, $i\in\{1,\dots,N\}$.
Run the SMC sampler in \autoref{alg:smc_samp} with:
\begin{itemize}
\item{Target density $\widehat{\pi}_{n}(x_{1:n})$.} 
\item{Initial density $\widehat{\pi}_{n-1}(x_{1:n-1})f(x_{n-1},x_{n})$.} 
\item{Annealing parameters $\phi_k=(k-1)/d$, $p=d+1$, $k\in\{1,2,\ldots,d+1\}$.} 
\item{Sequence of Markov kernels 
$K_{2,n},\dots,K_{d+1,n}$, such that $K_{k,n}$ has invariant distribution proportional to: 
\begin{align}
\label{eq:all_target}
\widehat{\pi}_{n}(x_{1:n})^{\phi_k}\{
\widehat{\pi}_{n-1}(x_{1:n-1})f(x_{n-1},x_{n})\}^{1-\phi_k}dx_{1:n},\quad  k\in\{2,\dots,d+1\}.
\end{align}
}
\item{$N\in\mathbb{N}$, $N^*=1$.}
\item{Initial samples $X_{1:n}^{(i)}:=\big(\check{X}_{1:n-1}^{(i)}(n-1),X_{n}^{(i)}\big)$, $i\in\{1,\dots,N\}$.}
\end{itemize}
Let $X_{1:n}^{(i)}(n)$, $W^{(i)}(n)$, $i\in\{1,\dots,N\}$, be the samples, weights in the output of \autoref{alg:smc_samp}. Set $n=n+1$ and return to the start of the iterate at Step 2.~above.}
\end{enumerate}
\vspace{-0.3cm}
\hrulefill
\vspace{0.3cm}
\end{center}
%
%end of algorithm

In the next section we will prove that, under certain conditions, \autoref{alg:lag_filt} can estimate fixed dimensional marginals with a cost
that is polynomial in dimension. This includes being able to handle the bias that has been introduced by our approximation. Intrinsic to our proof (that is to a large extent
contained in the works of \cite{beskos,beskos1}) is the fact that for large~$d$ (and under certain conditions) the outcome variable $X_{n-L:n}$ resulting from the application of $\mathcal{O}(d)$ tempering steps, will be approximately sampled from the marginal law $\widehat{\pi}_{(n-L:n)}$.
If one does not use the approximation $\{\widehat{\pi}_{n}\}$ and instead works directly with $\{\pi_{n}\}$ then the only way to achieve such an ergodicity would be to
\emph{update the entire path of samples from time 1 to $n$}. This will lead to an algorithm that is not recursive or online. %Hence the approaches in e.g.~\cite{llopis} whilst effective in their applications
%cannot be used in full generality as ultimately, they would collapse in the dimension of the target, unless $N$ grows exponentially in $d$.
%\subsection{Moving Particles Between $\widehat{\pi}_k$ and $\widehat{\pi}_{k+1}$ when $k\in\{L+1,L+2,\dots\}$}
%
%The ratio between $\widehat{\pi}_k(x_{1:k})f(x_{k},x_{k+1})$ and the biased target $\widehat{\pi}_{k+1}(x_{1:k+1})$ is, up to a constant
%$$
%\frac{g(x_{k+1},y_{k+1})\mu_{k+1-L}(x_{k+1-L+1})}{f(x_{k-L+1},x_{k-L+2})}.
%$$
%In general, one expects this weight will degenerate with $d$. However, if one defines
%$$
%\widetilde{\pi}_{k+1}^s(x_{1:k+1}) \propto \Big(\widehat{\pi}_k(x_{1:k})f(x_{k},x_{k+1})\Big)^{1-\phi_s}\widehat{\pi}_{k+1}(x_{1:k+1})^{\phi_s}
%$$
%where $0=\phi_1<\cdots<\phi_p=1$ then one can sample from these intermediate densities using SMC samplers and using MCMC kernels that only
%operate on $\mathsf{X}^{L+1}$. Due to the induced independence in the target, one expects that the cost is polynomial in $d$ only as the ratio of the targets is
%$$
%\Big(\frac{g(x_{k+1},y_{k+1})\mu_{k+1-L}(x_{k+1-L+1})}{f(x_{k-L+1},x_{k-L+2})}\Big)^{\phi_{s+1}-\phi_s}
%$$
%and each of the states are $x_{k-L+1:k+1}$ are updated; so with a sufficient spacing $\phi_{s+1}-\phi_s=\mathcal{O}(d^{-1})$ it should work.

\section{Theoretical Results}\label{sec:theory}

\subsection{Non I.I.D.~Targets}

To avoid technicalities we henceforth assume that $\mathsf{X}=E^{d}$ for a compact set $E\subset \mathbb{R}$.
We follow closely the approach in \cite{beskos,beskos1}, developed therein for i.i.d.~target distributions.
One can generalize the aforementioned analysis, to the case of models with the following structure.
\begin{ass}
Throughout, $m\in\mathbb{N}_0$ is fixed and $d$ is such that $d\ge m$.
\begin{itemize} 
\label{ass:model}
\item[i)] There exist functions $\overline{g}:E\times\mathsf{Y}\rightarrow\mathbb{R}$ and $\tilde{g}:E^m\times\mathsf{Y}\rightarrow\mathbb{R}$, such that for any $(x,y)\in E^{d}\times\mathsf{Y}$:
$$
g(x,y) = \exp\Big\{\tilde{g}(x^{1:m},y) + \sum_{j=1}^d \overline{g}(x^j,y)\Big\}.
$$

\item[ii)] There exist functions $\overline{f}:E\times E\rightarrow\mathbb{R}$ and $\tilde{f}:E^m\times E^m\rightarrow\mathbb{R}$, such that for any $(x,z)\in E^{2d}$:
$$
f(x,z) = \exp\Big\{\tilde{f}(x^{1:m},z^{1:m}) +\sum_{j=1}^d \overline{f}(x^j,z^j)\Big\}.
$$
\item[iii)] For each $k\in \mathbb{N}$, there exist functions $\overline{\mu}_k:E\rightarrow\mathbb{R}$, $\tilde{\mu}_k:E^m\rightarrow\mathbb{R}$,
 such that for any $x\in E^{d}$:
$$
\mu_k(x) = \exp\Big\{\tilde{\mu}_k(x^{1:m}) + \sum_{j=1}^d \overline{\mu}_k(x^j)\Big\}.
$$
\end{itemize}
%
%
%Let $\mathsf{X}=\overline{X}^d$.
%
\end{ass}
%
%
%
%\subsubsection{Factorized Sequence of Targets}
%
\noindent Note we are using here superscripts to index coordinates of a $d$-dimensional vector $x$ and similar to before $x^{p:m}:=(x^p,\ldots,x^m)$, for $m>p$.
Given the model structure in \autoref{ass:model},
the sequence of targets defined via \eqref{eq:all_target_1}-\eqref{eq:all_target}, permits for a factorization 
in terms of co-ordinates $x^{1:m}, x^{m+1}, \ldots, x^{d}$. We will re-express the sequence of targets in a way that exploits this. 

\begin{rem}
The structure in \autoref{ass:model} also implies a factorization for the true target $\pi_n$ as
\begin{align*}
\pi_n(x_{1:n}) 
&\propto \exp\Bigg\{\sum_{j=1}^n \Bigg[ \tilde{f}(x_{j-1}^{1:m},x_j^{1:m})+ \tilde{g}(x_j^{1:m},y_j^{1:m}) + \sum_{k=1}^m \Big(\overline{f}(x_{j-1}^k, x_j^k) + \overline{g}(x_j^k, y_j^k) \Big)\Bigg]\Bigg\} \times\\
&\qquad \qquad \prod_{k=m+1}^d \exp\Bigg\{ \sum_{j=1}^n \Big(\overline{f}(x_{j-1}^k, x_j^k) + \overline{g}(x_j^k, y_j^k) \Big) \Bigg\}.
\end{align*}
\end{rem}

To shorten expressions, we will write for $(x^{1:m},y,x,z^{1:m},z,k)\in E^m\times\mathsf{Y}\times E\times E^m\times E\times\mathbb{N}$:
\begin{align*}
g^{(m)}(x^{1:m},y) &:=  \exp\Big\{\tilde{g}(x^{1:m},y) + \sum_{j=1}^m \overline{g}(x^j,y)\Big\};\\
g^{e}(x,y) &:=  \exp\big\{\overline{g}(x,y)\big\};\\
f^{(m)}(x^{1:m},z^{1:m}) &:=  \exp\Big\{\tilde{f}(x^{1:m},z^{1:m}) +\sum_{j=1}^m \overline{f}(x^j,z^j)\Big\}; \\
f^{e}(x,z) &:=  \exp\big\{\overline{f}(x,z)\big\};\\
\mu_k^{(m)}(x^{1:m}) &:=  \exp\Big\{\tilde{\mu}_k(x^{1:m}) +\sum_{j=1}^m \overline{\mu}_k(x^j)\Big\}; \\
\mu_k^{e}(x) &:=  \exp\big\{\overline{\mu}_k(x)\big\}.
\end{align*}
\begin{rem}
\label{rem:convention}
To avoid repeatitive expressions when separating the initial phase $n\in\{1,2,\dots,L\}$ from subsequent steps $n\in\{L+1,L+2\dots\}$, 
we adopt the convention that distributions when $n<0$ are to be treated as Dirac measures on the fixed position $x_0$ at time $n=0$ (or the subset of co-ordinates of $x_0$ implied by the context).
\end{rem}
%
%We define, for $(k,n)\in\{2,\dots,d+1\}\times\{1,\dots,L\}$, the laws $(\widetilde{\pi}_{k,n}^{(m)},\widetilde{\pi}_{k,n})\in\mathscr{P}(\mathbb{R}^{mn})\times\mathscr{P}(\mathbb{R}^{n})$ as 
%%
%\begin{align*}
%\widetilde{\pi}_{k,n}^{(m)}(dx_{1:n}^{1:m}) & \propto   g^{(m)}(x_n^{1:m},y_n)^{\phi_k}
%f^{(m)}(x_{n-1}^{1:m},x_n^{1:m})%\times \\ & &
%\Big(\prod_{p=1}^{n-1}f^{(m)}(x_{p-1}^{1:m},x_p^{1:m})g^{(m)}(x_p^{1:m},y_n)
%\Big)dx_{1:n}^{1:m},\\
%\widetilde{\pi}_{k,n}(dx_{1:n}^j) & \propto 
%g^e(x_n^j,y_n)^{\phi_k}f^e(x_{n-1}^j,x_n^j)
%\Big(\prod_{p=1}^{n-1}f^e(x_{p-1}^j,x_p^j)g^e(x_p^j,y_p)\Big)
%dx_{1:n}^j,
%\end{align*}
%
We consider the sequence of tempered approximate target distributions defined via \eqref{eq:all_target_1} for $n=1$ and via \eqref{eq:all_target} for general $n>1$, given the model factorisation introduced by \autoref{ass:model}.
 After simple calculations, we define for $(k,n)\in\{2,\dots,d+1\}\times\mathbb{N}$ and for $j\in\{m+1,\ldots, d\}$ the distributions 
$(\widetilde{\pi}_{k,n}^{(m)},\widetilde{\pi}_{k,n})\in\mathscr{P}(E^{m(L+1)})\times\mathscr{P}(E^{L+1})$ as:
\begin{align}
\widetilde{\pi}_{k,n}^{(m)}(dx_{n-L:n}^{1:m}) 
&\propto \Big(\frac{\mu_{n-L}^{(m)}(x_{n-L+1}^{1:m})g^{(m)}(x_n^{1:m},y_n)}{f^{(m)}(x_{n-L}^{1:m},x_{n-L+1}^{1:m})}\Big)^{\phi_k} \mu^{(m)}_{n-L-1}(x_{n-L}^{1:m})g^{(m)}(x_{n-L}^{1:m},y_{n-L})\nonumber \\ &\qquad \quad \times \Big(\prod_{p=n-L+1}^{n-1}f^{(m)}(x_{p-1}^{1:m},x_p^{1:m})g^{(m)}(x_p^{1:m},y_p)\Big) f^{(m)}(x_{n-1}^{1:m},x_n^{1:m}),\label{eq:pi1} \\[0.4cm]
\widetilde{\pi}_{k,n}(dx_{n-L:n}^{j}) 
&\propto \Big(\frac{\mu_{n-L}^{e}(x_{n-L+1}^j)g^{e}(x_n^{j},y_n)}{f^{e}(x_{n-L}^{j},x_{n-L+1}^{j})}\Big)^{\phi_k} \mu^{e}_{n-L-1}(x_{n-L}^{j})g^{e}(x_{n-L}^{j},y_{n-L})\nonumber\\ &\qquad \quad \times \Big(\prod_{p=n-L+1}^{n-1}f^{e}(x_{p-1}^{j},x_p^j)g^{e}(x_p^j,y_p)\Big) f^{e}(x_{n-1}^j,x_n^j).
\label{eq:pi2}
\end{align}
Given the model structure imposed by \autoref{ass:model},
the mutation kernels used within the algorithm will be adjusted accordingly.
That is,  for $(k,n)\in\{2,\dots,d+1\}\times\mathbb{N}$,
it is reasonable to select  $K_{k,n}:E^{dn}\times\mathcal{B}(E^{dn})\rightarrow[0,1]$
defined as: 
\begin{align*}
K_{k,n}(x_{1:n},dz_{1:n}) = \delta_{\{x_{1:n-L-1}\}}(dz_{1:n-L-1})\times \overline{K}_{k,n}^{(m)}(x_{n-L:n}^{1:m},dz_{n-L:n}^{1:m})\prod_{j=m+1}^d \overline{K}_{k,n}(x_{n-L:n}^j,dz_{n-L:n}^j),
\end{align*}
with
$\overline{K}_{k,n}^{(m)}:E^{m(L+1)}\times\mathcal{B}(E^{m(L+1)})\rightarrow[0,1]$, 
$\overline{K}_{k,n}:E^{(L+1)}\times\mathcal{B}(E^{(L+1)})\rightarrow[0,1]$ 
defined so that they preserve $ \widetilde{\pi}_{k,n}^{(m)}$ and $ \widetilde{\pi}_{k,n}$ respectively, i.e.:
\begin{align*}
 \widetilde{\pi}_{k,n}^{(m)}\overline{K}_{k,n}^{(m)} =  \widetilde{\pi}_{k,n}^{(m)}, \quad 
 \widetilde{\pi}_{k,n} \overline{K}_{k,n} = \widetilde{\pi}_{k,n}.
\end{align*}
%
%
%
%\subsubsection{Importance Weights}
%
%
We denote by $x_{k,p}^{(i),j}(n)\in E$  the $i^{\textrm{th}}$ particle ($i\in\{1,\dots,N\}$), in the $j^{\textrm{th}}$ dimension
($j\in\{1,\dots,d\}$),  $k\in\{1,\dots,d\}$ is the SMC sampler time, 
$n$ is the state-space model time and $p$ an instance across times $\{1,\ldots, n\}$.
% 
% ($n\in\mathbb{N}$), $k\in\{1,\dots,d\}$ is the SMC sampler time and the bracket $(n)$ corresponds
%to the counter of Steps 1 \& 2 in \autoref{alg:lag_filt}. 
%Note that the first subscript of $x$ cannot exceed the argument in the bracket.
%
Under the model structure determined in \autoref{ass:model}, upon recalling the expression for the weights in \eqref{eq:weight_smc}, one has for $n\in\mathbb{N}$:
%
%\begin{align*}
%W^{(i)}(n) = \exp\left\{\frac{1}{d}\sum_{k=1}^d \tilde{g}(x_{n,k}^{i,1:m}(n),y_{n})+\frac{1}{d}\sum_{j=1}^d\sum_{k=1}^d\overline{g}(x_{n,k}^{i,j}(n),y_{n})\right\}.
%\end{align*}
%%
%For $n> L$, upon recalling the expression for the weights in \eqref{eq:weight_smc}, one has that  (up-to a constant)
%
\begin{align*}
&W^{(i)}(n) = \exp\Bigg\{\frac{1}{d}\sum_{k=1}^d \Bigl( \tilde{g}(x_{k,n}^{(i),1:m}(n),y_{n})
+\tilde{\mu}_{n-L}(x_{k,n-L+1}^{(i),1:m}(n))-\tilde{f}(x_{k,n-L}^{(i),1:m}(n),x_{k,n-L+1}^{(i),1:m}(n))\Bigr)  \\[0.2cm]
&\quad \quad + \frac{1}{d}\sum_{j=1}^d\sum_{k=1}^d\Bigl(\overline{g}(x_{k,n}^{(i),j}(n),y_{n})+\overline{\mu}_{n-L}(x_{k,n-L+1}^{(i),j}(n))-\overline{f}(x_{k,n-L}^{(i),j}(n),x_{k,n-L+1}^{(i),j}(n))\Bigr)\Bigg\}.
\end{align*}
In agreement with the convention introduced in \autoref{rem:convention}, 
mappings that involve negative time subscripts should be treated as zeros in the above expression for $W^i(n)$.   
%$\overline{K}_{k,n}^{(m)}:{\mathbb{R}}^{mn}\times\mathcal{B}(\mathbb{R}^{mn})\rightarrow[0,1]$, $\overline{K}_{k,n}:\mathbb{R}^n\times\mathcal{B}(\mathbb{R}^{n})\rightarrow[0,1]$, %$\mathsf{X}=\overline{\mathsf{X}}^d$
%such that
%%
%\begin{align*}
% \widetilde{\pi}_{k,n}^{(m)}\overline{K}_{k,n}^{(m)}= \widetilde{\pi}_{k,n}^{(m)},\quad  
%\widetilde{\pi}_{k,n}\overline{K}_{k,n} =\widetilde{\pi}_{k,n}.
%\end{align*}
%%
%
%\begin{align*}
%\int_{\overline{\mathsf{X}}^{mn}} \widetilde{\pi}_{k,n}^{(m)}(dx_{1:n}^{1:m}) \overline{K}_{k,n}^{(m)}(x_{1:n}^{1:m},dz_{1:n}^{1:m}) & =  \widetilde{\pi}_{k,n}^{(m)}(dz_{1:n}^{1:m}), \\[0.2cm]
%\int_{\overline{\mathsf{X}}^n} \widetilde{\pi}_{k,n}(dx_{1:n}^j) \overline{K}_{k,n}(x_{1:n}^j,dz_{1:n}^j) & =  \widetilde{\pi}_{k,n}(dz_{1:n}^j)
%\end{align*}
%
%Here,  $(\widetilde{\pi}_{k,n}^{(m)},\widetilde{\pi}_{k,n})\in\mathscr{P}(\mathbb{R}^{mn})\times\mathscr{P}(\mathbb{R}^{n})$, $j\in\{m+1,\dots\}$, and
%%
%\begin{align*}
%\widetilde{\pi}_{k,n}^{(m)}(dx_{1:n}^{1:m}) & \propto   g^{(m)}(x_n^{1:m},y_n)^{\frac{k-1}{d}}
%f^{(m)}(x_{n-1}^{1:m},x_n^{1:m})%\times \\ & &
%\Big(\prod_{p=1}^{n-1}f^{(m)}(x_{p-1}^{1:m},x_p^{1:m})g^{(m)}(x_p^{1:m},y_n)
%\Big)dx_{1:n}^{1:m},\\
%\widetilde{\pi}_{k,n}(dx_{1:n}^j) & \propto 
%g^e(x_n^j,y_n)^{\frac{k-1}{d}}f^e(x_{n-1}^j,x_n^j)
%\Big(\prod_{p=1}^{n-1}f^e(x_{p-1}^j,x_p^j)g^e(x_p^j,y_p)\Big)
%dx_{1:n}^j.
%\end{align*}
%

%
%where 

\subsection{Error Analysis for Lagged Particle Filter}

To give our theoretical results we introduce some notation following from \cite{beskos}. 
For simplicity, we assume that $x_0^j$ is the same for each $j$.
%For $n\in\{1,\dots,L\}$, $x\in\mathbb{R}$ set
%%
%%
%\begin{align*}
%\psi_n(x) := \overline{g}(x,y_n).
%\end{align*}
%
%
%For $n\in\{1,\dots,L\}$, extend the definitions of $\widetilde{\pi}_{k,n}^{(m)}$ and $\widetilde{\pi}_{k,n}$ to a continuous parameter $s\in[0,1]$:
%%
%\begin{align*}
%\widetilde{\pi}_{s,n}^{(m)}(dx_{1:n}^{1:m}) & \propto   g^{(m)}(x_n^{1:m},y_n)^{s}
%f^{(m)}(x_{n-1}^{1:m},x_n^{1:m})%\times \\ & &
%\Big(\prod_{p=1}^{n-1}g^{(m)}(x_p^{1:m},y_n)
%f^{(m)}(x_{p-1}^{1:m},x_p^{1:m})\Big)dx_{1:n}^{1:m},\\
%\widetilde{\pi}_{s,n}(dx_{1:n}^j) & \propto  
%g^e(x_n^j,y_n)^{s}f^e(x_{n-1}^j,x_n^j)
%\Big(\prod_{p=1}^{n-1}g^e(x_p^j,y_p)f^e(x_{p-1}^j,x_p^j)\Big)
%dx_{1:n}^j.
%\end{align*}
%%
%Similarly, for $s\in[0,1]$, we consider the `continuum' of Markov kernels $\overline{K}_{s,n}^{(m)}$ and $\overline{K}_{s,n}$ of invariant measures $\widetilde{\pi}_{s,n}^{(m)}$ and 
%$\widetilde{\pi}_{s,n}$ respectively. Denote as $\hat{\psi}_{s,n}$ the solution to the following Poisson equation 
%%
%\begin{align*}
%\hat{\psi}_{s,n}-\overline{K}_{s,n}(\hat{\psi}_{s,n}) = \psi_n-\widetilde{\pi}_{s,n}(\psi_n).
%\end{align*}
%%
%Also, set 
%%
%\begin{align*}
%\sigma^2(n) = \int_{0}^1\widetilde{\pi}_{s,n}(\hat{\psi}_{s,n}^2-\overline{K}_{s,n}(\hat{\psi}_{s,n})^2)ds. 
%\end{align*}
%

%
We extend the definition, in \eqref{eq:pi1}-\eqref{eq:pi2}, of  $\widetilde{\pi}_{k,n}^{(m)}$ and $\widetilde{\pi}_{k,n}$,  $n\in\mathbb{N}$,   to allow for a \emph{continuous} tempering parameter $s\in[0,1]$:
\begin{align*}
\widetilde{\pi}_{s,n}^{(m)}(dx_{n-L:n}^{1:m}) 
&\propto \Big(\frac{\mu_{n-L}^{(m)}(x_{n-L+1}^{1:m})g^{(m)}(x_n^{1:m},y_n)}{f^{(m)}(x_{n-L}^{1:m},x_{n-L+1}^{1:m})}\Big)^{s} \mu^{(m)}_{n-L-1}(x_{n-L}^{1:m})g^{(m)}(x_{n-L}^{1:m},y_{n-L})\\ &\qquad \quad \times \Big(\prod_{p=n-L+1}^{n-1}f^{(m)}(x_{p-1}^{1:m},x_p^{1:m})g^{(m)}(x_p^{1:m},y_p)\Big) f^{(m)}(x_{n-1}^{1:m},x_n^{1:m}), \\[0.4cm]
\widetilde{\pi}_{s,n}(dx_{n-L:n}^{j}) 
&\propto \Big(\frac{\mu_{n-L}^{e}(x_{n-L+1}^j)g^{e}(x_n^{j},y_n)}{f^{e}(x_{n-L}^{j},x_{n-L+1}^{j})}\Big)^{s} \mu^{e}_{n-L-1}(x_{n-L}^{j})g^{e}(x_{n-L}^{j},y_{n-L})\\ &\qquad \quad \times \Big(\prod_{p=n-L+1}^{n-1}f^{e}(x_{p-1}^{j},x_p^j)g^{e}(x_p^j,y_p)\Big) f^{e}(x_{n-1}^j,x_n^j).
\end{align*}
Similarly, for $s\in[0,1]$, $n\in\mathbb{N}$, we consider the \emph{continuum} of Markov kernels $\overline{K}_{s,n}^{(m)}$ and $\overline{K}_{s,n}$ that preserve $\widetilde{\pi}_{s,n}^{(m)}$ and $\widetilde{\pi}_{s,n}$, respectively. 

For $n\in\mathbb{N}$ and $x_{n-L:n}\in E^{L+1}$ set:
\begin{align*}
\psi_n(x_{n-L:n}) := \overline{g}(x_{n},y_{n})+\overline{\mu}_{n-L}(x_{n-L+1})-\overline{f}(x_{n-L},x_{n-L+1}).
\end{align*}
Standard adjustments apply in the case of negative time index, see \autoref{rem:convention}.
We denote by $\hat{\psi}_{s,n}:E^{L+1}\to \mathbb{R}$ the solution to the following Poisson equation, $s\in[0,1]$, $n\in \mathbb{N}$:

\begin{align*}
 \hat{\psi}_{s,n}-\overline{K}_{s,n}(\hat{\psi}_{s,n}) = \psi_n-\widetilde{\pi}_{s,n}(\psi_n). 
\end{align*} 
We set 
\begin{align*}
\sigma^2(n) = \int_{0}^1\widetilde{\pi}_{s,n}(\hat{\psi}_{s,n}^2-\overline{K}_{s,n}(\hat{\psi}_{s,n})^2)ds.
\end{align*} 

We use the following assumptions as in \cite{beskos1}. \vspace{0.2cm}

\begin{hypA}\label{hyp_a:1}
For any $n\in\{1,2,\ldots,\}$ there exist $\theta_n\in(0,1)$,  $(\zeta_n^{(m)},\zeta_n)\in\mathscr{P}(\mathcal{B}(E^{m(L+1)}))\times
\mathscr{P}(\mathcal{B}(E^{L+1}))$, such that for each $s\in[0,1]$:
%
%
%
%and any $(\overline{x}_{1:L+1},A)\in\mathsf{X}^{L+1}\times\overline{\mathscr{X}}^{L+1}$
%
\begin{align*}
\overline{K}_{s,n}^{(m)}(\overline{x}_{n-L:n}^{1:m},A) & \geq  \theta_n \zeta_n^{(m)}(A),\quad  \textrm{for all }(x_{n-L:n}^{1:m},A)\in E^{m(L+1)}\times\mathcal{B}(E^{m(L+1)}), \\[0.2cm]
\overline{K}_{s,n}(\overline{x}_{n-L:n},A) & \geq  \theta_n \zeta_n(A),\quad \textrm{for all } (\overline{x}_{1:L+1},A)\in E^{L+1}\times\mathcal{B}(E^{L+1}).
\end{align*}
\end{hypA}

\begin{hypA}\label{hyp_a:2}
For any $n\in\{1,2,\ldots,\}$ there exists $C_n<\infty$ such that
for any $s,t\in[0,1]$ we have
\begin{align*}
\sup_{x_{n-L:n}^{1:m}\in E^{m(L+1)}}\big\|\overline{K}_{s,n}^{(m)}(x_{n-L:n}^{1:m},\cdot)-\overline{K}_{t,n}^{(m)}(x_{n-L:n}^{1:m},\cdot)\big\|_{\mathrm{TV}} & \leq  C_n|s-t|,\\
\sup_{x_{n-L:n}\in E^{L+1}}\big\|\overline{K}_{s,n}(x_{n-L:n},\cdot)-\overline{K}_{t,n}(x_{n-L:n},\cdot)\big\|_{\mathrm{TV}} & \leq  C_n|s-t|.
\end{align*}
\end{hypA}

\begin{hypA}\label{hyp_a:3}
There exists a $C<\infty$ such that:
\begin{itemize}
\item[i)] $\max\{\sup_{(x^{1:m},y)\in E^m\times\mathsf{Y}}|\tilde{g}(x^{1:m},y)|,\sup_{(x,y)\in E\times\mathsf{Y}}
|\overline{g}(x,y)|\}\leq C$;
\item[ii)] $\max\{\sup_{(x^{1:m},z^{1:m})\in E^{2m}}|\tilde{f}(x^{1:m},z^{1:m})|,\sup_{(x,z)\in E^2}|\overline{f}(x,z)|\}\leq C$;
\item[iii)] $\max\{\sup_{(x^{1:m},k)\in E^m\times\mathbb{N}}|\tilde{\mu}_k(x^{1:m})|,\sup_{(x,k)\in E\times\mathbb{N}}|\overline{\mu}_k(x)|\}\leq C$. \vspace{0.2cm}
\end{itemize}
\end{hypA}

\begin{rem}
To simplify the notation, for $s\in[0,1]$, $n\in\mathbb{N}$, and $n-L\leq k \leq n$,
we denote by $\widetilde{\pi}_{s,n,k}^{(m)}$ 
and $\widetilde{\pi}_{s,n,k}$ the marginal of laws  $\widetilde{\pi}_{s,n}^{(m)}(dx^{1:m}_{n-L:n})$
and $\widetilde{\pi}_{s,n}(dx_{n-L:n})$ on $x_{k:n}^{1:m}$ and $x_{k:n}$, respectively.
In addition, $\mathbb{V}_{\widetilde{\pi}_{s,n,k}^{(m)}}$, $\mathbb{V}_{\widetilde{\pi}_{s,n,k}}$ denote the variances of these above laws.
%
%\begin{align*}
%\mathbb{V}\textrm{ar}_{\widetilde{\pi}_{s,n,k}^{(m)}}[\varphi^{(m)}] & :=  \widetilde{\pi}_{s,n,k}^{(m)}(\{\varphi^{(m)}\}^2) - \widetilde{\pi}_{s,n,k}^{(m)}(\varphi^{(m)})^2,\\
%\mathbb{V}\textrm{ar}_{\widetilde{\pi}_{s,n,k}}[\varphi] & :=  \widetilde{\pi}_{s,n,k}(\varphi^2) - \widetilde{\pi}_{s,n}^{(k)}(\varphi)^2.
%\end{align*}
%%
%Note that for $(\varphi^{(m)},\varphi)\in\mathscr{B}_b(\mathbb{R}^{mn})\times\mathscr{B}_b(\mathbb{R}^{n})$, $n\in\{1,\dots,L\}$, 
%
%\begin{align*}
%\mathbb{V}\textrm{ar}_{\widetilde{\pi}_{s,n}^{(m)}}[\varphi^{(m)}] & :=  \widetilde{\pi}_{s,n}^{(m)}(\{\varphi^{(m)}\}^2) - \widetilde{\pi}_{s,n}^{(m)}(\varphi^{(m)})^2, \\
%\mathbb{V}\textrm{ar}_{\widetilde{\pi}_{s,n}}[\varphi] & :=  \widetilde{\pi}_{s,n}(\varphi^2) - \widetilde{\pi}_{s,n}(\varphi)^2.
%\end{align*}
%
%
\end{rem}

Now we have the following result, that constitutes an extension of \cite[Theorem 3.1]{beskos1}. 
Here, $\|\cdot\|_2$ denotes the standard $L_2$-norm on the space of squared-integrable random variables.

\begin{prop}\label{prop:1}
Assume \autoref{ass:model} and (\textbf{A}\ref{hyp_a:1}-\ref{hyp_a:3}) hold. Then for any fixed $n$ and number of particles $N$ with $(n,N)\in\mathbb{N}\times\mathbb{N}$, we have:
%
%\begin{itemize}
%\item[] If $n\in\{1,\dots,L\}$: 
%\begin{enumerate}
%\item[i)] for any $\varphi^{(m)}\in\mathscr{C}_b(\mathbb{R}^{mn})$ 
%
%
%\begin{align*}
%&\lim_{d\rightarrow\infty}\mathbb{E}\left[\left(\frac{1}{N}\sum_{i=1}^N\left(\varphi^{(m)}(\check{X}_{1:n}^{i,1:m}(n))-\widetilde{\pi}_{1,n}^{(m)}(\varphi^{(m)})\right)\right)^2\right] \\ 
%&\qquad\qquad\qquad\qquad \leq
%\frac{\mathbb{V}\textrm{\emph{ar}}_{\widetilde{\pi}_{1,n}^{(m)}}[\varphi^{(m)}]}{N}\left(1+C_n(\sigma^2(n))\right).
%\end{align*}

%\item[ii)] for any $\varphi\in\mathscr{C}_b(\mathbb{R}^n)$ 
%%
%\begin{align*}
%&\lim_{d\rightarrow\infty}\mathbb{E}\left[\left(\frac{1}{N}\sum_{i=1}^N\left(\varphi(\check{X}_{1:n}^{i,m+1}(n))-\widetilde{\pi}_{1,n}(\varphi)\right)\right)^2\right] \\ 
%&\qquad\qquad\qquad\qquad \leq
%\frac{\mathbb{V}\textrm{\emph{ar}}_{\widetilde{\pi}_{1,n}}[\varphi]}{N}\left(1+C_n(\sigma^2(n))\right).
%\end{align*}
%
%\end{enumerate}

%If $n\in\{L+1,L+2,\dots\}$: 
\begin{enumerate}
\item[i)] For any $\varphi^{(m)}\in\mathscr{C}(E^{m(L+1)})$ 
\begin{align*}
&\lim_{d\rightarrow\infty}\Big\| \tfrac{\sum_{i=1}^N \varphi^{(m)}(\check{X}_{n-L:n}^{(i),1:m}(n))}{N}-\widetilde{\pi}_{1,n,n-L}^{(m)}(\varphi^{(m)})\Big\|_2^{2} \leq 
\tfrac{\mathbb{V}_{\widetilde{\pi}_{1,n,n-L}^{(m)} }[\varphi^{(m)}]}{N}\big(1+C_{n,N}(\sigma^2(n))\big).
\end{align*}

\item[ii)] For any $\varphi\in\mathscr{C}(E^{L+1})$
\begin{align*}
&\lim_{d\rightarrow\infty}\Big\|\tfrac{\sum_{i=1}^N\varphi(\check{X}_{n-L:n}^{(i),m+1}(n))}{N}-\widetilde{\pi}_{1,n,n-L}(\varphi)\Big\|_2^2  \leq
\tfrac{\mathbb{V}_{\widetilde{\pi}_{1,n,n-L} }[\varphi]}{N}\big(1+C_{n,N}(\sigma^2(n))\big)
\end{align*}

\end{enumerate}

\noindent In both cases above  $C_{n,N}(\sigma^2(n)) = \exp\{\sigma^2(n)\} + C_n \exp\{17\sigma^2(n)\}/N^{1/6}$, where $C_n$ is a constant that does not depend on $N$ or $\sigma^2(n)$.

%\end{itemize}

\end{prop}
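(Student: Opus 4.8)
The plan is to adapt the proof of \cite[Theorem 3.1]{beskos1} to the factorised, non-i.i.d.\ setting imposed by \autoref{ass:model}. The first step is to record that, under this assumption, both the tempered targets \eqref{eq:pi1}--\eqref{eq:pi2} and the mutation kernels split into a single $m$-dimensional ``shared'' block governed by $\widetilde{\pi}_{s,n}^{(m)}$ and a product of $d$ identically-distributed single-coordinate blocks governed by $\widetilde{\pi}_{s,n}$, while the incremental log-weight in \eqref{eq:weight_smc} decomposes additively as $\tfrac{1}{d}\sum_{k}(\tilde{g}+\tilde{\mu}-\tilde{f})$ plus $\tfrac{1}{d}\sum_{j}\sum_{k}(\overline{g}+\overline{\mu}-\overline{f})$. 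Because the mutation kernels act coordinatewise and preserve the factorised intermediate targets, the particle systems in the relevant marginals ($x^{1:m}_{n-L:n}$ for part i), any single $x^{j}_{n-L:n}$ for part ii)) can be tracked separately, and I would study their empirical measures as $d\to\infty$.

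Second, I would identify the limiting law of the normalised importance weights. For a fixed coordinate $j$, the sequence of positions across the $d$ tempering steps is a realisation of the time-inhomogeneous Markov chain with kernels $\overline{K}_{s,n}$ (indexed by $s=k/d$), and $\tfrac{1}{d}\sum_{k}\psi_n(x^{j}_{k})$ is a time-average of this chain. By a Markov-chain central limit theorem, with $\hat{\psi}_{s,n}$ the Poisson-equation solution removing the local drift, its fluctuation has asymptotic variance exactly $\sigma^2(n)=\int_0^1\widetilde{\pi}_{s,n}(\hat{\psi}_{s,n}^2-\overline{K}_{s,n}(\hat{\psi}_{s,n})^2)\,ds$; this is where \textbf{A}\ref{hyp_a:1} (uniform minorisation, hence geometric ergodicity of each $\overline{K}_{s,n}$), \textbf{A}\ref{hyp_a:2} (Lipschitz continuity in $s$), and \textbf{A}\ref{hyp_a:3} (boundedness of $\overline{g},\overline{f},\overline{\mu}$ and their tilde counterparts) enter, precisely as in \cite{beskos1}. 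Summing the independent coordinate contributions over $j\in\{m+1,\dots,d\}$ and invoking a further CLT, the \emph{centred} log-weight converges to $\mathcal{N}(0,\sigma^2(n))$, so that the normalised weights converge to normalised i.i.d.\ log-normal weights of log-variance $\sigma^2(n)$.

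Third, I would assemble these ingredients into the claimed bound. In the $d\to\infty$ limit the selected particles $\check{X}^{(i)}_{n-L:n}$ in the relevant marginal become i.i.d.\ draws from the target marginal $\widetilde{\pi}^{(m)}_{1,n,n-L}$, respectively $\widetilde{\pi}_{1,n,n-L}$, resampled with the limiting log-normal weights; the leading Monte-Carlo term is therefore $\mathbb{V}[\varphi]/N$, and the variance inflation induced by resampling is controlled by the normalised weights' second moment, which for a log-normal of log-variance $\sigma^2(n)$ equals $\exp\{\sigma^2(n)\}$, yielding that factor. The residual $C_n\exp\{17\sigma^2(n)\}/N^{1/6}$ term arises from the self-normalisation (division by the random normalising constant $\sum_{i}W^{(i)}(n)$), whose control needs higher-moment estimates on the weights; the exponent $17$ and the $N^{-1/6}$ rate are inherited from the moment computations of \cite[Theorem 3.1]{beskos1}.

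The key point that makes the extension go through, and also the main obstacle to verify carefully, is that the shared $m$-block does not corrupt the limit. Its log-weight contribution $\tfrac{1}{d}\sum_k(\tilde{g}+\tilde{\mu}-\tilde{f})(x^{1:m}_k)$ is the time-average of a \emph{fixed}, finite-dimensional bounded functional, so by the law of large numbers it converges to a deterministic constant with fluctuation of order $d^{-1/2}$; this shift is common to all particles and cancels under normalisation, leaving the log-normal fluctuation to be generated entirely by the growing product block. Making this decoupling rigorous --- establishing the joint propagation-of-chaos limit for the $m$-block and the product block simultaneously, and justifying the CLT uniformly over $s\in[0,1]$ --- is the technically delicate part, and is exactly where the machinery of \cite{beskos,beskos1} must be re-run with the coordinate index $j\in\{m+1,\dots,d\}$ playing the role of the i.i.d.\ index used there.
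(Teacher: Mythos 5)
Your proposal follows essentially the same route as the paper: the paper proves \autoref{prop:1} by instantiating the general dimension-stability result (\autoref{th:G}, itself a reformulation of \cite{beskos,beskos1}) with $\mathsf{L}=L+1$, $\mathsf{\Pi}=\widehat{\pi}_n(dx_{n-L:n})$, $\widetilde{\mathsf{\Pi}}=\widehat{\pi}_{n-1}(dx_{n-L:n-1})f(x_{n-1},dx_n)$, and then invoking the exchangeability/moment calculations of Appendix A.2 of \cite{beskos1} for the explicit $L_2$ bound --- exactly the martingale CLT for the log-weights via the Poisson equation, the asymptotic negligibility of the $m$-block contribution, and the log-normal-weight second-moment accounting that you describe. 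Your treatment of the shared $m$-block (convergence of its time-average to a deterministic constant common to all particles, hence cancellation under normalisation) is in fact spelled out slightly more carefully than the paper's one-line justification, but it is the same idea.
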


\begin{proof}
The proof is given in Appendix \ref{sec:proofP}.
\end{proof}

\subsection{Error Analysis for Lagged Distribution Approximation}

We add an assumption which will be useful below. \vspace{0.2cm}

\begin{hypA}\label{hyp_a:4}
We have the following lower bounds, for some $0<C<\infty$:
\begin{itemize}
\item[i)] $\min\{\inf_{(x,y)\in E^m\times\mathsf{Y}}\tilde{g}(x^{1:m},y),\inf_{(x,y)\in E\times\mathsf{Y}}\overline{g}(x,y)\}\geq C$.
\item[ii)] $\min\{\inf_{(x^{1:m},z^{1:m})\in E}\tilde{f}(x^{1:m},z^{1:m}),\inf_{(x,z)\in E^2}\overline{f}(x,z)\}\geq C$. \vspace{0.2cm}
%\item{There exist a $0<C<\infty$ such that $\sup_{(x,k)\in\overline{\mathsf{X}}\times\mathbb{N}}\overline{\mu}_k(x)\leq C$.}
\end{itemize}
\end{hypA}

We write $\widetilde{\pi}_{(n)}^{(m)}$ (resp.~$\widetilde{\pi}_{(n-L+1|n-L)}^{(m)}$) and $\widetilde{\pi}_{(n)}$ (resp.~$\widetilde{\pi}_{(n-L+1|n-L)}^{(m)}$) 
for the \emph{true} filtering distribution (resp.~\emph{true} predictive distribution as defined in \eqref{eq:correct}) in the first $m$ dimensions and in the $(m+1)^{\textrm{th}}$-dimension respectively. 
Then the following result is standard in the literature, see e.g.~\cite[Proposition 4.3.7]{FK}. 

\begin{prop}\label{prop:2}
Assume (\textbf{A}\autoref{hyp_a:3}-\autoref{hyp_a:4}). Then there exist $(\rho,C)\in(0,1)\times(0,\infty)$ such that: 
\begin{enumerate}
\item[i)] For
any $(d,L,n)\in\mathbb{N}^3$ and any bounded function $\varphi^{(m)}:E^{m}\to \mathbb{R}$: 
\begin{align*}
\Big|\,\big[\widetilde{\pi}_{1,n,n}^{(m)}-\widetilde{\pi}_{(n)}^{(m)}\big](\varphi^{(m)})\,\Big| \leq C\,\big\|\varphi^{(m)}\big\|_{\infty} \big\|\mu_{n-L}^{(m)}-\widetilde{\pi}_{(n-L+1|n-L)}^{(m)}\big\|_{\mathrm{TV}}\times \rho^L.
\end{align*}
 
\item[ii)] For
any $(d,L,n)\in\mathbb{N}^3$ and any bounded function $\varphi:E\to \mathbb{R}$: 
\begin{align*}
\Big|\,\big[\widetilde{\pi}_{1,n,n}-\widetilde{\pi}_{(n)}\big](\varphi)\,\Big| \leq C\,\big\|\varphi\big\|_{\infty} \big\|\mu_{n-L}^e-\widetilde{\pi}_{(n-L+1|n-L)}\big\|_{\mathrm{TV}}\times \rho^L.
\end{align*}
 
\end{enumerate}
\end{prop}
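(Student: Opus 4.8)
The plan is to recognise both distributions appearing in the statement as the time-$n$ marginal of one and the same normalised Feynman--Kac flow run over the last $L$ time steps, and then to reduce the claim to the standard exponential forgetting of the initial condition for such flows. I would prove part (i) in full; part (ii) is obtained by the identical argument after replacing $f^{(m)},g^{(m)},\mu^{(m)}_k$ and the space $E^m$ by their one-coordinate counterparts $f^{e},g^{e},\mu^{e}_k$ and $E$.

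First I would carry out the algebraic reduction of $\widetilde{\pi}^{(m)}_{1,n}$ at $s=1$. Setting $s=1$ in \eqref{eq:pi1}, the factor $f^{(m)}(x^{1:m}_{n-L},x^{1:m}_{n-L+1})$ in the denominator cancels against the $p=n-L+1$ term of the product, and the block $\mu^{(m)}_{n-L-1}(x^{1:m}_{n-L})\,g^{(m)}(x^{1:m}_{n-L},y_{n-L})$ then couples to no other coordinate and integrates out. Marginalising over $x^{1:m}_{n-L}$ leaves, up to normalisation,
\begin{align*}
\widetilde{\pi}^{(m)}_{1,n}(dx^{1:m}_{n-L+1:n}) \propto \mu^{(m)}_{n-L}(x^{1:m}_{n-L+1})\,\Big(\prod_{p=n-L+1}^{n} g^{(m)}(x^{1:m}_p,y_p)\Big)\,\Big(\prod_{p=n-L+2}^{n} f^{(m)}(x^{1:m}_{p-1},x^{1:m}_p)\Big),
\end{align*}
which is precisely the Feynman--Kac path law with initial measure $\mu^{(m)}_{n-L}$ on $x^{1:m}_{n-L+1}$, transitions $f^{(m)}$ and potentials $g^{(m)}(\cdot,y_p)$; hence $\widetilde{\pi}^{(m)}_{1,n,n}$ is its time-$n$ marginal. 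By the product factorisation of $f$ and $g$ enforced by \autoref{ass:model}, the first $m$ coordinates of the genuine model themselves form a self-contained state-space model with transition $f^{(m)}$ and potential $g^{(m)}$, so the true filter $\widetilde{\pi}^{(m)}_{(n)}$ is the time-$n$ marginal of the same flow, differing only in that it is initialised at the true predictive $\widetilde{\pi}^{(m)}_{(n-L+1|n-L)}$ of \eqref{eq:correct} rather than at $\mu^{(m)}_{n-L}$.

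Writing the two objects as $\Phi_{n-L+1:n}(\mu^{(m)}_{n-L})$ and $\Phi_{n-L+1:n}(\widetilde{\pi}^{(m)}_{(n-L+1|n-L)})$ for the common normalised semigroup $\Phi_{n-L+1:n}$, I would then check the mixing inputs of the stability theorem. On the compact space $E^m$ the upper bounds of \autoref{hyp_a:3} together with the lower bounds of \autoref{hyp_a:4} give $0<\underline{c}\le f^{(m)}\le\overline{c}<\infty$ and $0<\underline{c}\le g^{(m)}\le\overline{c}<\infty$; the two-sided bound on $f^{(m)}$ yields a one-step Doeblin minorisation for the transitions (with coefficient controlled by $\overline{c}/\underline{c}$, any normalisation of $f^{(m)}$ being absorbed into the potential), and the two-sided bound on $g^{(m)}$ makes the potentials bounded and bounded away from zero. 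Under these conditions the classical stability estimate for Feynman--Kac flows, \cite[Proposition 4.3.7]{FK}, delivers
\begin{align*}
\big\|\Phi_{n-L+1:n}(\mu)-\Phi_{n-L+1:n}(\nu)\big\|_{\mathrm{TV}} \le C\,\rho^{L}\,\|\mu-\nu\|_{\mathrm{TV}}
\end{align*}
for some $\rho\in(0,1)$, $C<\infty$ independent of the two initial laws. Testing against $\varphi^{(m)}$ and taking $\mu=\mu^{(m)}_{n-L}$, $\nu=\widetilde{\pi}^{(m)}_{(n-L+1|n-L)}$ gives the claimed bound in (i), and the verbatim scalar computation gives (ii).

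The step that needs the most care is the uniformity of $(\rho,C)$ in $(d,L,n)$. Uniformity in $n$ and $L$ is automatic, since every filtering step satisfies the same two-sided bounds and the Dobrushin contraction coefficient of the composed semigroup is at most the product of the per-step coefficients, i.e.\ $\mathcal{O}(\rho^{L})$. Uniformity in $d$ is the decisive structural gain afforded by \autoref{ass:model}: after the cancellation above the $x^{1:m}$-block evolves as a flow on the fixed low-dimensional space $E^m$, and in part (ii) each scalar coordinate evolves on $E$, so the mixing constants depend only on $m$ and on the constants of \autoref{hyp_a:3}--\autoref{hyp_a:4} and never on the ambient dimension $d$.
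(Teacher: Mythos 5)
Your proposal is correct and follows exactly the route the paper intends: the paper offers no written proof beyond the remark that the result ``is standard in the literature, see e.g.~\cite[Proposition 4.3.7]{FK}'', and your argument supplies precisely the reduction that makes that citation applicable --- cancelling the $f^{(m)}$ factor at $s=1$, integrating out the decoupled $x_{n-L}^{1:m}$ block, recognising both laws as the same Feynman--Kac flow started from $\mu_{n-L}^{(m)}$ versus the true predictive, and invoking exponential stability under the two-sided bounds of (\textbf{A}\ref{hyp_a:3})--(\textbf{A}\ref{hyp_a:4}). Your closing remarks on uniformity in $(d,L,n)$, and on the normalisation of $f^{(m)}$ being absorbed into the potential, address the only points the paper leaves implicit.
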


\begin{rem}
Condition (\textbf{A}\autoref{hyp_a:3})(i) can be relaxed to the following: for any $y\in\mathsf{Y}$, there exists a $C<\infty$ such that $\max\{\sup_{x^{1:m}\in E^m}\tilde{g}(x,y),\sup_{x\in E}\overline{g}(x,y)\}\leq C$.
This weaker condition suffices for \autoref{prop:1}, but not for \autoref{prop:2}.
\end{rem}

\subsection{Total Error of Lagged Approach}

As a result of Propositions \ref{prop:1}-\ref{prop:2} it easily follows that for $n\in\mathbb{N}_0$ 
\begin{align*}
&\lim_{d\rightarrow\infty}\Big\|\sum_{i=1}^N\tfrac{\varphi^{(m)}(\check{X}_{n}^{(i),1:m}(n))}{N}-\widetilde{\pi}_{(n)}^{(m)}(\varphi^{(m)})\Big\|_{2}^{2} \leq \\
&\qquad \qquad \tfrac{\mathbb{V}_{ \widetilde{\pi}_{1,n,n}^{(m)} }[\varphi^{(m)}]}{N}\Big(1+C_{n,N}(\sigma^2(n))\Big) + 
C\,\big\|\varphi^{(m)}\big\|_{\infty}^2 \big\|\mu_{n-L}^{(m)}-\widetilde{\pi}_{(n-L+1|n-L)}^{(m)}\big\|^2_{\mathrm{TV}}\times \rho^L,
\end{align*}
and
\begin{align*}
&
\lim_{d\rightarrow\infty}
\Big\| 
\tfrac{\sum_{i=1}^N\varphi(\check{X}_{n}^{(i),m+1}(n))}{N}-\widetilde{\pi}_{(n)}(\varphi)
\Big\|_2^2 \leq \\ 
\\
& \qquad \qquad\tfrac{\mathbb{V}_{\widetilde{\pi}_{1,n,n-L}}[\varphi]}{N}\Big(1+C_{n,N}(\sigma^2(n))\Big) + 
C\,\big\|\varphi\big\|_{\infty}^2 \big\|\mu_{n-L}^{e}-\pi_{(n-L+1|n-L)}\big\|^2_{\mathrm{TV}}\times \rho^L.
\end{align*}
Thus, for given $\epsilon>0$, to obtain an MSE of $\mathcal{O}(\epsilon^2)$ one needs to choose $L$ large enough, \emph{independently of $d$} and then a number of particles large enough, \emph{independently of $d$}. In particular, choosing $L=\mathcal{O}(|\log(\epsilon)|)$ and $N=\mathcal{O}(\epsilon^{-2})$ given an  MSE is $\mathcal{O}(\epsilon^2)$.
The cost to achieve this MSE, with $n>L$, is of the order $\mathcal{O}(d^2\epsilon^{-2}n|\log(\epsilon)|)$. Due to the structure of the model, we expect that this cost is a best case scenario.
Note also, that one cannot in practice increase $L$ indefinitely, as typically the mixing rate of the Markov kernels $\{K_{k,n}\}_{(k,n)\in\{2,\dots,d+1\}\times\{1,2,\dots\}}$ would fall  with $L$; also, one cannot afford to run the algorithm with $L$ too large. To make the latter mixing rate independent of $L$, one would have to iterate the kernels, for instance, in the case of Metropolis-Hastings kernels with random walk proposals, this would likely have to be at least $\mathcal{O}(L)$ steps (see \cite{rob_opt_scale}), leading to a computational cost of $\mathcal{O}(d^2\epsilon^{-2}n|\log(\epsilon)|^2)$.

\section{Numerical Results}\label{sec:numerics}

\subsection{Model}
Let $X_n \in \mathbb{R}^d$, for $n=1,\ldots,T$, and $Y_m\in \mathbb{R}^{d_y}$, for $m=1,\ldots,M$, for some $T,M \in \mathbb{N}$. Assume that $X_0=x_0\in \mathbb{R}^d$ is given. We consider the following SSM: 
\begin{align}
X_n &= q_n(X_{n-1}) + R_1^{1/2} W_n, \qquad n = 1, \ldots,T, \label{eq:signal}\\
Y_m &= C X_{m\hat{k}} + R_2^{1/2} V_m, \qquad m = 1,\ldots,M,\label{eq:obs}
\end{align} 
\begin{figure}[h!]
\begin{subfigure}[c]{0.48\textwidth}
\includegraphics[width = \textwidth]{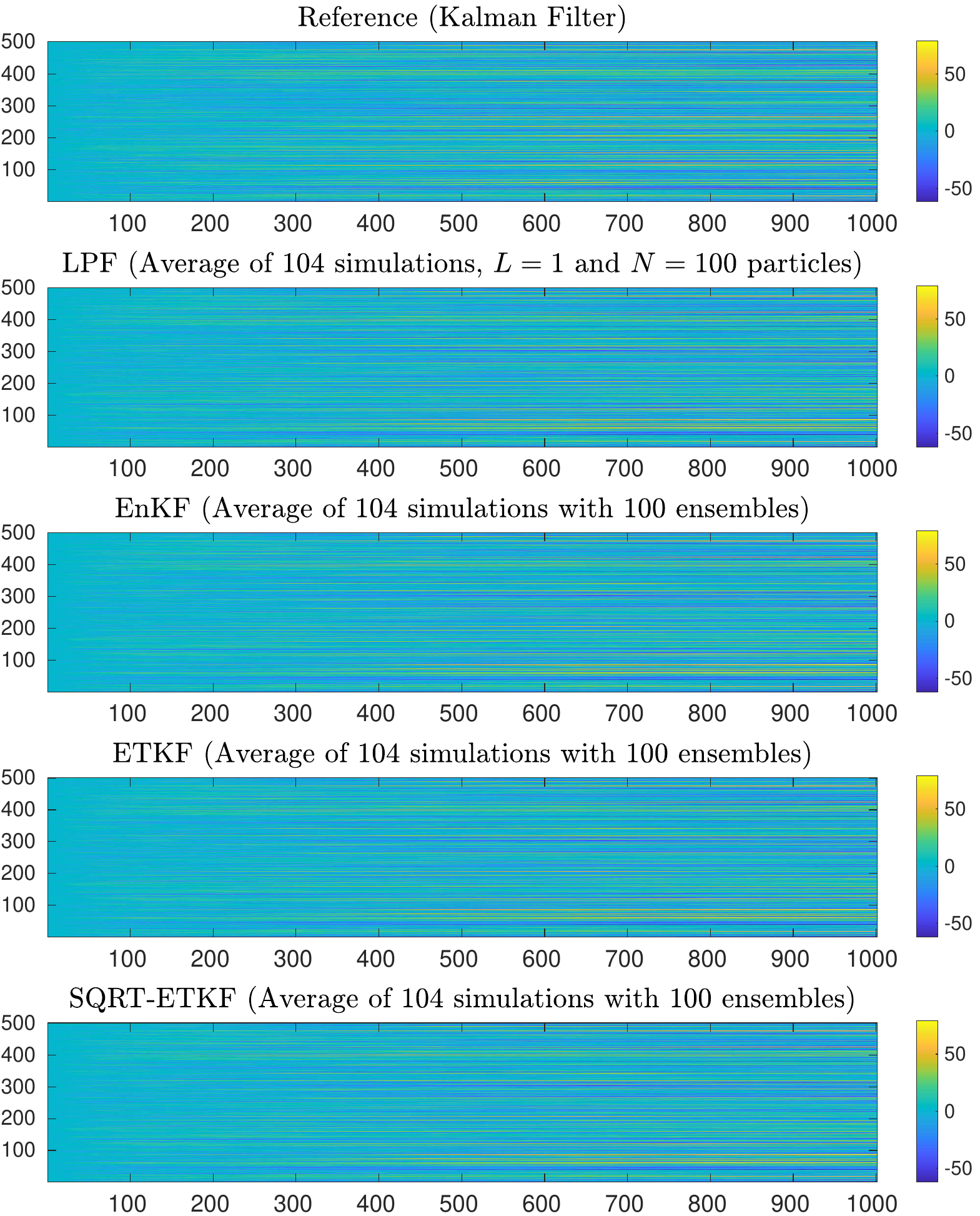}
\end{subfigure}
\begin{subfigure}[c]{0.51\textwidth}
\includegraphics[width =\textwidth]{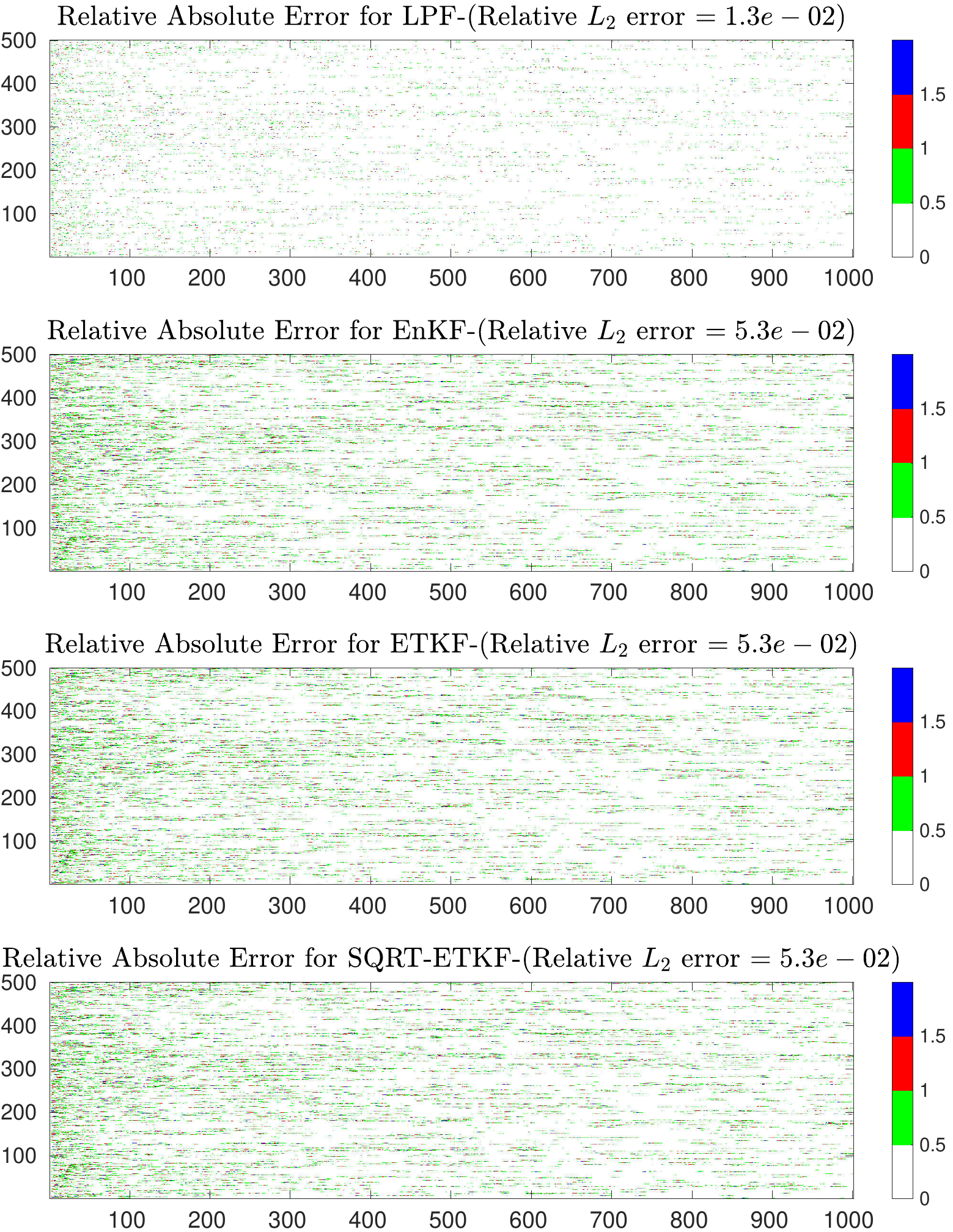}
\end{subfigure}
\caption{(Linear-Gaussian Model) A comparison amongst the different filters. The horizontal axis represents the time parameter $n$, and the vertical axis represents the signal's coordinates. 
The left panel shows approximations of $\pi_n(\varphi)$ with $\varphi(x_{1:n})=x_{n}$ as obtained by the KF, LPF, EnKF, ETKF and ETKF-SQRT algorithms. The right panel shows the corresponding relative absolute errors and the corresponding relative $L_2$-error is shown in the titles. For readability, the color of the relative absolute errors that are less than 0.5 is set to white.}
\label{fig:Linear_mean_errors}
\end{figure}
\begin{figure}[h!]
\centering
\includegraphics[width =0.7\textwidth, height = 0.5\textheight]{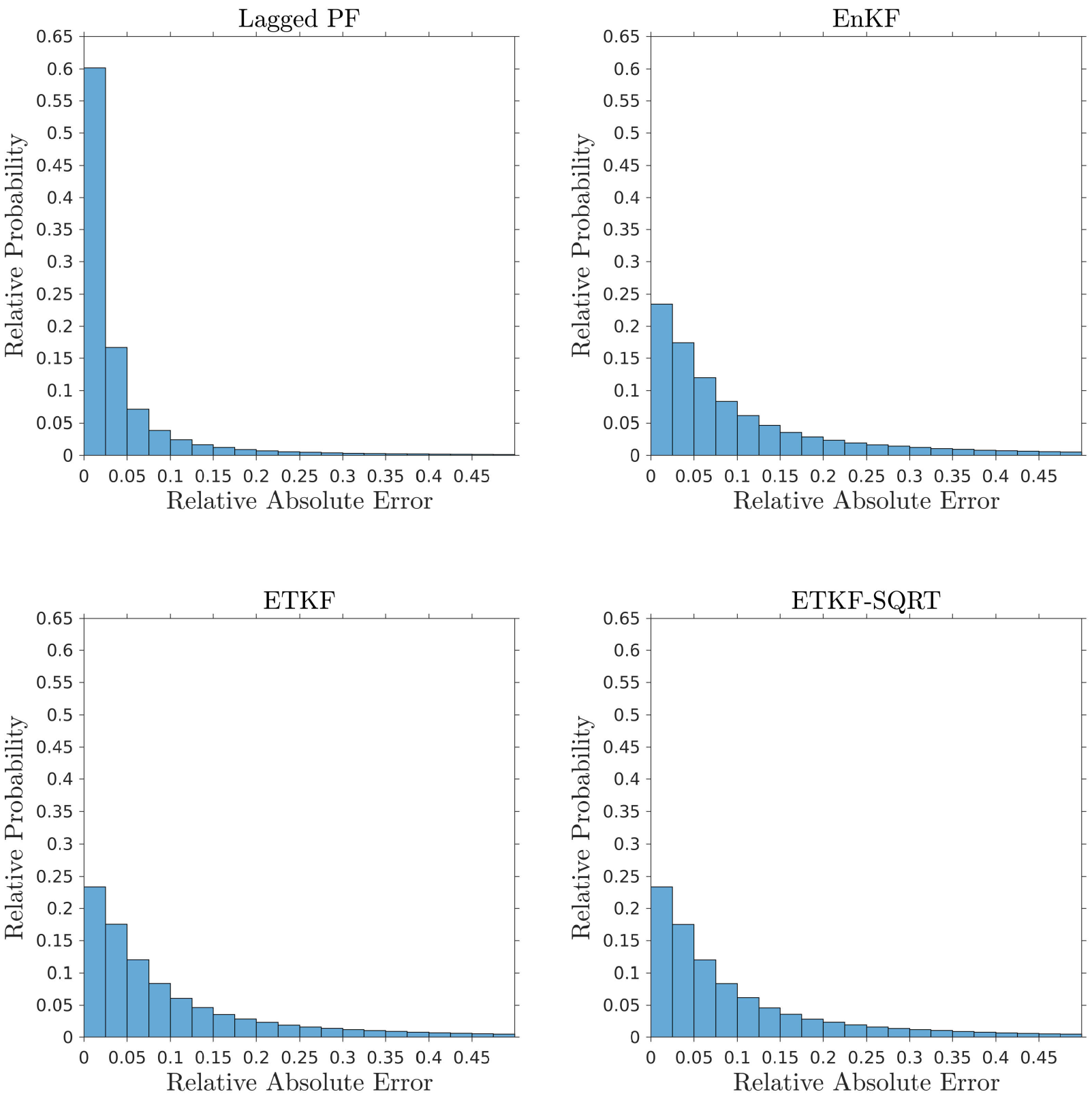}
\caption{(Linear-Gaussian Model) A histogram of the relative absolute errors for each filter. The relative probability here is defined as the number of elements in the bin divided by the total number of elements, which is $d \times (T+1)$.}
\label{fig:Linear_histo}
\end{figure}
\begin{figure}[h!]
\centering
\includegraphics[width =0.8\textwidth]{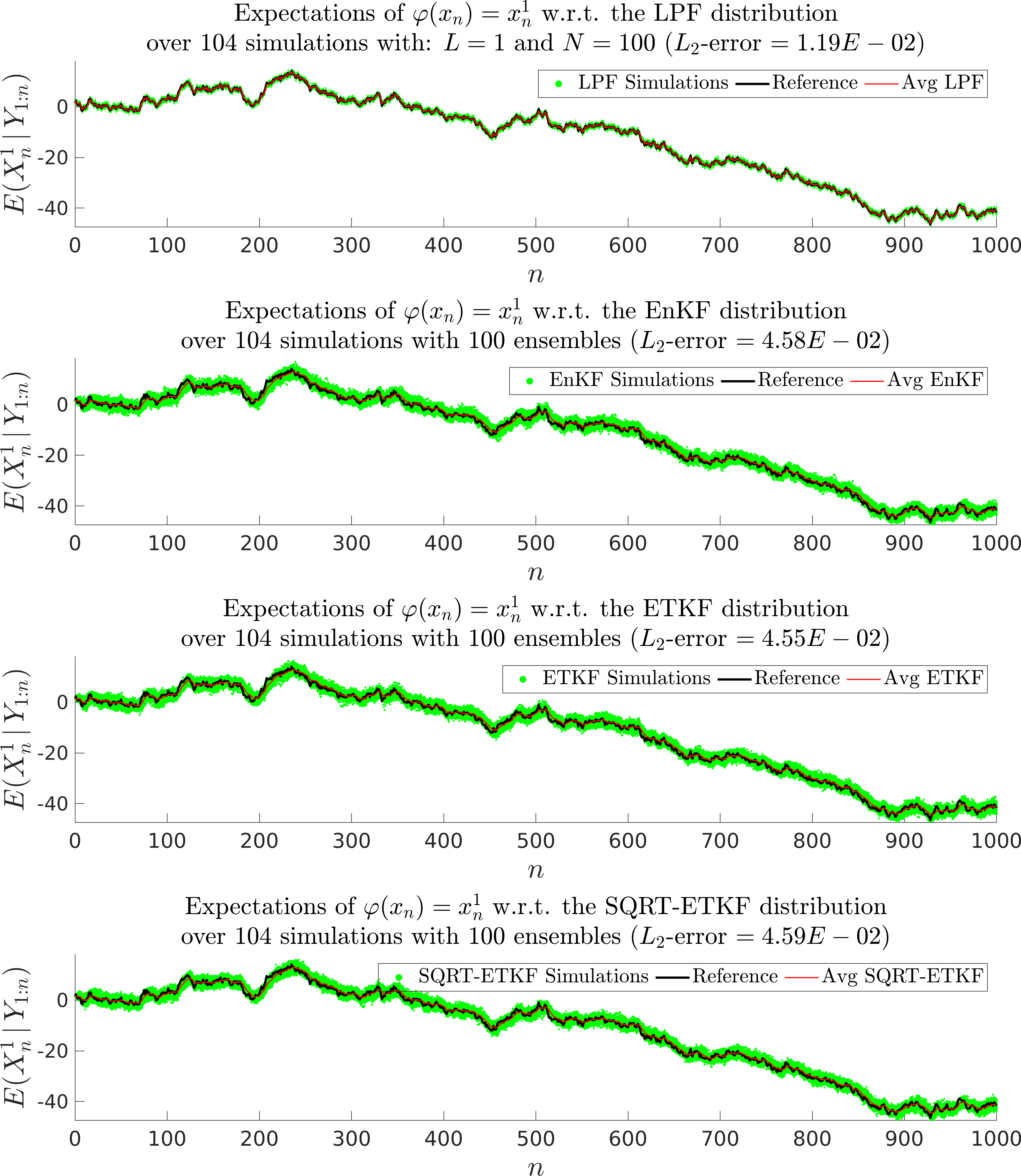}
\caption{(Linear-Gaussian Model) A comparison of approximations of $\pi_n(\varphi)$ with $\varphi(x_{1:n})=x_n^{1}$ as found by~LPF, EnKF, ETKF and ETKF-SQRT. The green circle/dots represent the cloud of different simulations of each filter. The black curve is the reference from the KF and the red curve is the average over 104 independent simulations of each method.}
\label{fig:Linear_1coord}
\end{figure}
\begin{figure}[h!]
\centering
\begin{subfigure}[c]{0.55\textwidth}
\includegraphics[width =\textwidth]{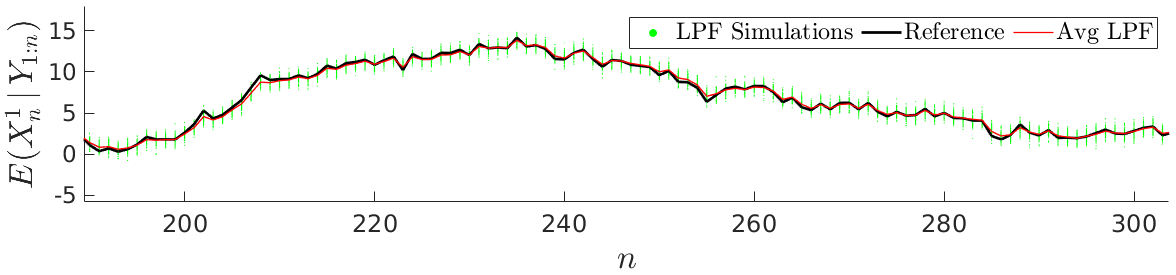}
\end{subfigure}\\
\begin{subfigure}[c]{0.55\textwidth}
\includegraphics[width =\textwidth]{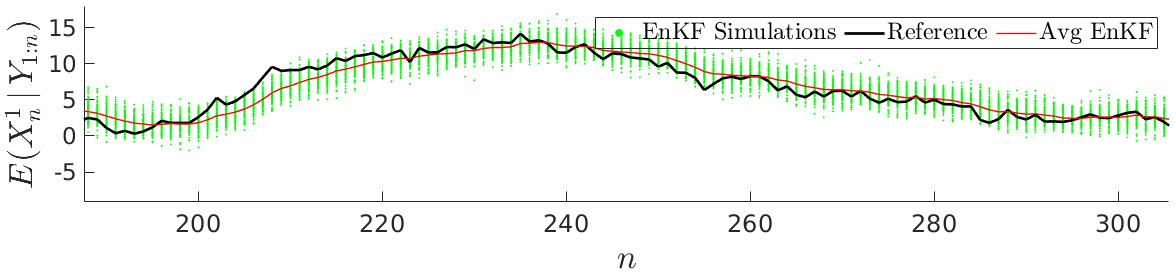}
\end{subfigure}
\caption{(Linear-Gaussian Model) We zoom in the region $[190,305]\times [-5,15]$ in \autoref{fig:Linear_1coord} for the LPF and EnKF (the rest of the methods are very similar to the EnKF.)}
\label{fig:Linear_1coord_zoom}
\end{figure}
where $R_1 \in \mathbb{R}^{d \times d}$, $R_2 \in \mathbb{R}^{d_y \times d_y}$ are symmetric positive definite matrices, and $C \in \mathbb{R}^{d_y \times d}$. The random variables $W_n$ and $V_m$ are sampled from $ \mathcal{N}(0,I_{d})$ and $\mathcal{N}(0,I_{d_y})$, respectively, where $I_r$ is the identity matrix of size $r \times r$. The positive integer $\hat{k}$ is the time frequency at which the signal or part of it is observed (e.g. if $\hat{k}=3$, the signal is observed at times $3,6,9,\ldots$.) The function $q_n:\mathbb{R}^d \to \mathbb{R}^d$ is possibly nonlinear. 

The target smoothing distribution is
\begin{align}
\label{eq:target}
\pi_n(x_{1:n})\propto \prod_{i=1}^n f(x_{i-1},x_i)\,g(x_i,y_i), 
\end{align}
where from the above SSM,
\begin{align}
\label{eq:Model}
f(x_{n-1},x_n) &= \frac{1}{\sqrt{(2\pi)^{d}\dett{R_1}} } \exp\left\{-\tfrac{1}{2}[x_n-q_n(x_{n-1})]^\top R_1^{-1} [x_n-q(x_{n-1})] \right\}, \\
g(x_{m\hat{k}},y_m) &= \frac{1}{\sqrt{(2\pi)^{d_y}\dett{R_2}} } \exp\left\{-\tfrac{1}{2}[y_m-Cx_{m\hat{k}}]^\top R_2^{-1} [y_m-Cx_{m\hat{k}}] \right\}.
\end{align}

We will implement an enhancement of \autoref{alg:lag_filt} that includes adaptive tempering and resampling. Resampling occurs whenever the ESS $\mathcal{E}$ is less than a given threshold $N^*$. 
This way the algorithm calculates the annealing parameters $\phi_k$ on the fly. A detailed description of the algorithm used can be found
in \autoref{alg:LPF} in the Appendix. There we also specify the Markov kernels $K_{k,n}$ as Gaussian random walk Metropolis updates.
The proposal density covariance matrix $\Sigma_m$ is proportional to $(2.38^2/d) \alpha(\phi_k)I_d$ modified adaptively 
so that the average acceptance rate over all particles is in the range 0.15 - 0.25. Here $\alpha(\phi_k)$ is some function of the annealing parameter $\phi_k$, e.g. $\alpha(\phi_k)=(\phi_k+2)/(\phi_k+1)$. We use an $S$ iterations of $K_{k,n}$, with $S$ varying between $15$ and $25$. 

In the next subsection, we test the algorithm on the above model with three choices of the function $q_n$ and three corresponding different scenarios for the observations. In the first example, the signal is fully observed and the time frequency is $\hat{k} = 1$. For the second example, the signal is fully observed and $\hat{k}=3$. In the third example, we observe around 40\% of the state coordinates and the time frequency is $\hat{k} = 1$. Then, we compare results from the lagged particle filter (LPF) \autoref{alg:LPF}, the ensemble Kalman filter (EnKF) \cite{even94,even03}, the ensemble transform Kalman filter (ETKF) \cite{bem} and the ETKF with square-root (ETKF-SQRT) \cite{njsh,so}.

\subsubsection{Linear-Gaussian Model}
\label{subsubsec:Linear Model}
Here, we set $q_n(x) = x$, $T=10^3$, $N = 100$, $N^* = 0.8~N$, $d=d_y=500$, $R_1^{1/2}=\frac{1}{\sqrt{2}}I_d$, $R_2^{1/2}=0.1I_{d_y}$, $C=I_d$, $X_0=1.5\cdot\textbf{1}$, where $\textbf{1}$ is a vector of ones. We consider a lag of 2, i.e., $L=1$. For $n\geq L+1$, we take $\mu_{n-L}(x_{n-L+1})=\mathcal{N}(x_{n-L+1}; m_{n-L+1},P_{n-L+1})$, where $m_{n-L+1}$ and $P_{n-L+1}$ are the mean and the covariance of the Kalman predictor distribution at time step $n-L+1$. The time frequency of the observations is $\hat{k}=1$, and thus $M=10^3$.

In the left panel of \autoref{fig:Linear_mean_errors}, we plot the average of $\pi_\cdot(\varphi)$ with $\varphi(x_{1:n}) = x_n$ after 104 independent runs. 
Here the subscript $\cdot$ refers to the filter of interest (LPF, EnKF, ETKF, ETKF-SQRT). Since this model is linear-Gaussian, we use the Kalman filter (KF) as a reference. % (R). 
The right panel of \autoref{fig:Linear_mean_errors} shows the relative absolute error for each filter.
%, that is: 
%$$|\pi_\cdot(\varphi) - \pi_{KF}(\varphi)|/\max\{|\pi_\cdot(\varphi)|, |\pi_{KF}(\varphi)|\}.$$
Additionally, the relative $L_2$-error $\|\pi_\cdot(\varphi) - \pi_{KF}(\varphi)\|_2 / \| \pi_{KF}(\varphi)\|_2$ is shown at the title of each subfigure. 
We elaborate further in \autoref{fig:Linear_histo} and present a histogram of the relative absolute errors. The histogram shows that for the same number of particles/ensembles, 
around 60\% of the LPF errors are less than 0.025 and only 23\% in the rest of the methods. Finally, \autoref{fig:Linear_1coord} shows a cloud plot of estimates of $\mathbb{E}(x_n^1|y_{1:n})$ for each independent run over time, where $1\leq n \leq T$ and $x_n^{1}$ being the first coordinate of $x_n$. The LPF method tracks the KF very well on average (as illustrated in \autoref{fig:Linear_1coord_zoom}) and it shows considerable less variability between different independent runs.
Similar comparisons will be carried for the other examples to follow.

\subsubsection{Lorenz 96 Model}
We test the algorithm on the Lorenz 96 model \cite{L96}, perturbed by a Gaussian noise, along with noisy observations.
\begin{figure}[h!]
\centering
\begin{subfigure}[c]{0.49\textwidth}
\includegraphics[width = \textwidth, height= 0.43\textheight]{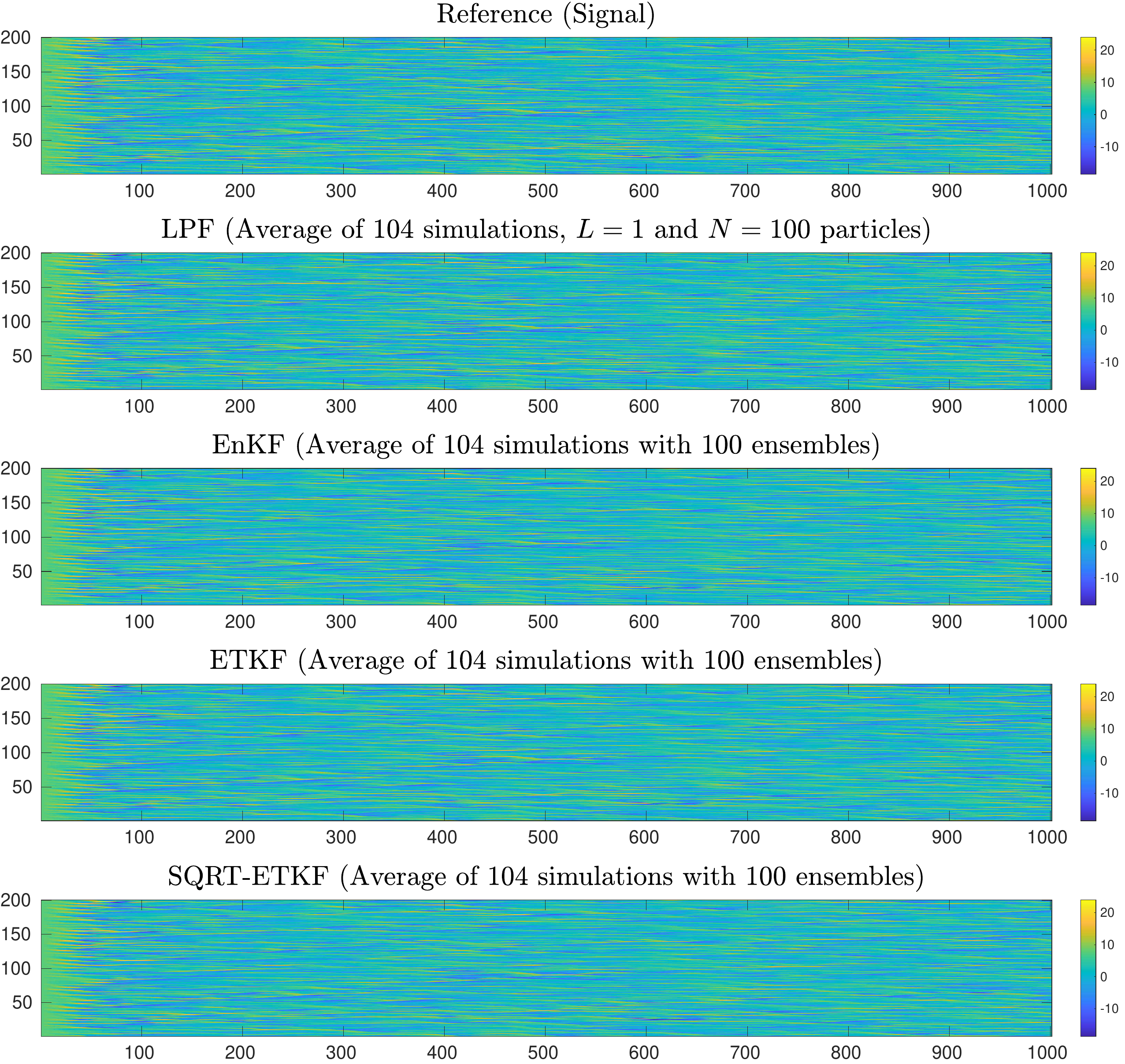}
\end{subfigure}
\begin{subfigure}[c]{0.49\textwidth}
\includegraphics[width =0.99\textwidth, height =0.43\textheight]{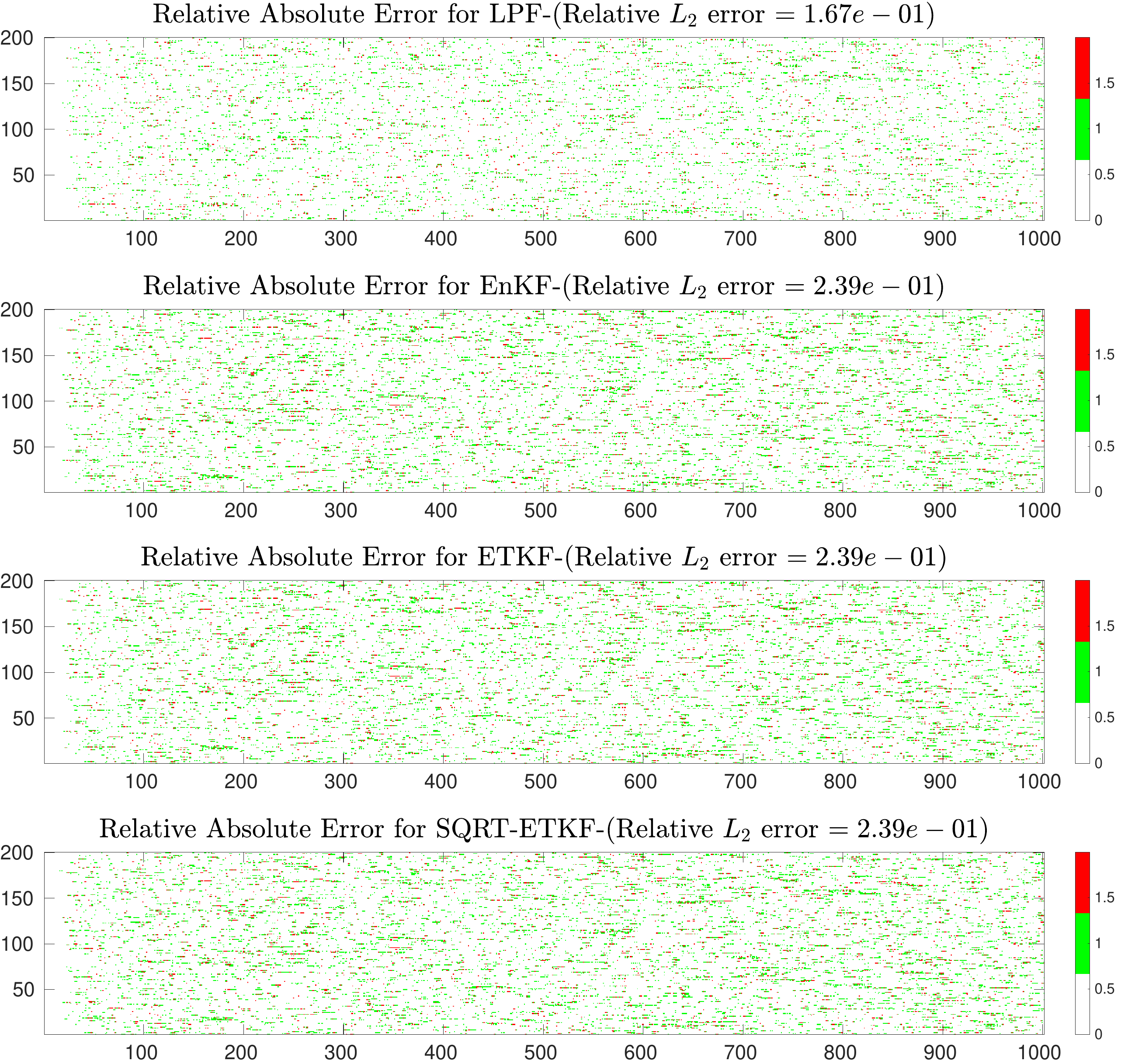}
\end{subfigure}
\caption{(Lorenz 96 Model) A comparison between the different filters. Details as in \autoref{fig:Linear_mean_errors} except that top-left panel shows the true signal \eqref{eq:signal} as reference.}
\label{fig:L96_mean_errors}
\end{figure}
\begin{figure}[h!]
\centering
\includegraphics[width =0.7\textwidth, height = 0.43\textheight]{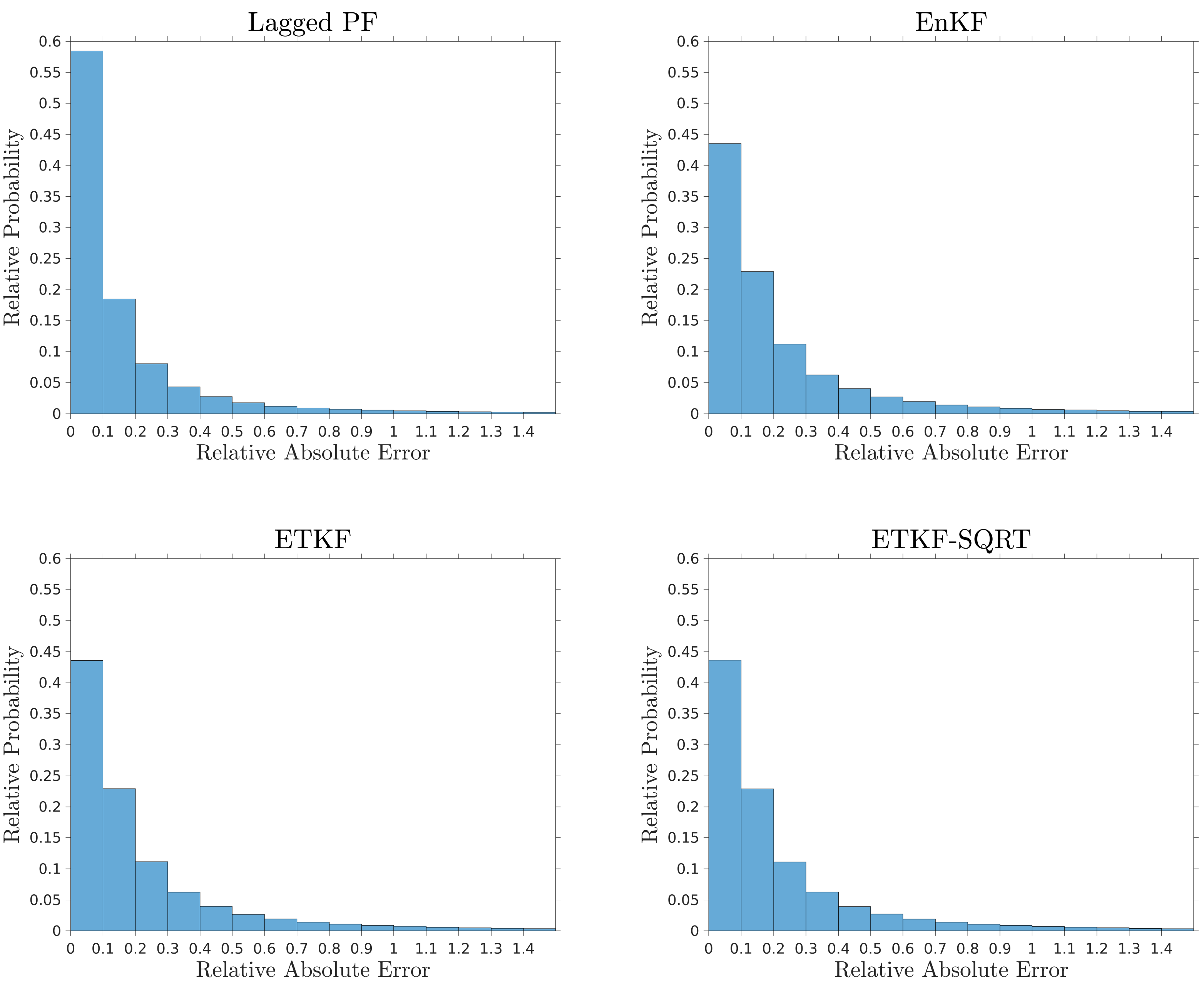}

\caption{(Lorenz 96 Model) A histogram of the relative absolute errors for each filter. Details as in \autoref{fig:Linear_histo}. %The relative probability here is defined as the number of elements in the bin divided by the total number of elements, which is $d \times (T+1)$.
}
\label{fig:L96_histo}
\end{figure}
\begin{figure}[h!]
\centering
\includegraphics[width =0.98\textwidth]{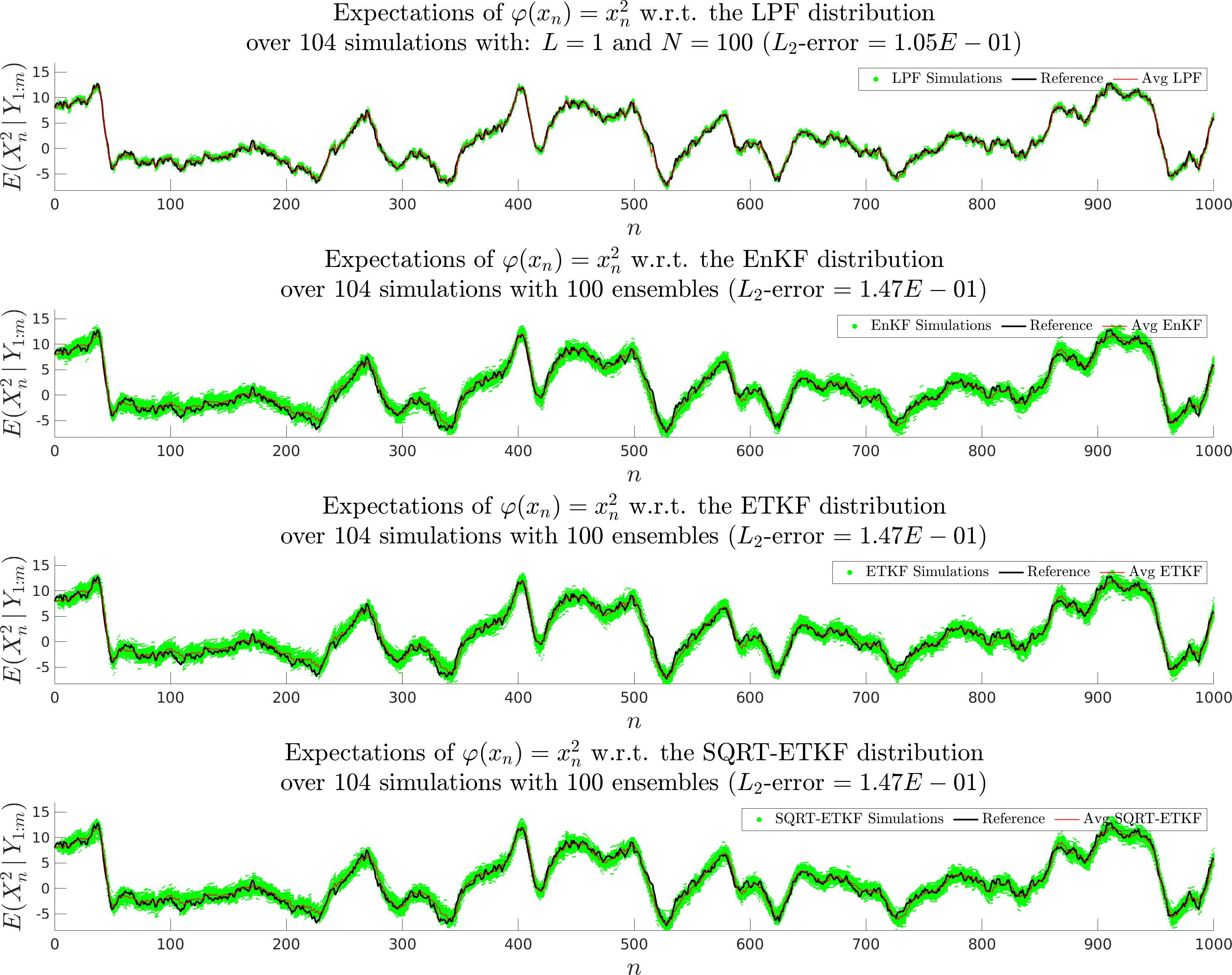}
\caption{(Lorenz 96 Model) Approximation clouds of $\pi_n(\varphi)$ against $n$ with $\varphi(x_{1:n})=x_n^{2}$ for each method where the observations arrive every 3 time steps. Details as in \autoref{fig:Linear_1coord} except that in the 
%as obtained by the~LPF, EnKF, ETKF and ETKF-SQRT algorithms. The green circles represent the different simulations of each filter. 
black curve the true signal is plotted as the reference.% (in this example it is the signal)
%and the red curve is the average of the simulations.
}
\label{fig:L96_1coord}
\end{figure}
The function $q_n(X_{n-1})$ is given by the 4th-order Runge Kutta approximation of the Lorenz 96 system 
\begin{align}
\frac{dX_{n-1}^{i}}{dt} = X_{n-1}^{i-1}(X_{n-1}^{i+1} - X_{n-1}^{i-2}) - X_{n-1}^{i} + 8, \label{eq:L96}
\end{align}
when run one step in time with step size $\Delta_t$. 
Here $X_{n-1}^{i}$ is the $i$th-component of $X_{n-1}$ with periodic conditions, $X_{n-1}^{0}=X_{n-1}^{d}$, $X_{n-1}^{-1}=X_{n-1}^{d-1}$ and $X_{n-1}^{d+1}=X_{n-1}^{1}$.

For the numerical results, we set $\mu_{n-L}(x_{n-L+1})=\mathcal{N}(x_{n-L+1}; m_{n-L+1},P_{n-L+1})$ for $n\geq L+1$, where $m_{n-L+1}$ and $P_{n-L+1}$ are the mean and covariance of the 
ETKF-SQRT predictor distribution at the time step $n-L+1$. We set $N=100$, $N^* = 0.6 N$, $T=10^3$, $\hat{k}=3$ (i.e. $M=333$), $d=d_y=200$, $R_1^{1/2}=0.5 I_d$, $R_2^{1/2}=0.2 I_{d_y}$, $\Delta_t =0.01$, $L=1$, $C=I_d$, $X_0^{20}=8.10$, 
$X_0^{i}=8$ for $1\leq i<20$, $20<i\leq d$. 

Our results are presented similarly to before. The left and right panels of \autoref{fig:L96_mean_errors} show approximations of  $\pi_n(\varphi)$ with $\varphi(x_{1:n}) = x_n$ and the associated relative absolute errors for LPF, EnKF, ETKF and ETKF-SQRT. 
The reference in this example is the true signal from which the data was generated. 
In \autoref{fig:L96_histo}, we present as before a histogram of the relative errors and observe that for the same number of particles/ensembles about 59\% of the LPF errors are less than 0.1 and 43\% in the rest of the methods.
Finally \autoref{fig:L96_1coord} shows the approximations of $\pi_n(\varphi)$ against time with $1\leq n \leq T$ and $\varphi(x_{1:n}) = x_n^{2}$. We can see from this figure that the 104 runs of the LPF method have low variance (which corresponds to a thinner green cloud) and on average is very close to the 2nd coordinate of the true signal.

\subsubsection{Conservative Shallow-Water Model}
We apply the algorithm on the conservative shallow-water model \cite{SWE1}. Function $q_n(X_{n-1})$ is the finite volume (FV) solution of the shallow-water equations (SWE) given below (one step in time with step size $\Delta_t$). Let $(x,y)\in \Omega = [0,a]^2$, for some $a>0$, and let $h(t,x,y)$, $v(t,x,y)$ and $u(t,x,y)$ represent the fluid column height, the fluid's horizontal velocity in the $x$-direction and the fluid's horizontal velocity in the $y$-direction, respectively, at position $(x,y)$ and time $t>0$. The conservative form of SWE is as follows
\begin{align*}
&\frac{\partial h}{\partial t} + \frac{\partial (hu)}{\partial x} + \frac{\partial (hv)}{\partial y} = 0,\\
&\frac{\partial (hu)}{\partial t} + \frac{\partial}{\partial x} (hu^2 +\tfrac{1}{2}gh^2) + \frac{\partial (huv)}{\partial y}  = 0,\\
&\frac{\partial (hv)}{\partial t} +  \frac{\partial (huv)}{\partial y}  + \frac{\partial}{\partial y} (hv^2 +\tfrac{1}{2}gh^2) = 0,
\end{align*} 
where $g$ is the gravitational acceleration. To write the equations in a compact form, we introduce three vectors $U=[h,hu,hv]^{\top}$, $A(U) = [hu,hu^2 +\frac{1}{2}gh^2,huv]^{\top}$ and $B(U) = [hv,huv,hv^2 +\frac{1}{2}gh^2]^{\top}$. Under this notation, we have
\begin{align}
\frac{\partial U}{\partial t} + \frac{\partial A(U)}{\partial x} + \frac{\partial B(U)}{\partial y} = 0.\label{eq:SWE}
\end{align}
\begin{figure}[h!]
\centering
\begin{subfigure}[c]{0.5\textwidth}
\includegraphics[width = 0.8\textwidth, trim={2.3cm 14.5cm 7.3cm 5.3cm},clip]{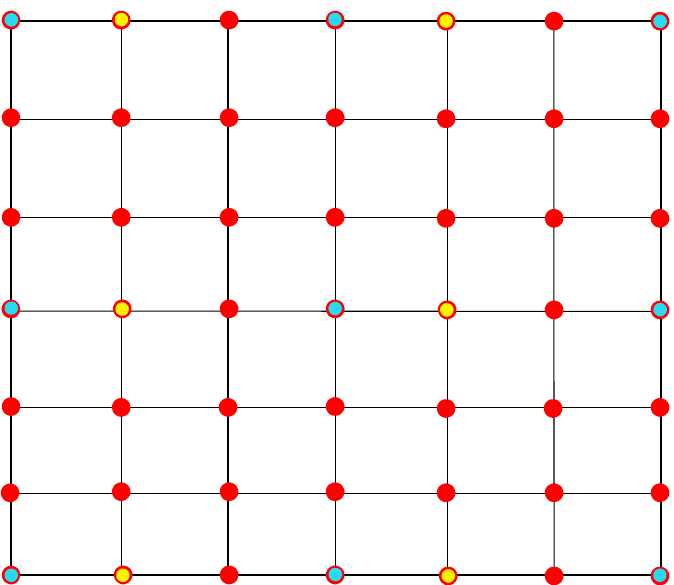}
\end{subfigure}
\begin{subfigure}[c]{0.45\textwidth}
\includegraphics[width =\textwidth, trim={0cm 0cm 0cm 2cm},clip]{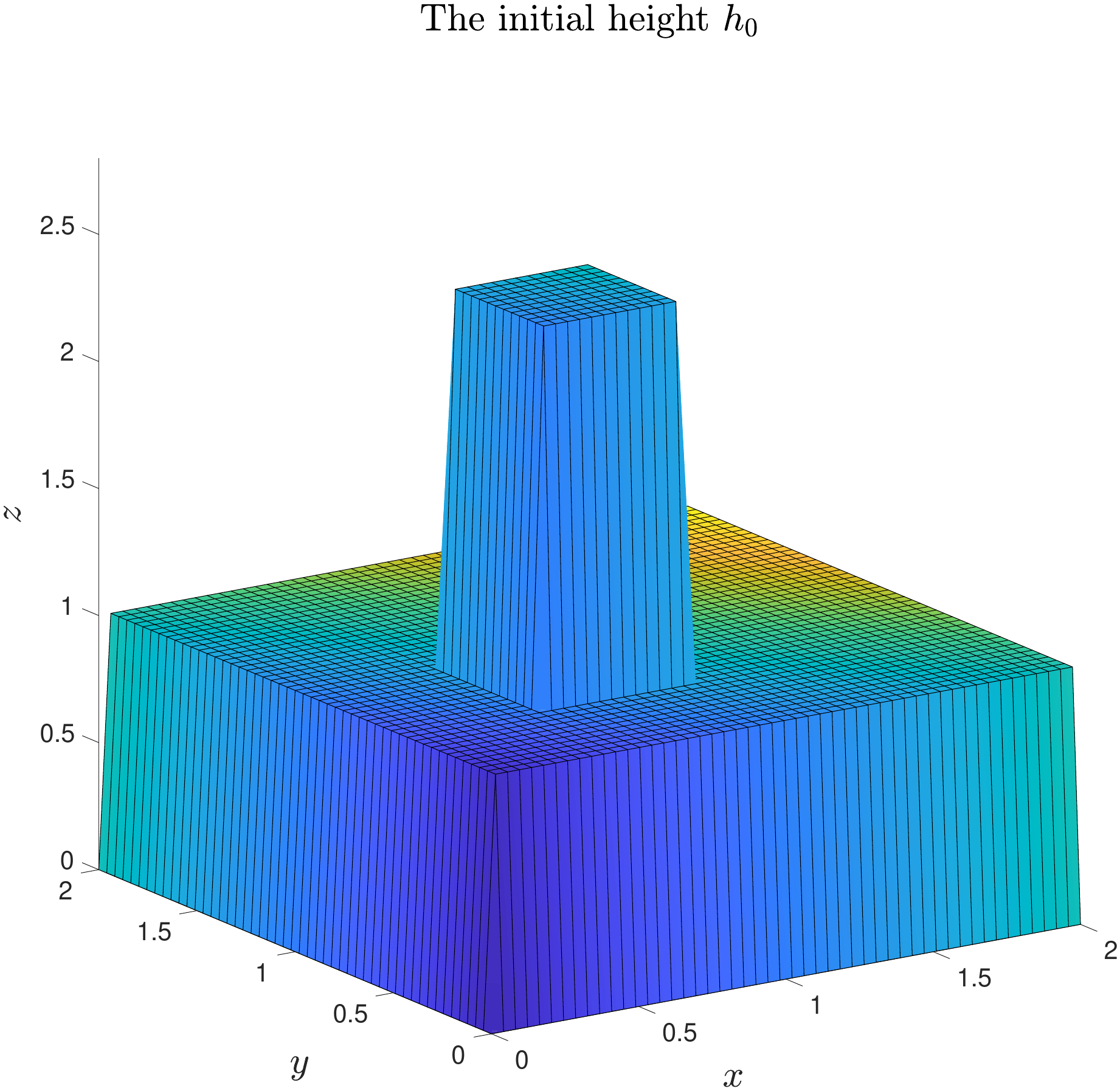}
\end{subfigure}
\caption{The figure on the left represents the physical grid where the red, cyan and yellow circles are the grid points at which $h$, $u$ and $v$ are observed, respectively. The figure on the right shows the initial height of the water.}
\label{fig:FV_grid}
\end{figure}

\begin{figure}[!h]
\centering
\includegraphics[width = 0.75 \textwidth]{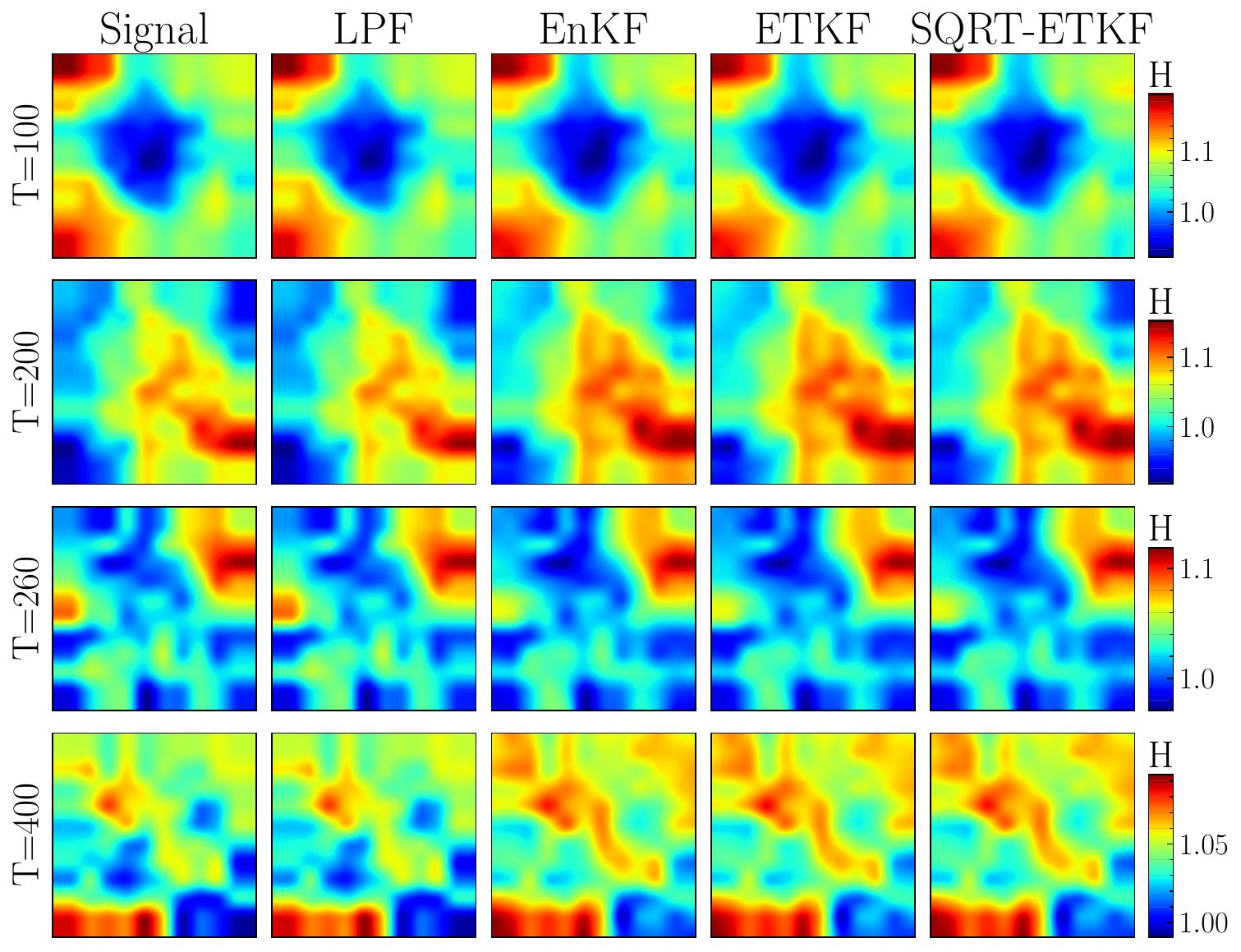}

\caption{(SWE Model) Snapshots of the height $h$ at four different times: 100 (first row), 200 (second row), 260 (third row) \& 400 (fourth row). From left to right: the signal, the mean of 50 simulations of the LPF with 100 particles and the mean of 50 simulations of the ENKF, ETKF \& ETKF-SQRT with 1000 ensembles, respectively.}
\label{fig:SWE_height}
\end{figure}

\begin{figure}[!h]
\centering
\includegraphics[width = 0.75\textwidth]{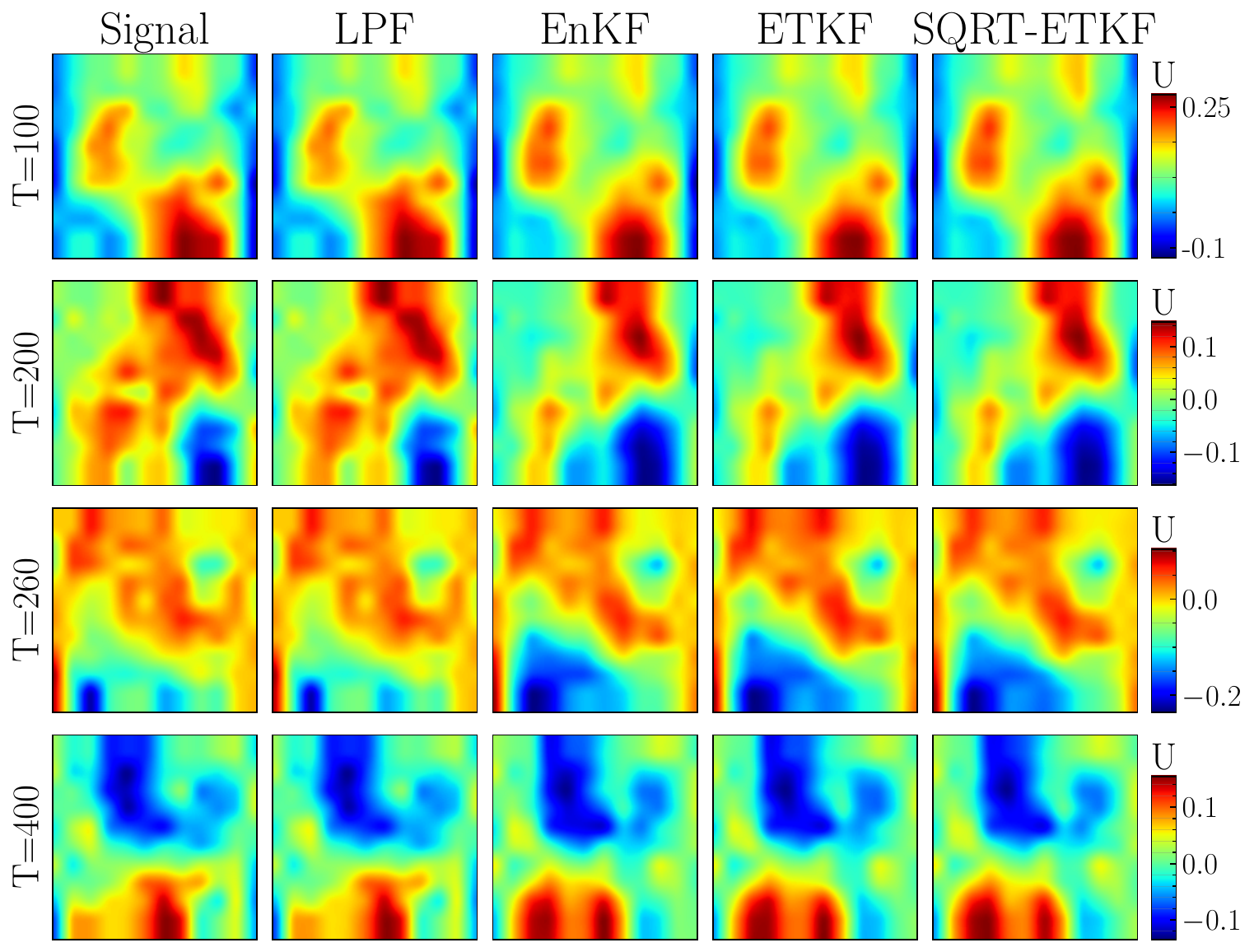}

\caption{(SWE Model) Snapshots of the horizontal $x$-axis velocity $u$ at four different times: 100 (first row), 200 (second row), 260 (third row) \& 400 (fourth row). From left to right: the signal, the mean of 50 simulations of the LPF with 100 particles and the mean of 50 simulations of the ENKF, ETKF \& ETKF-SQRT with 1000 ensembles, respectively.}
\label{fig:SWE_U}
\end{figure}
%%%

\begin{figure}[h!]
\centering
\includegraphics[width =0.75\textwidth]{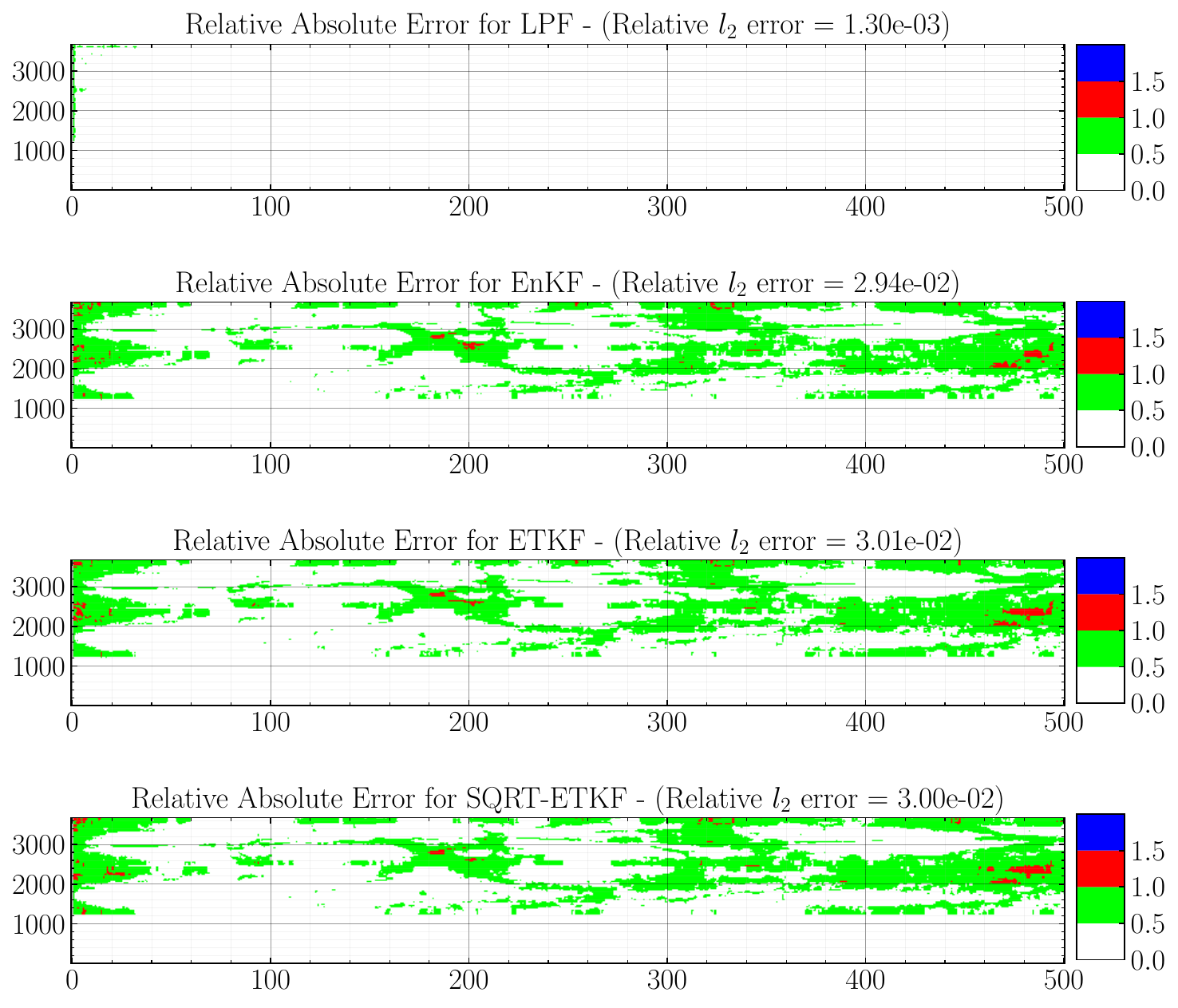}
\caption{(SWE Model) The relative absolute errors. The horizontal axis represents the time parameter $n$, while the vertical axis represents the signal's coordinates. We also include the relative $l_2$-norm in the titles.}
\label{fig:SWE_mean_abs_errors}
\end{figure}
% histogram
\begin{figure}[h!]
\centering
\includegraphics[width = 0.7\textwidth]{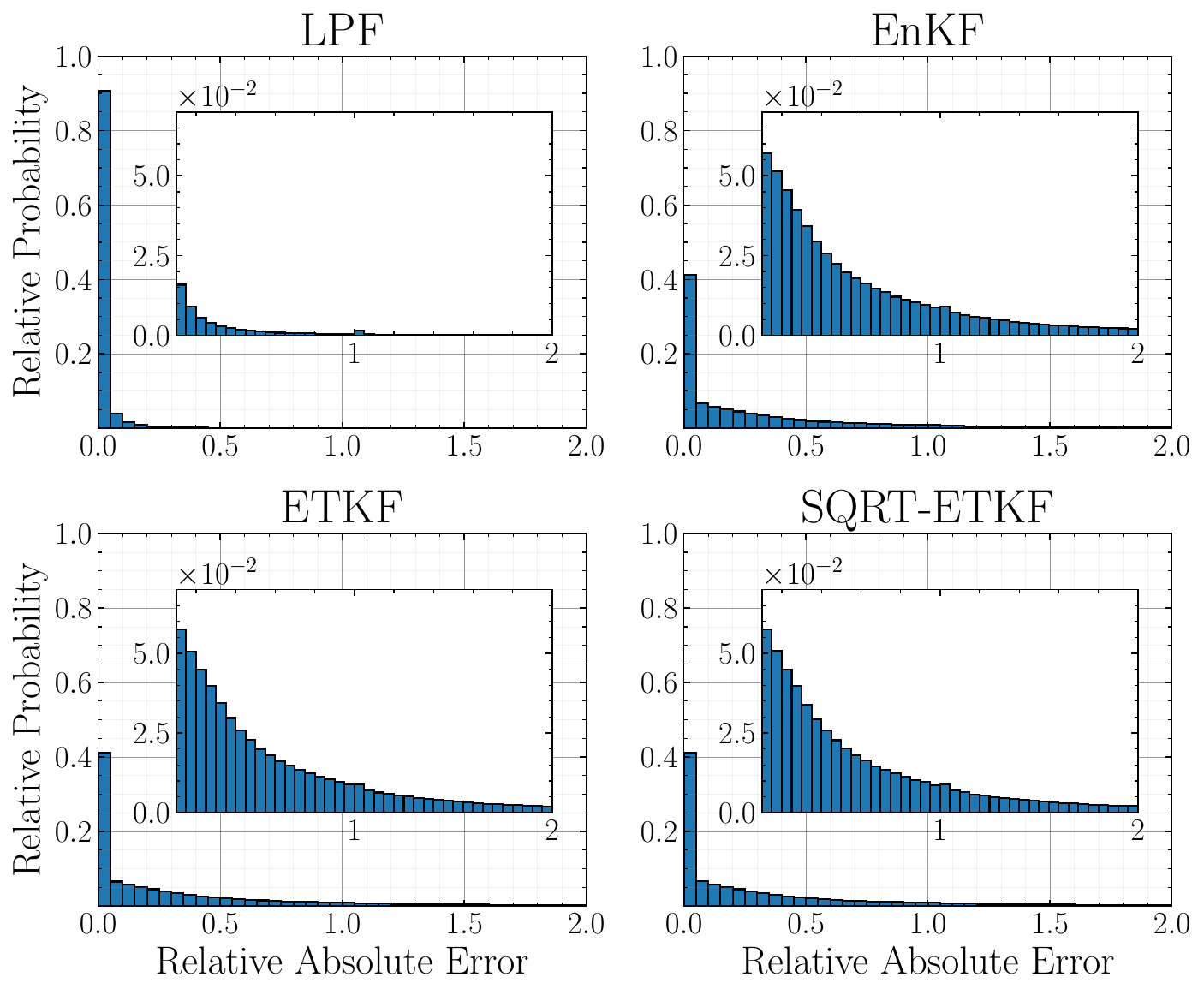}
\caption{(SWE Model) A histogram of the relative absolute errors for each filter. The relative probability here is defined as the number of elements in the bin divided by the total number of elements, which is $d \times (T+1)$. For readability, we zoom in the region $[0,2]\times [0,0.1]$ and include it in each histogram.}
\label{fig:SWE_hist}
\end{figure}
%%% first coordinate
\begin{figure}[!t]
\centering
\includegraphics[width = 0.88\textwidth]{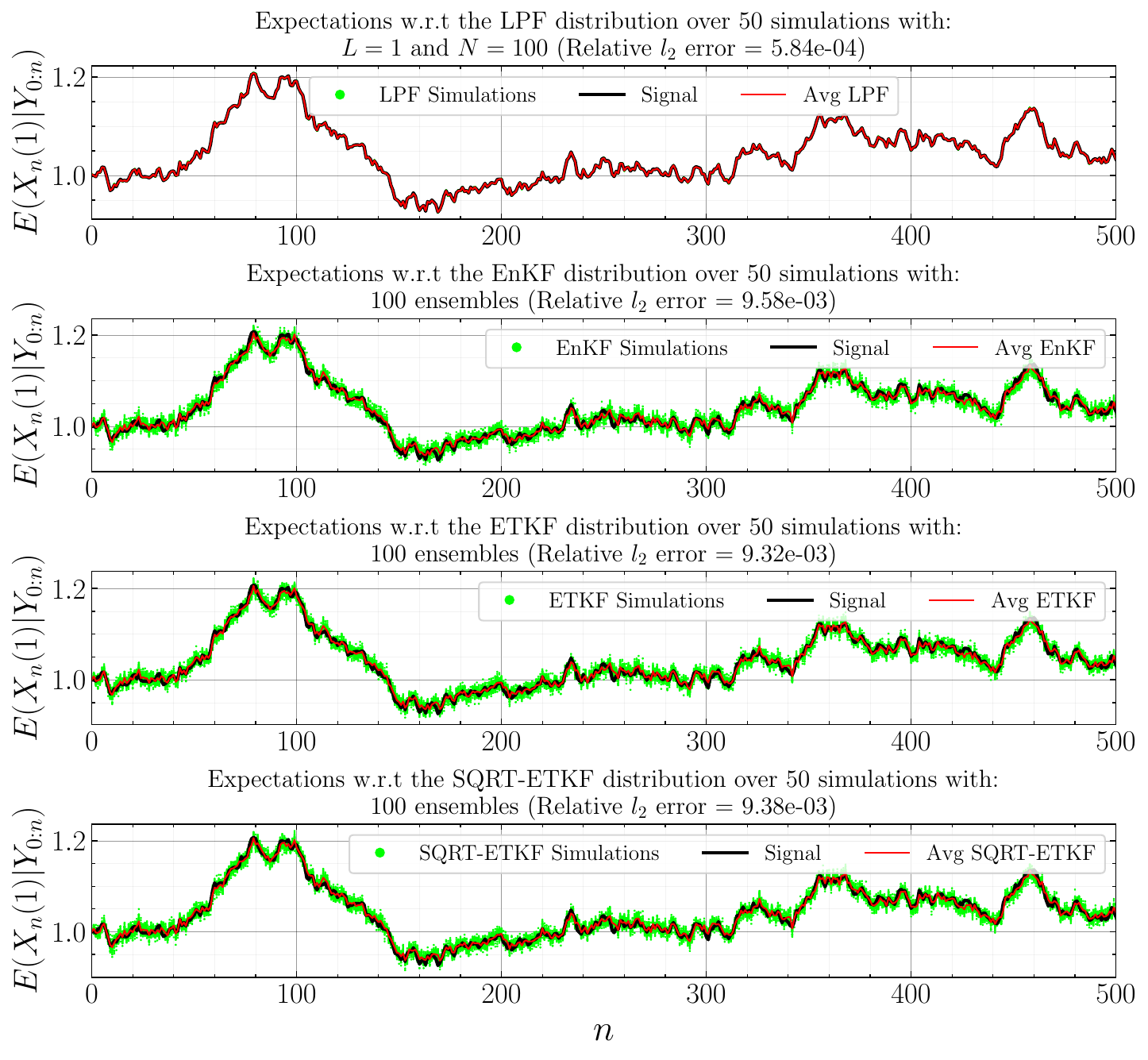}
\caption{(SWE Model) A comparison of the expectation of $\varphi(x_{1:n})=x_n^{1}$ as found by LPF, EnKF, ETKF and ETKF-SQRT. The green circles represent the different simulations of each filter. The black curve is the reference (in this example it is the signal) and the red curve is the average of the simulations.}
\label{fig:SWE_coord1}
\end{figure}
%%% d_g^2+1 coordinate
\begin{figure}[!t]
\centering
\includegraphics[width = 0.9 \textwidth]{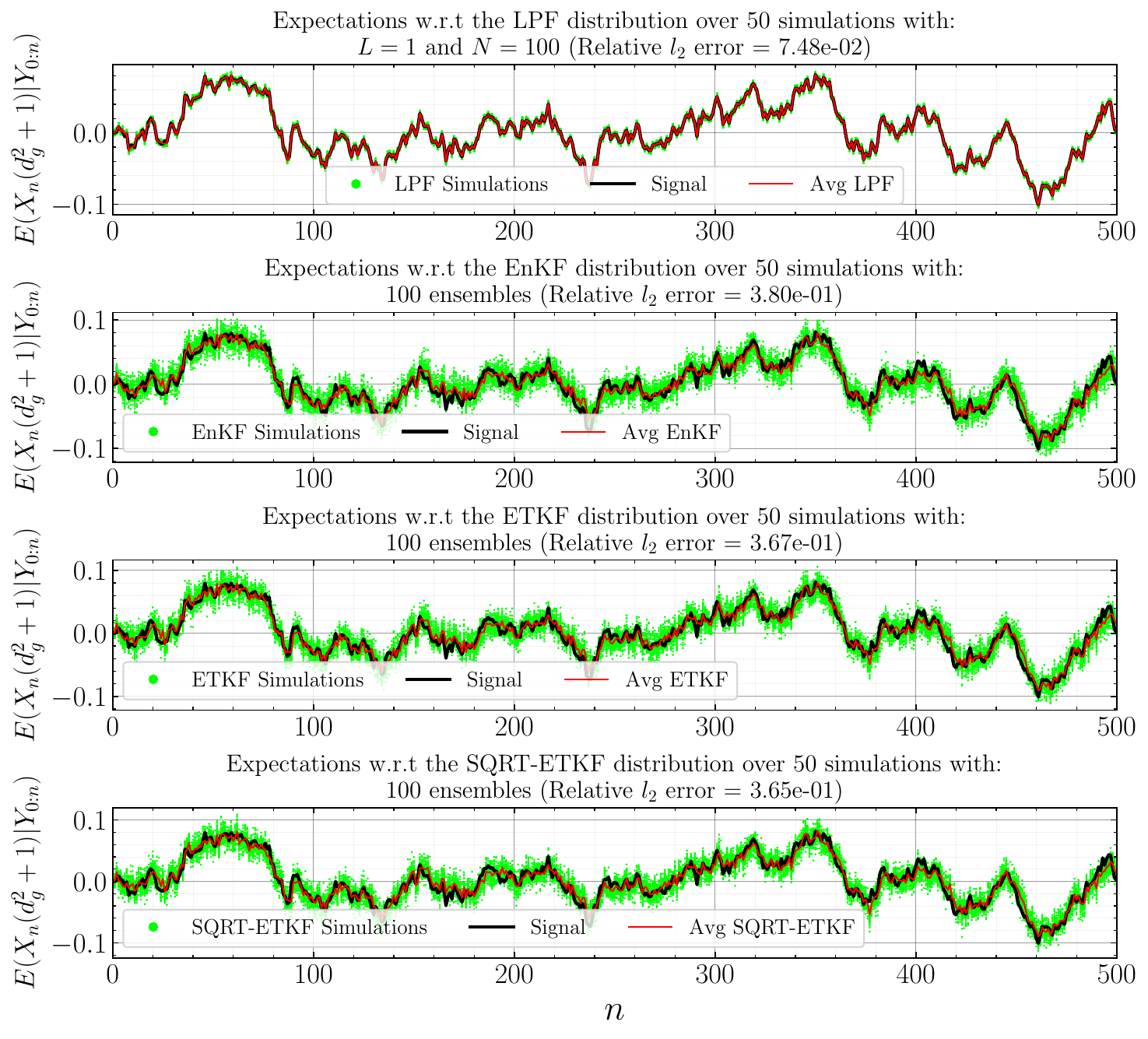}
\caption{(SWE Model) A comparison of the expectation of $\varphi(x_{1:n})=x_n^{d_g^2+1}$ as found by LPF, EnKF, ETKF and ETKF-SQRT. The green circles represent the different simulations of each filter. The black curve is the signal and the red curve is the average of the simulations.}
\label{fig:SWE_coord_dg2p1}
\end{figure}
Given the number of grid points in the $x$ and $y$ directions, $d_g$, we set $\Delta_x = \Delta_y = a/d_g$. We will refer to the grid $\big\{(x_i,y_j) \in \Omega:\, x_i = (i-1)\Delta_x, \, y_j = (j-1) \Delta_y, \, \text{for } i,j \in \{1,\ldots,d_g\}\big\}$ as the physical grid. Consider the uniform grid with FV cells $I_{i,j}=[x_{i-1/2},x_{i+1/2}] \times [y_{j-1/2},y_{j+1/2}]$ centered at $(x_i,y_j)=(\frac{x_{i-1/2}+x_{i+1/2}}{2},\frac{y_{j-1/2}+y_{j+1/2}}{2})$, for all $i,j \in \{0,\ldots, d_g+1\}$, with grid size of $(d_g+2)^2$. Then,
\begin{align*}
U_{i,j}^{n+1} = U_{i,j}^n - \frac{\Delta_t}{\Delta_x} (F_{i+\frac{1}{2},j}^n - F_{i -\frac{1}{2},j}^n) - \frac{\Delta_t}{\Delta_y} (G_{i,j+\frac{1}{2}}^n - G_{i ,j-\frac{1}{2}}^n),
\end{align*}
where $F$ and $G$ are the numerical Lax-Friedrichs fluxes given by
\begin{align*}
F_{i+\frac{1}{2},j}^n &= \tfrac{1}{2} [A(U_{i,j}^n)+A(U_{i+1,j}^n)]  - \tfrac{1}{2} \lambda_{i+\frac{1}{2},j,\max}^x [U_{i+1,j}^n - U_{i,j}^n]\\
F_{i-\frac{1}{2},j}^n &= \tfrac{1}{2} [A(U_{i,j}^n)+A(U_{i-1,j}^n)]  - \tfrac{1}{2} \lambda_{i-\frac{1}{2},j,\max}^x [U_{i,j}^n - U_{i-1,j}^n]\\
G_{i,j+\frac{1}{2}}^n &= \tfrac{1}{2} [B(U_{i,j}^n)+B(U_{i,j+1}^n)]  - \tfrac{1}{2} \lambda_{i,j+\frac{1}{2},\max}^y [U_{i,j+1}^n - U_{i,j}^n]\\
G_{i,j+\frac{1}{2}}^n &= \tfrac{1}{2} [B(U_{i,j}^n)+B(U_{i,j-1}^n)]  - \tfrac{1}{2} \lambda_{i,j-\frac{1}{2},\max}^y [U_{i,j}^n - U_{i,j-1}^n],\\
\end{align*}
where $\lambda_{i^*,j^*,\max}^x$ is the maximum eigenvalue of the Jacobian matrix $\partial A(U)/\partial U$ evaluated at $U_{i^*,j^*}^n$. The eigenvalues are $\{u_{i^*,j^*}\pm \sqrt{gh_{i^*,j^*}},u_{i^*,j^*}\}$. We set $\lambda_{i^*,j^*,\max}^x =|u_{i^*,j^*}|+\sqrt{gh_{i^*,j^*}}$. Similarly, $\lambda_{i^*,j^*,\max}^y$ is the maximum eigenvalue of the Jacobian matrix $\partial B(U)/\partial U$ evaluated at $U_{i^*,j^*}^n$ and we take it to be $|v_{i^*,j^*}|+\sqrt{gh_{i^*,j^*}}$. The initial value conditions are $h(0,x,y) = h_0(x,y)$ and $v(0,x,y) = u(0,x,y) = 0$ for $(x,y)\in \Omega$. We use reflective boundary conditions with $u(t,0,y)=u(t,a,y)=0$ and $v(t,x,0) = v(t,x,a) = 0$. The dimension of the state vector in \eqref{eq:signal} is $d=3d_g^2$. 

We write $(h_{i,j})_{1\leq i,j \leq d_g }$, $(u_{i,j})_{1\leq i,j \leq d_g }$ and $(v_{i,j})_{1\leq i,j \leq d_g }$ in a vector representation as $(h_i)_{1\leq i\leq d_g^2}$, $(u_i)_{1\leq i\leq d_g^2}$ and $(v_i)_{1\leq i\leq d_g^2}$. Then we write the state vector $X_n = (X_n^i)_{1\leq i\leq d}$ in \eqref{eq:signal} as
$$
X_n = [(h_{i})_{1\leq i \leq d_g^2 }, (u_{i})_{1\leq i \leq d_g^2 },(v_{i})_{1\leq i \leq d_g^2 } ]^\top \in \mathbb{R}^d.
$$
We take $a=2$, $d_g=35$ (so $d=3675$) and $h_0(x,y) = 2.5$ for $ 0.5 \leq x,y \leq 1$ and one elsewhere. The spatial frequency at which height is observed is 1, i.e., it is observed at all grid points, while the spatial frequency at which $u$ and $v$ are observed is $3$. However, the first component that is observed in $u$ is $X_n^{d_g^2+1}$ and the first component that is observed in $v$ is $X_n^{2d_g^2+2}$ (see \autoref{fig:FV_grid}). Therefore we take the matrix $C$ as
\begin{align*}
C = \left[\begin{array}{cccc|cccc|cccc}
\bm{e}_1 & \bm{e}_2 & \cdots & \bm{e}_{d_g^2}   & \bm{e}_{d_g^2+1} & \bm{e}_{d_g^2+4} & \bm{e}_{d_g^2+7} &
\cdots & \bm{e}_{2d_g^2+2} & \bm{e}_{2d_g^2+5} & \bm{e}_{2d_g^2+8} & \cdots
\end{array} \right]^\top,
\end{align*}
where $\bm{e}_\ell$ is a vector in $\mathbb{R}^{d}$ with one at position $\ell$ and zeros everywhere else. We set $N=100$, $N^*=0.5~N$, $T=500$, $\hat{k}=1$ (i.e. $M=500$), $R_1^{1/2}=0.01 I_d$, $R_2^{1/2}=0.01 I_{d_y}$ and 
$$
\Delta_t =\frac{\text{CFL} \times  \Delta_x}{\max\left\{\max_{0 \leq i,j \leq d_g+1}\left(\lambda_{i\pm \frac{1}{2},j,\max}^x \right),\max_{0 \leq i,j \leq d_g+1}\left(\lambda_{i,j\pm \frac{1}{2},\max}^y\right )\right\}},
$$ 
where CFL is the Courant-Friedrich-Levy number which we set to $0.5$. The approximation $\mu_{n-L}$ used here is a Gaussian with mean and covariance obtained from the prediction step of the EnKF with 1000 ensemble size.

\autoref{fig:SWE_height} and \autoref{fig:SWE_U} display, respectively, snapshots of the perturbed and filtered water's height and water's horizontal $x$-axis velocity at times 100, 200, 260 \& 400 
using the different methods. We ran 50 independent simulations of each of the EnKF, ETKF and ETKF-SQRT methods with 1000 ensembles because they provided unreliable estimates for lower ensemble sizes. We ran 50 simulations of the LPF with only 100 particles. \autoref{fig:SWE_mean_abs_errors} shows the relative absolute errors for each filter. In \autoref{fig:SWE_hist} a histogram of the relative absolute errors is provided. 
From the histograms we note that around 91\% of the relative absolute errors of the LPF are less than 0.01 whereas in the other methods they are around 41\%. Finally, \autoref{fig:SWE_coord1} and \autoref{fig:SWE_coord_dg2p1} show 
respectively cloud plots of the approximations of 
$\mathbb{E}\left(x_n^1|y_{1:n}\right)$ and $\mathbb{E}\left(x_n^{d_g^2+1}|y_{1:n}\right)$ for $1\leq n \leq T$. As in the linear Gaussian example the variability is much less for the LPF.
All plots show that the LPF with only 100 particles outperforms the other methods in terms of accuracy and this is illustrated by the one order of magnitude improvement in terms of relative $L_2$ errors.

\section{Discussion}

In the literature fixed-lag methods have been mainly in the context of improving smoothing approximations. 
The original motivation was to alleviate path degeneracy and justify biased approximations that resample only paths within a fixed lag to the current time, \cite{kita,olsson}. 
Later these ideas have been complemented with MCMC for the purpose of parameter estimation \cite{polson_practical}. In addition, SMC sampling techniques 
for a fixed lag have been used without bias to enhance particle filtering \cite{doucet06fixedlag}. All these contributions are very close to the spirit of this paper, 
but to the best of our knowledge this paper is the first to use fixed lag sampling ideas for high dimensional SSMs. 

We have proposed a fixed-lag particle filter that under certain conditions on the model has provably stable errors as the dimension of the model grows. 
Our proposed algorithm also incorporates adaptive tempering steps, which have been crucial 
for other successful SMC approaches for high-dimensional problems \cite{llopis,cotter}. Tempering was also crucial in the theory of \cite{beskos,beskos1}, which was originally developed for independent states. 
The conditions used in this paper are general enough to go beyond this and allow for dependency between some coordinates of the states or observations and at the same time be able to 
apply the results in \cite{beskos,beskos1}. As mentioned earlier, our algorithm  inherently contains a bias as other similar fixed-lag approaches, 
but we have shown this bias to be bounded and controlled by the lag parameter $L$ and the number of particles $N$. 
Regarding performance, that of the proposed method was found to be superior to commonly used methods such as the EnKF, ETKF in our numerical examples. 
In particular the results in the challenging shallow water equation showed impressive improvement and accuracy. Future work can include more case studies 
in even more challenging applications and comparisons with current state-of-art methods.

\subsubsection*{Acknowldegements}

AJ \& HR were supported by KAUST baseline funding. The work of DC has been partially supported by European Research Council (ERC) Synergy grant STUOD-DLV-8564. NK was supported by a J.P. Morgan A.I. Research Award.
\appendix

\section{Technical Results}

For completeness, we describe briefly a result  that expresses in a more general formulation  findings from the works in  \cite{beskos, beskos2}. 
Consider a sequence of distributions on the space  $\mathsf{E}^{\mathsf{L}d}$, with $\mathsf{E}\subset \mathbb{R} $ assumed compact and a fixed $\mathsf{L}\in \mathbb{N}$. We define the bridging sequence of densities:
\begin{align*}
\mathsf{\Pi}_k := \widetilde{\mathsf{\Pi}}^{1-\phi_k}\times  \mathsf{\Pi}^{\phi_k}, \qquad \phi_1=0, \quad \phi_{k+1}-\phi_k = \tfrac{1}{d}, \quad k=1,2,\ldots, d+1,
\end{align*}
The initial and target densities, $\widetilde{\mathsf{\Pi}}:\mathsf{E}^{\mathsf{L}d}\to\mathbb{R}$ and $\mathsf{\Pi}:\mathsf{E}^{\mathsf{L}d}\to \mathbb{R}$ respectively,  admit the factorisation, for fixed $m\in \mathbb{N}$, 
$\widetilde{\mathsf{\Pi}}^{(m)}, 
\mathsf{\Pi}^{(m)}\in\mathcal{P}(\mathsf{E}^{\mathsf{L}m})$,  and
$\widetilde{\mathsf{\Pi}}^{1}, \mathsf{\Pi}^{1}\in \mathcal{P}(\mathsf{E}^{\mathsf{L}})$:
\begin{align*}
\widetilde{\mathsf{\Pi}}(x_{1:\mathsf{L}}) &= \widetilde{\mathsf{\Pi}}^{(m)}\big(x_{1:\mathsf{L}}^{1:m}\big)\times \prod_{j=m+1}^{d} \widetilde{\mathsf{\Pi}}^{1}(x_{1:\mathsf{L}}^{j}),\\ 
\Pi(x_{1:\mathsf{L}}) &= \mathsf{\Pi}^{(m)}\big(x_{1:\mathsf{L}}^{1:m}\big)\times \prod_{j=m+1}^{d} \mathsf{\Pi}^{1}(x_{1:\mathsf{L}}^{j}).
\end{align*}
Note that we have indexed the $\mathsf{L}d$ co-ordinates in an array format, using two indicators, one subscript for the `block', $l=1,\ldots,\mathsf{L}$, and one superscript for placement in a block, $j=1,\ldots, d$.
Consider related Markov kernels $\mathsf{K}_k:\mathsf{E}^{\mathsf{L}d}\times \mathcal{B}(\mathsf{E}^{\mathsf{L}d})\to   [0,1]$, $k=1,\ldots, d+1$, such that $\mathsf{\Pi}_k\mathsf{K}_k=\mathsf{\Pi}_k$. Also, kernels admit a factorisation, for $\mathsf{K}_{k}^{(m)}:\mathsf{E}^{\mathsf{L}m}\times \mathcal{B}(\mathsf{E}^{\mathsf{L}m})\to   [0,1]$ and $\mathsf{K}^{1}_{k}:\mathsf{E}^{\mathsf{L}}\times \mathcal{B}(\mathsf{E}^{\mathsf{L}})\to [0,1]$:
\begin{align*}
\mathsf{K}_k\big( x_{1:\mathsf{L}}, dz_{1:\mathsf{L}}\big) := \mathsf{K}_{k}^{(m)}
\big( x_{1:\mathsf{L}}^{1:m}, dz_{1:\mathsf{L}}^{1:m}\big) \times  \prod_{j=m+1}^d \mathsf{K}^{1}_{k}\big(x_{1:\mathsf{L}}^{j}, dz_{1:\mathsf{L}}^{j}  \big),
\end{align*}
so that  $\mathsf{\Pi}_{k}^{(m)} \mathsf{K}_{k}^{(m)} = \mathsf{\Pi}_{k}^{(m)}$, 
$\mathsf{\Pi}_{k}^{1} \mathsf{K}_{k}^{1} = \mathsf{\Pi}_{k}^{1}$, with 
$\mathsf{\Pi}_{k}^{(m)}$, $\mathsf{\Pi}_{k}^{1}$ defined in an obvious way.
One now applies the SMC sampler in \autoref{alg:smc_samp}, in the setting we have formulated herein, under the more general consideration that the initial positions from the $N$ particles are chosen arbitrarily from within the state space $\mathsf{E}^{\mathsf{L}d}$. The sequence of importance sampling  weights are given as:
\begin{align*}
W_{k+1}^{(i)} = W_{k}^{(i)} \times    \big(\tfrac{\mathsf{\Pi}}{\widetilde{\mathsf{\Pi}}}\big)\big(x_{1:\mathsf{L},k}^{(i)}\big)      ^{\phi_{k+1}-\phi_k}, \qquad k=2,\ldots, d, \quad i=1,2,\ldots, N.
\end{align*}
We denote here by $x_{l,k}^{(i)}$ the $l$th `block', $l=1,\ldots, \mathsf{L}$, of 
$i$th particle after the application of kernels $\mathsf{K}_2, \ldots, \mathsf{K}_k$, $k=2,\ldots, d+1$.
Simple calculations give that:
\begin{align*}
\log W_{d+1}^{(i)} &=  \tfrac{1}{d}\sum_{k=1}^{d} \log\Big(\big(\tfrac{\mathsf{\Pi}}{\widetilde{\mathsf{\Pi}}}\big)\big(x_{1:\mathsf{L},k}^{(i)}\big)\Big)\\
&=  \tfrac{1}{d}\sum_{k=1}^{d} \log\Big(\big(\tfrac{\mathsf{\Pi}^{(m)}}{\widetilde{\mathsf{\Pi}}^{(m)}}\big)\big(x_{1:\mathsf{L},k}^{(i),1:m}\big)\Big) +  \tfrac{1}{d}\sum_{k=1}^{d}\sum_{j=m+1}^{d} \log\Big(\big(\tfrac{\mathsf{\Pi}^{1}}{\widetilde{\mathsf{\Pi}}^{1}}\big)\big(x_{1:\mathsf{L},k}^{(i),j}\big)\Big).
\end{align*}
Notice the two above sums involve two \emph{independent} Markov sequences of random variables. 

The work in  \cite{beskos} can now provide the following results. 
We adopt the notation for a \emph{continuum} of involved distributions and Markov kernels, by using subscripts $s,t\in[0,1]$ below, at the same position that we have so far used $n\in\mathbb{N}$. Also,  
the $\sigma$-algebra $\mathcal{F}_0$ used below corresponds to the information about the initial position of the Markov chains. 

\begin{ass}
\label{ass:G}
Consider the following assumptions:
\begin{itemize}
\item[(A1)]
There exist $\theta\in(0,1)$,  $(\zeta^{(m)},\zeta)\in\mathscr{P}(\mathcal{B}(\mathsf{E}^{m \mathsf{L}}))\times
\mathscr{P}(\mathcal{B}(\mathsf{E}^{\mathsf{L}}))$, such that for each $s\in[0,1]$:
%
%
%
%and any $(\overline{x}_{1:L+1},A)\in\mathsf{X}^{L+1}\times\overline{\mathscr{X}}^{L+1}$
%
\begin{align*}
\mathsf{K}_{s}^{(m)}
\big(x_{1:\mathsf{L}}^{1:m},A\big) & \geq  \theta\,\zeta^{(m)}(A),\quad  \textrm{for all }(x_{1:\mathsf{L}}^{1:m},A)\in \mathsf{E}^{m\mathsf{L}}\times\mathcal{B}(\mathsf{E}^{m\mathsf{L}}), \\[0.2cm]
\mathsf{K}_{s}(x_{1:\mathsf{L}},A) & \geq  \theta\,\zeta(A),\quad \textrm{for all } (x_{1:\mathsf{L}},A)\in\mathsf{E}^{\mathsf{L}}\times\mathcal{B}(\mathsf{E}^{\mathsf{L}}).
\end{align*}
\item[(A2)]
There exists $C<\infty$ such that
for any $s,t\in[0,1]$ we have
\begin{align*}
\sup_{x_{1:\mathsf{L}}^{1:m}\in \mathsf{E}^{m\mathsf{L}}}\big\|\mathsf{K}_{s}^{(m)}(x_{1:\mathsf{L}}^{1:m},\cdot)-\mathsf{K}_{t}^{(m)}(x_{1:\mathsf{L}}^{1:m},\cdot)\big\|_{\mathrm{TV}} & \leq  C|s-t|,\\
\sup_{x_{1:\mathsf{L}}\in \mathsf{E}^{\mathsf{L}}}\big\|\mathsf{K}_{s}(x_{1:\mathsf{L}},\cdot)-\mathsf{K}_{t}(x_{1:\mathsf{L}},\cdot)\big\|_{\mathrm{TV}} & \leq  C|s-t|.
\end{align*}
\item[(A3)]
There exists a $C<\infty$ such that:
%\begin{itemize}
%\item[i)]
$$\max\left\{\sup_{x^{1:m}_{1:\mathsf{L}}\in \mathsf{E}^{m\mathsf{L}}}\left|\log\left(\frac {\mathsf{\Pi}^{(m)}}{\widetilde{\mathsf{\Pi}}^{(m)}}\right)(x_{1:\mathsf{L}}^{1:m})\right|,\sup_{x_{1:\mathsf{L}}\in 
\mathsf{E}^{\mathsf{L}}}
\left|\log\left(\frac {\mathsf{\Pi}^{1}_1}{\widetilde{\mathsf{\Pi}}_1}\right)(x_{1:\mathsf{L}})\right|\right\}\leq C.$$
%\item[ii)] $\max\{\sup_{(x^{1:m},z^{1:m})\in \mathsf{E}^{2m}}|\tilde{f}(x^{1:m},z^{1:m})|,\sup_{(x,z)\in E^2}|\overline{f}(x,z)|\}\leq C$;
%\item[iii)] $\max\{\sup_{(x^{1:m},k)\in \mathsf{E}^m\times\mathbb{N}}|\tilde{\mu}_k(x^{1:m})|,\sup_{(x,k)\in \mathsf{E}\times\mathbb{N}}|\overline{\mu}_k(x)|\}\leq C$.
%\end{itemize}
\end{itemize}

\end{ass}

\begin{theorem}
\label{th:G}
Under \autoref{ass:G}:
\begin{itemize}
\item[(i)] We have
%\begin{align*}
$\log W_{d+1}^{(i)}  = o(1) + \overline{W}_{d+1}^{(i)}$,
%\end{align*}
%
where $o(1)$ denotes here a term that converges weakly to~$0$ as $d\rightarrow\infty$, and  we have defined: 
\begin{align*}
\log \overline{W}_{d+1}^{(i)}  &=\frac{1}{d}\sum_{k=1}^{d}\sum_{j=m+1}^{d} \left\{\, \log\left(\left(\frac{\mathsf{\Pi}^{1}}{\widetilde{\mathsf{\Pi}}^1}\right)\left(x_{1:\mathsf{L},k}^{(i),j}\right)\right)  - 
\mathbb{E}\,\left[\,\log\left(\left(\frac{\mathsf{\Pi}^{1}}{\widetilde{\mathsf{\Pi}}^{1}}\right)\left(x_{1:\mathsf{L},k}^{(i),j}\right)\right)  \,\Big|\, \mathcal{F}_0 \right]\, \right\}. 
\end{align*}
The following weak limit holds, as $d\rightarrow \infty$:
\begin{align*}
\log \overline{W}_{d+1}^{(i)}  \rightarrow_D \mathcal{N}(0, \sigma^{2}),
\end{align*}
for variance 
$$\sigma^2 = \int_{0}^{1}\Big\{\mathsf{\Pi}^{1}_s\big((\widehat{\varphi}_s)^2\big) - \big(\mathsf{\Pi}^{1}_s(\widehat{\varphi}_s)\big)^2\Big\}ds,$$ where $\widehat{\varphi}_s$ is the solution to the Poisson equation:
\begin{align*}
\varphi_s-\mathsf{\Pi}^{1}_s(\varphi_s) = \widehat{\varphi}_s - \mathsf{K}_{s}^{1}(\widehat{\varphi}_s), 
\end{align*}
for $\varphi_s = \varphi_s(x_{1}^{j},\ldots, x_{\mathsf{L}}^{j}) =   s\cdot \log\Big(\big(\tfrac{\mathsf{\Pi}^{1}}{\widetilde{\mathsf{\Pi}}^1}\big)\big(x_{1:\mathsf{L}}^{j}\big)\Big) $.
\item[(ii)] We have the weak limit, as $d\rightarrow\infty$, for any fixed $j^{*}\ge m+1$:
\begin{align*}
 x_{1:\mathsf{L},d+1}^{(i),j^*}  \rightarrow_D  
\mathsf{\Pi}^{1},
\end{align*}
and the one:
\begin{align*}
x_{1:\mathsf{L},d+1}^{(i),1:m} \rightarrow_D  
\mathsf{\Pi}^{(m)}.
\end{align*}

\end{itemize}
The weak limits above are jointly independent over particle index $i=1,\ldots, N$,  and  over the set of co-ordinates $\{1:m\}, m+1, \ldots, M^{\ast}$, for any fixed $M^{\ast}>m$.
\end{theorem}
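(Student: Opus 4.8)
The plan is to reduce the analysis to the i.i.d.-coordinate setting already treated in \cite{beskos}, and to show that the $m$-dimensional block enters only as an asymptotically negligible perturbation. I would start from the exact decomposition of $\log W_{d+1}^{(i)}$ displayed just above the statement: a single sum over SMC time $k$ of the $m$-block log-ratio $\log(\mathsf{\Pi}^{(m)}/\widetilde{\mathsf{\Pi}}^{(m)})$, plus a double sum over $k$ and over the coordinates $j=m+1,\ldots,d$ of the i.i.d.\ log-ratio $\log(\mathsf{\Pi}^{1}/\widetilde{\mathsf{\Pi}}^{1})$. Centering each summand by its $\mathcal{F}_0$-conditional mean, the centered double sum is \emph{exactly} $\log\overline{W}_{d+1}^{(i)}$, while the centered single sum is the candidate for the $o(1)$ term.

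For the $m$-block I would argue that its centered contribution vanishes weakly. Since $m$ is fixed, the single sum contains $d$ terms, each uniformly bounded by (A3), multiplied by the increment $1/d$; its centered version is a normalised sum of bounded, conditionally mean-zero increments along a slowly time-inhomogeneous Markov chain. The uniform minorization (A1) yields a geometric mixing bound on the covariances $\mathrm{Cov}(Z_k,Z_{k'})\le C(1-\theta)^{|k-k'|}$, so the double sum of covariances is $O(d)$ rather than $O(d^2)$; this gives a conditional second moment of order $1/d$, whence the centered $m$-block is $o(1)$ in $L^2$ and hence weakly. The per-step drift of the kernels, bounded by $C/d$ through (A2), is what legitimises treating the temperature $s=k/d$ as nearly frozen over short time blocks.

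The heart of the proof is the Gaussian limit for $\log\overline{W}_{d+1}^{(i)}$, where I would invoke directly the machinery of \cite{beskos}, since the double sum over $j\ge m+1$ is precisely the object analysed there: a sum of $d$ \emph{independent} (across $j$) normalised, centered additive functionals of a time-inhomogeneous Markov chain. For each $j$ I would introduce the solution $\widehat{\varphi}_s$ of the stated Poisson equation to obtain a martingale-plus-remainder decomposition; (A1) guarantees $\widehat{\varphi}_s$ is bounded and that the remainder telescopes to a negligible boundary term, while the boundedness in (A3) supplies the Lyapunov/Lindeberg control needed from the summands. Summing the independent per-$j$ contributions and using (A2) to replace the frozen-temperature asymptotic variance at level $s=k/d$ by its continuum integral delivers the claimed $\sigma^{2}=\int_0^1\{\mathsf{\Pi}^{1}_s((\widehat{\varphi}_s)^2)-(\mathsf{\Pi}^{1}_s(\widehat{\varphi}_s))^2\}\,ds$ and the convergence $\log\overline{W}_{d+1}^{(i)}\rightarrow_D\mathcal{N}(0,\sigma^2)$.

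For part (ii) I would run an adiabatic telescoping argument: the coordinate chain $x_{1:\mathsf{L},k}^{(i),j^*}$ is propagated by kernels $\mathsf{K}_{k}^{1}$ preserving $\mathsf{\Pi}^{1}_{\phi_k}$, with the temperature rising from $0$ to $1$ in increments $1/d$. Combining the uniform contraction coefficient $1-\theta$ from (A1) with the $O(1/d)$ bound on the change of the invariant law across one step from (A2), a coupling/telescoping estimate shows the law of the final coordinate converges to the invariant law at temperature one, namely $\mathsf{\Pi}^{1}$; the identical argument with $\mathsf{K}_k^{(m)}$ gives convergence of the $m$-block to $\mathsf{\Pi}^{(m)}$. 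Joint independence over the particle index $i$ and over any fixed finite set of coordinates $\{1:m\},m+1,\ldots,M^\ast$ is inherited from the product factorisation of the kernels and initial laws, since for fixed $M^\ast$ the finitely many tracked coordinates evolve through mutually independent kernels and the particles are propagated independently. The main obstacle is the CLT step of part (i): verifying that the Poisson-equation remainders and the error from freezing the temperature within each step are genuinely $o(1)$ after summation over the $d$ coordinates, which is exactly where the $1/d$ temperature spacing and the quantitative mixing from (A1)--(A2) must be combined carefully, as in \cite{beskos}.
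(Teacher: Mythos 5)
Your overall route is the same as the paper's: the paper's proof of this theorem is essentially a deferral to Appendix C of \cite{beskos}, sketched as (a) a CLT for triangular martingale arrays applied to the centred weights via the Poisson equation, (b) the observation that the block of the first $m$ co-ordinates is asymptotically negligible, and (c) ergodicity of the inhomogeneous tempering chain for part (ii); your proposal reproduces exactly this architecture, with somewhat more detail on the martingale/Poisson-equation mechanics and the coupling argument for (ii).

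There is, however, one concrete accounting error in your decomposition of part (i). You identify the $o(1)$ term with the \emph{centred} $m$-block single sum and dispose of it by an $L^2$/mixing bound of order $1/d$. But by definition $\log \overline{W}_{d+1}^{(i)}$ is the double sum over $j\ge m+1$ centred at its $\mathcal{F}_0$-conditional mean, so the residual
$\log W_{d+1}^{(i)}-\log \overline{W}_{d+1}^{(i)}$
equals the \emph{uncentred} $m$-block sum \emph{plus} $\tfrac{1}{d}\sum_{k}\sum_{j\ge m+1}\mathbb{E}\big[\log\big(\tfrac{\mathsf{\Pi}^1}{\widetilde{\mathsf{\Pi}}^1}\big)(x_{1:\mathsf{L},k}^{(i),j})\,\big|\,\mathcal{F}_0\big]$. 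The latter is a sum of $d-m$ coordinate-wise contributions, each naively of size $O(1/d)$, so it is a priori $O(1)$ rather than $o(1)$; your mixing bound says nothing about it, and neither does the negligibility of the $m$-block. Showing that it vanishes is precisely the ``cumbersome but otherwise straightforward control of remainder terms'' the paper attributes to \cite{beskos}: one needs the identity $\mathsf{\Pi}_s^1\big(\log(\mathsf{\Pi}^1/\widetilde{\mathsf{\Pi}}^1)\big)=\tfrac{d}{ds}\log Z_s$ with $Z_s=\int (\widetilde{\mathsf{\Pi}}^1)^{1-s}(\mathsf{\Pi}^1)^{s}$, so that the Riemann sum over $k$ telescopes to $\log Z_1-\log Z_0=0$, combined with quantitative (second-order) control of both the Riemann error and the distance of the chain from $\mathsf{\Pi}^1_{k/d}$ so that the per-coordinate error is $o(1/d)$ uniformly in $j$. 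You gesture at this in your closing caveat, but the decomposition as you state it would not compile into a proof without this additional step; the same remark applies to the uncentred mean of the $m$-block, which tends to $0$ by the same $\log Z_1-\log Z_0$ cancellation rather than by your variance bound.
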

\begin{proof}
The result is contained in Appendix C of \cite{beskos}. The crux of the proof for part (i) is the use of a CLT for triangular martingale arrays (notice that $\{\log \overline{W}_{k}^{(i)}\}_{k=1}^{d+1}$ is constructed to be a martingale process, for any $d\in \mathbb{N}$) and cumbersome but otherwise straightforward control of remainder terms. 
The term in the weights that involve the first $m$ co-ordinates disappear in the limit $d\rightarrow 0$, as there is only a finite number of them. Thus, the limiting values for the `standardised' weights $\overline{W}_{d+1}^{(i)}$ involve only CLT
obtained via co-ordinates $m+1,\ldots$. Result (ii) is obtained also in Appendix C of \cite{beskos}. Under the dynamics of the inhomogeneous Markov chain $\{\mathsf{K}_k\}$, the particles carry out enough steps to reach the correct distribution after the execution of $d$ steps, as $d\rightarrow\infty$.
\end{proof}

\section{Proof of Proposition \ref{prop:1}}
\label{sec:proofP}

Proposition \ref{prop:1} is a corollary of Theorem \ref{th:G}. In particular, it suffices to apply Theorem \ref{th:G} for each time index $n\in\mathbb{N}$ of interest, under the choices:

\begin{gather*}
\mathsf{L}=L+1, \quad \mathsf{\Pi} = \widehat{\pi}_{n}(dx_{(n-L):n}), \quad 
\widetilde{\mathsf{\Pi}} = \widehat{\pi}_{n-1}(dx_{(n-L):(n-1)})\times f(x_{n-1},dx_n), \\[0.2cm]  \mathsf{K}_s^{(m)}= \overline{K}^{(m)}_{s,n}, \quad \mathsf{K}_s= 
\overline{K}_{s,n}.
\end{gather*}

\noindent 
In the above expressions $\widehat{\pi}_{n}(dx_{(n-L):n})$ denotes the marginal of $\widehat{\pi}_{n}$ on components $x_{(n-L):n}$; and similarly for $\widehat{\pi}_{n-1}(dx_{(n-L):(n-1)})$.
The particular upper bounds  in the $L_2$-norms quoted in Proposition \ref{prop:1} follow immediately from the weak limits obtained via Theorem \ref{th:G} via relatively simple calculations that make use of the exchangeability of the limiting laws over the particle index. For the exact calculation, see Appendix A.2 of \cite{beskos1}.

\section{Algorithm with Adaptive Resampling and Tempering}

\begin{center}
\captionsetup[algorithm]{style=algori}
 \captionof{algorithm}{Lagged Particle Filter with Adaptive Resampling}
%\caption{Lagged Particle Filter for High-Dimensional Filtering}
\label{alg:LPF}
\begin{itemize}
%\label{LPF_Adap_Resamp}
%%%%%%%%%%%%%%%%%%%%%%%%%%%%%%
\item \textbf{Initialization $n=1$: }

We are given the annealing parameter $\phi_{1,1}\ge 0$, $N^*$ (the resampling threshold), the number of MCMC steps $S$, the lag $L+1$, and $x_0$. We begin by sampling particles $\tilde{x}_1^{(i)}$, $i\in\{1,\ldots,N\}$, from the proposal $\eta_1(x_1) = f(x_0,x_1)$ via \eqref{eq:signal}. We will sequentially sample from 
\begin{align*}
\tilde{\pi}_1^k(x_{1}) \propto \tilde{\gamma}_1^k(x_{1}) =f(x_0,x_1)\,  g(x_1,y_1)^{\phi_{k,1}}
\end{align*}
starting at $\tilde{\pi}_1^1$ ending at $\tilde{\pi}_1^{K_1}=\pi_1$ for some $K_1\in\mathbb{N}$. For $i\in\{1,\ldots,N\}$, compute the weights
\begin{align*}
W_{1,1}^{(i)} = \frac{\ \tilde{\gamma}_1^1(\tilde{x}_1^{(i)}) }{\eta_1( \tilde{x}_1^{(i)} )}= g(\tilde{x}_1^{(i)},y_1 )^{\phi_{1,1}}
\end{align*}
and normalize $\bar{W}_{1,1}^{(i)} = W_{1,1}^{(i)}/\sum_{j=1}^N W_{1,1}^{(j)}$.
Calculate the effective sample size (ESS): $ESS =\left( \sum_{i=1}^N (\bar{W}_{1,1}^{(i)})^2\right)^{-1}$.
If $ESS = ESS(\phi_{1,1}) \leq N^*$, resample $\{\tilde{x}_1^{(i)}\}_{i=1}^N$ and set $\bar{W}_{1,1}^{(i)}=\frac{1}{N}$, $i\in\{1,\ldots,N\}$. 

\noindent Set $k=1$ and denote the particles by $\{x_{k,1}^{(i)}\}$ and the weights by 
$\{\bar{W}_{k,1}^{(i)}\}$. Set $\textit{flag}=0$. \\ Perform the following SMC sampler:

While $\textit{flag}=0$, do:
\begin{enumerate}
\item Find $\delta \in [0,1-\phi_{k,1}]$ so that $ESS(\delta)=N^*$, where $ESS(\delta)$ is as above except now we have the unknown $\delta$ instead of $\phi_{1,1}$ in the weights. If $\delta=1-\phi_{k,1}$, set $\phi_{k+1,1}=1$, $K_{1}=k+1$ and $\textit{flag}=1$, else set $\phi_{k+1,1}=\phi_{k,1} + \delta$. 

\item Compute the weights
\begin{align*}
W_{k+1,1}^{(i)} &= W_{k,1}^{(i)}\cdot  \frac{ \tilde{\gamma}_1^{k+1}( x_{k,1}^{(i)} ) }{ \tilde{\gamma}_1^k( x_{k,1}^{(i)} )  }  = W_{k,1}^{(i)}\cdot 	g(x_{k,1}^{(i)},y_1)^{\delta},
\end{align*}
normalize $\bar{W}_{k+1,1}^{(i)} = W_{k+1,1}^{(i)}/\sum_{i=1}^N W_{k+1,1}^{(i)}$.
If $ESS \leq N^*$, resample $\{x_{k,1}^{(i)}\}$, set $\bar{W}_{k+1,1}^{(i)}=\frac{1}{N}$. Denote the samples  $\{\hat{x}_{k,1}^{(i)}\}$. 

\item (MCMC) For $m\in\{1,\ldots,S\}$, do: 
\begin{itemize}
\item For $i\in\{1,\ldots,N\}$, sample $x_{k+1,1}^{(i)}| \hat{x}_{k,1}^{(i)}$ from a Markov kernel that preserves
\[
\tilde{\pi}_1^{k+1}(x_1) \propto  f(x_0,x_1) g(x_1,y_1)^{\phi_{k+1,1}}.
\] E.g., one can use a random walk with proposal
\[x^{'(i)}_{1}= \hat{x}_{k,1}^{(i)} + Z^{(i)},\quad Z^{(i)} \sim \mathcal{N}(0,\Sigma_m),\]
for some covariance matrix $\Sigma_m$, so that the acceptance probability is
%(note the normalizing constant of $\tilde{\pi}_1^{k+1}$ cancels out)
%
\begin{align*}
a_{m}^{(i)} = 1 \wedge \frac{ \tilde{\pi}_1^{k+1}(x^{'(i)}_1) } {  \tilde{\pi}_1^{k+1}(\hat{x}_{k,1}^{(i)}) } = 1\wedge \frac{   g(x^{'(i)}_{1},y_1)^{\phi_{k+1,1}} \, f(x_0,x^{'(i)}_1)  }{  g(\hat{x}_{k,1}^{(i)},y_1)^{\phi_{k+1,1}} \, f(x_0,\hat{x}_{k,1}^{(i)}) }.
\end{align*}
With probability $\alpha_m^{(i)}$, set $\hat{x}_{k,1}^{(i)}=x^{'(i)}_1$, $i\in\{1,\ldots,N\}$.
\end{itemize}
\item Set $k=k+1$ and $x_{k,1}^{(i)} = \hat{x}_{k-1,1}^{(i)}$, $i\in\{1,\ldots,N\}$.
\end{enumerate}

Calculate $ESS$ for the last  weights $\{\bar{W}_{K_1,1}^{(i)}\}$. If $ESS \leq N^*$, resample  $\{x_{K_1,1}^{(i)}\}$, set $\bar{W}_{K_1,1}^{(i)}=\frac{1}{N}$. Denote the final particles by $\{x_1^{(i)}\}$ and the final weights $\{\bar{W}_{1}^{(i)}\}$.

%%%%%%%%%%%%%%%%%%%%%%%%%%%%%%
\item \textbf{Iterations $n = 2,3,\ldots,L$}: 

We have $N$ paths $\{x_{1:n-1}^{(i)}\}$ with weights $\{\bar{W}_{n-1}^{(i)}\}$. We sample $\tilde{x}_n^{(i)}$ from the proposal distribution $\eta_n(x_{1:n}) = \pi_{n-1}(x_{1:n-1})f(x_{n-1},x_n)$, via \eqref{eq:signal}. We will sequentially sample from 
\begin{align*}
\tilde{\pi}_n^k(x_{1:n}) \propto \tilde{\gamma}_n^k(x_{1:n}) :=
\Big(\prod_{j=1}^{n-1}g(x_j,y_j)\,f(x_{j-1},x_j)\Big) f(x_{n-1},x_n)g(x_n,y_n)^{\phi_{k,n}} 
\end{align*}
starting at $\tilde{\pi}_n^1$ and ending at $\tilde{\pi}_n^{K_n}=\pi_n$ for some $K_n\in\mathbb{N}$. Compute the weights
\begin{align*}
W_{1,n}^{(i)} = \bar{W}_{n-1}^{(i)}\cdot \frac{\ \tilde{\gamma}_n^1(x^{(i)}_{1:n-1},\tilde{x}_n^{(i)}) }{\eta_n(x^{(i)}_{1:n-1}, \tilde{x}_n^{(i)} )}  = \bar{W}_{n-1}^{(i)} \cdot  g(\tilde{x}_n^{(i)},y_n )^{\phi_{1,n}},
\end{align*}
normalize $\bar{W}_{1,n}^{(i)} = W_{1,n}^{(i)}/\sum_{j=1}^N W_{1,n}^{(j)}$.  
If $ESS \leq N^*$, resample $\{\tilde{x}_{1,n}^{(i)}\}$, set $\bar{W}_{1,n}^{(i)}=\frac{1}{N}$. \\ Set $k=1$ and denote the particles by $\{x_{k,n}^{(i)}\}$ and the weights $\{\bar{W}_{k,n}^{(i)}\}$. Set $\textit{flag}=0$. \\
Perform the following SMC sampler:

While $\textit{flag}=0$, do:
\begin{enumerate}
\item Find $\delta \in [0,1-\phi_{k,n}]$ so that $ESS(\delta)=N^*$. If $\delta=1-\phi_{k,n}$, set $\phi_{k+1,n}=1$, $K_n=k+1$ and $\textit{flag}=1$, else set $\phi_{k+1,n}=\phi_{k,n} +\delta$. 

\item Compute the weights 
\begin{align*}
W_{k+1,n}^{(i)} = \bar{W}_{k,n}^{(i)} \cdot \frac{ \tilde{\gamma}_n^{k+1}(x_{1:n-1}^{(i)}, x_{k,n}^{(i)} ) }{ \tilde{\gamma}_n^k(x_{1:n-1}^{(i)}, x_{k,n}^{(i)} )  } = \bar{W}_{k,n}^{(i)} \cdot g(x_{k,n}^{(i)},y_n)^{\delta},
\end{align*}
normalize $\bar{W}_{k+1,n}^{(i)} = W_{k+1,n}^{(i)}/\sum_{i=1}^N W_{k+1,n}^{(i)}$. If $ESS \leq N^*$, resample $\{x_{k,n}^{(i)}\}$, set $W_{k+1,n}^{(i)}=\frac{1}{N}$. Denote the particles by $\{\hat{x}_{k,n}^{(i)}\}$. Attach these to the samples $\{x_{1:n-1}^{(i)}\}$ from previous time steps and denote the complete paths $\{\hat{x}_{k,1:n}^{(i)}\}$. 

\item (MCMC) For $m\in\{1,\ldots,S\}$, do: 
\begin{itemize}
\item For $i\in\{1,\ldots, N\}$, sample $x_{k+1,1:n}^{(i)}|  \hat{x}_{k,1:n}^{(i)}$ from a Markov kernel that preserves
\[\tilde{\pi}_n^{k+1}(x_{1:n}) \propto 
\Big(\prod_{j=1}^{n-1}g(x_j,y_j)\,f(x_{j-1},x_j)\Big)
f(x_{n-1},x_n)\,g(x_n,y_n)^{\phi_{k+1,n}}.   \] 
Again, one can use a random walk: $x^{'(i)}_{1:n} = \hat{x}_{k,1:n}^{(i)} + Z^{(i)},\quad Z^{(i)} \sim \mathcal{N}(0,\Sigma_m)$, for some covariance matrix $\Sigma_m$, so that the acceptance probability is
\begin{align*}
\alpha_m^{(i)} &= 1 \wedge  \frac{ \tilde{\pi}_n^{k+1}(x^{'(i)}_{1:n})  } {  \tilde{\pi}_{n}^{k+1}(\hat{x}_{k,1:n}^{(i)}) }  = 1 \wedge  \frac{\prod_{j=1}^{n-1}g(x^{'(i)}_j,y_j)\,f(x^{'(i)}_{j-1}, x^{'(i)}_j) }{ \prod_{j=1}^{n-1}g(\hat{x}_{k,j}^{(i)},y_j)\,f(\hat{x}_{k,j-1}^{(i)},\hat{x}_{k,j}^{(i)})} \frac{  f(x^{'(i)}_{n-1}, x^{'(i)}_n)\, g(x^{'(i)}_{n},y_n)^{\phi_{k+1,n}}  }{   f(x_{n-1}^{(i)},\hat{x}_{k,n}^{(i)})\,g(\hat{x}_{k,n}^{(i)},y_n)^{\phi_{k+1,n}}   }.
\end{align*}
With probability $\alpha_m^{(i)}$, set $\hat{x}_{k,1:n}^{(i)}=x^{'(i)}_{1:n}$.
\end{itemize}
\item Set $k=k+1$ and $x_{k,n}^{(i)} = \hat{x}_{k-1,n}^{(i)}$, $i\in\{1,\ldots,N\}$.
\end{enumerate}

Calculate $ESS$ for the last weights $\{\bar{W}_{K_n,n}^{(i)}\}$. If $ESS\leq N^*$, resample $\{x_{0:n-1}^{(i)},x_{K_n,n}^{(i)}\}$  and set $\bar{W}_{K_n,n}^{(i)}=\frac{1}{N}$. \\ Set $x_{0:n}^{(i)} = \{x_{0:n-1}^{(i)},x_{K_n,n}^{(i)}\}$ and $\bar{W}_{n}^{(i)} = \bar{W}_{K_n,n}^{(i)}$, $i\in\{1,\ldots, N\}$.
%%%%%%%%%%%%%%%%%%%%%%%%%%%%%%
\item \textbf{Iterations $n \geq L+1$:} 

We have $N$ paths $\{x_{1:n-1}^{(i)}\}$ with weights $\{\bar{W}_{n-1}^{(i)}\}$. We begin by sampling particles $\tilde{x}_n^{(i)}$ from proposal $\eta_n(x_{1:n}) = f(x_{n-1}^{(i)},x_n)\widehat{\pi}_{n-1}(x_{1:n-1})$ (recalling that $\widehat{\pi}_L = \pi_L$),  $i\in\{1,\ldots, N\}$, via \eqref{eq:signal}. Setting $\mu_0(x_1)=f(x_0,x_1)$, we will sample sequentially from the distributions
\begin{align*}
\widetilde{\pi}_{n}^k(x_{1:n})  &\propto \Big( \widehat{\pi}_{n-1}(x_{1:n-1}) f(x_{n-1},x_{n}) \Big)^{1-\phi_{k,n}} \widehat{\pi}_{n}(x_{1:n})^{\phi_{k,n}} \\
& =\left( \frac{\mu_{n-L}(x_{n-L+1})\,g(x_{n},y_{n})  }{f(x_{n-L},x_{n-L+1})} \right)^{\phi_{k,n}} \Big[\prod_{j=0}^{n-L-1} \mu_j(x_{j+1})  \prod_{j=n-L+1}^{n} f(x_{j-1},x_j) \prod_{j=1}^{n-1} g(x_j,y_j) 
\Big]
\end{align*}
starting at $\tilde{\pi}_n^1$ and ending at $\tilde{\pi}_n^{K_n}=\widehat{\pi}_n$ for some $K_n\in\mathbb{N}$. Compute the weights
\begin{align*}
W_{1,n}^{(i)} = \bar{W}_{n-1}^{(i)}\cdot \frac{\ \tilde{\gamma}_n^1(x^{(i)}_{1:n-1},\tilde{x}_n^{(i)}) }{\eta_n(x^{(i)}_{1:n-1}, \tilde{x}_n^{(i)} )}=\bar{W}_{n-1}^{(i)}\left[ \frac{g(x_n^{(i)},y_n) \mu_{n-L}(x_{n-L+1}^{(i)}) }{f(x_{n-L}^{(i)},x_{n-L+1}^{(i)})} \right]^{\phi_{1,n}},
\end{align*}
normalize $\bar{W}_{1,n}^{(i)} = W_{1,n}^{(i)}/\sum_{i=1}^N W_{1,n}^{(i)}$. If $ESS \leq N^*$, resample $\{\tilde{x}_n^{(i)}\}$ and set $\bar{W}_{1,n}^{(i)}=\frac{1}{N}$. \\ Set $k=1$ and denote the particles by $\{x_{k,n}^{(i)}\}$ and the weights $\{\bar{W}_{k,n}^{(i)}\}$. Set $\textit{flag}=0$. \\ Perform the following SMC sampler:

While $\textit{flag}=0$, do:
\begin{enumerate}
\item Find $\delta \in [0,1-\phi_{k,n}]$ so that $ESS(\delta)=N^*$. If $\delta=1-\phi_{k,n}$, set $\phi_{k+1,n}=1$, $K_n=k+1$ and $\textit{flag}=1$, else set $\phi_{k+1,n}=\phi_{k,n} +\delta$. 

\item Compute the weights 
\begin{align*}
W_{k+1,n}^{(i)} = \bar{W}_{k,n}^{(i)}\cdot \frac{ \tilde{\gamma}_n^{k+1}(x_{1:n-1}^{(i)}, x_{k,n}^{(i)} ) }{ \tilde{\gamma}_n^k(x_{1:n-1}^{(i)}, x_{k,n}^{(i)} )  } =\bar{W}_{k,n}^{(i)}\cdot  g(y_n|x_{k,n}^{(i)})^{\delta},
\end{align*}
normalize $\bar{W}_{k+1,n}^{(i)} = W_{k+1,n}^{(i)}/\sum_{i=1}^N W_{k+1,n}^{(i)}$. If $ESS\leq N^*$, resample $\{x_{k,n}^{(i)}\}$, set $\bar{W}_{k+1,n}^{(i)}=\frac{1}{N}$. Denote the resulting samples $\{\hat{x}_{k,n}^{(i)}\}$. Attach these to the  samples $x_{n-L:n-1}^{(i)}$ from the previous $L$ time steps and denote the new samples by $\{\hat{x}_{k,n-L:n}^{(i)}\}$. 
\item (MCMC) For $m\in\{1,\ldots,S\}$: 
\begin{itemize}
\item For $i\in\{1,\ldots, N\}$, sample  $x_{k+1,n-L:n}^{(i)}|  \hat{x}_{k,n-L:n}^{(i)}$ from a Markov kernel that preserves
\begin{align*}
\widetilde{\pi}_{k+1,n}(x_{1:n}) &= \left( \frac{g(x_{n},y_{n}) \mu_{n-L}(x_{n-L+1}) }{f(x_{n-L},x_{n-L+1})} \right)^{\phi_{k+1,n}}\times \\
& \qquad\qquad \Big[ \prod_{j=0}^{n-L-1} \mu_j(x_{j+1})   \prod_{j=n-L+1}^{n} f(x_{j-1},x_j)\prod_{j=1}^{n-1} g(x_j,y_j) 
\Big]
\end{align*}
Again, one can use a random walk
\begin{align*}
x^{'(i)}_{n-L:n} &= \hat{x}_{k,n-L:n}^{(i)} + Z^{(i)},\quad Z^{(i)} \sim \mathcal{N}(0,\Sigma_m)
\end{align*}
for some covariance matrix $\Sigma_m$, so that the acceptance probability is 
{\footnotesize
\begin{align*}
 a_m^{(i)} &= 1\wedge \frac{\widetilde{\pi}_{k+1,n}(x^{'(i)}_{n-L:n})}{\widetilde{\pi}_{n}^{k+1}(\hat{x}_{k,n-L:n}^{(i)})} \\
&= 1 \wedge  \Bigg[ \left( \frac{g(x^{'(i)}_n,y_{n}) \mu_{n-L}(x^{'(i)}_{n-L+1}) }{f(x^{'(i)}_{n-L},x^{'(i)}_{n-L+1})}  \frac{  f(\hat{x}_{k,n-L}^{(i)},\hat{x}_{k,n-L+1}^{(i)}) }{ g(\hat{x}_{k,n-L+1}^{(i)},y_{n}) \,\mu_{n-L}(\hat{x}_{k,n-L+1}^{(i)}) }  \right)^{\phi_{k+1,n}} \times \\
&\qquad\qquad\qquad \frac{ \mu_{n-L-1}(x^{'(i)}_{n-L}) \prod_{j=n-L}^{n-1} g(x^{'(i)}_j,y_j) \prod_{j=n-L+1}^{n} f(x^{'(i)}_{j-1},x^{'(i)}_{j}) } {  \mu_{n-L-1}(\hat{x}_{k,n-L}^{(i)}) \prod_{j=n-L}^{n-1} g(\hat{x}_{k,j}^{(i)},y_j) \prod_{j=n-L+1}^{n} f(\hat{x}_{k,j-1}^{(i)},\hat{x}_{k,j}^{(i)}) ` }\Bigg].
\end{align*}
}
With probability $\alpha_m^{(i)}$, set $\hat{x}_{k,n-L:n}^{(i)}=x^{'(i)}_{n-L:n}$.
\end{itemize}
\item Set $k=k+1$ and $x_{k,n}^{(i)} = \hat{x}_{k-1,n}^{(i)}$, $i\in\{1,\ldots, N\}$.
\end{enumerate}
If $ESS \leq N^*$, resample $\{x_{0:n-1}^{(i)},x_{K_n,n}^{(i)}\}$  and set $\bar{W}_{K_n,n}^{(i)}=\frac{1}{N}$. \\ Set $x_{0:n}^{(i)} = \{x_{0:n-1}^{(i)},x_{K_n,n}^{(i)}\}$ and $\bar{W}^{(i)}_{n}=\bar{W}_{K_n,n}^{(i)}$, $i\in\{1,\ldots, N\}$.
\end{itemize}
\hrulefill
\end{center}
%%%%%%%%%%%%%%%%%%%%%%%%%%
%%%%%%%%%%%%%%%%%%%%%%%%%%
%%%%%%%%%%%%%%%%%%%%%%%%%%

\end{document}